  \let\oldparagraph\paragraph
  \renewcommand{\paragraph}{
    \@ifstar
      \xxxParagraphStar
      \xxxParagraphNoStar
  }
  \newcommand{\xxxParagraphStar}[1]{\oldparagraph*{#1}\mbox{}}
  \newcommand{\xxxParagraphNoStar}[1]{\oldparagraph{#1}\mbox{}}
  \let\oldsubparagraph\subparagraph
  \renewcommand{\subparagraph}{
    \@ifstar
      \xxxSubParagraphStar
      \xxxSubParagraphNoStar
  }
  \newcommand{\xxxSubParagraphStar}[1]{\oldsubparagraph*{#1}\mbox{}}
  \newcommand{\xxxSubParagraphNoStar}[1]{\oldsubparagraph{#1}\mbox{}}
\patchcmd\longtable{\par}{\if@noskipsec\mbox{}\fi\par}{}{}
\def\maxwidth{\ifdim\Gin@nat@width>\linewidth\linewidth\else\Gin@nat@width\fi}
\def\maxheight{\ifdim\Gin@nat@height>\textheight\textheight\else\Gin@nat@height\fi}
\def\fps@figure{htbp}
  \renewcommand*\contentsname{Table of contents}
  \newcommand\contentsname{Table of contents}
  \renewcommand*\listfigurename{List of Figures}
  \newcommand\listfigurename{List of Figures}
  \renewcommand*\listtablename{List of Tables}
  \newcommand\listtablename{List of Tables}
  \renewcommand*\figurename{Figure}
  \newcommand\figurename{Figure}
  \renewcommand*\tablename{Table}
  \newcommand\tablename{Table}
\newcommand{\widgraph}[2]{\includegraphics[keepaspectratio,width=#1]{#2}}
\newcommand{\diag}{\text{diag}}
\newcommand{\Fbar}{\bar{F}}
\newcommand{\Gh}{\widehat{G}}
\newcommand{\Fh}{\widehat{F}}
\renewcommand{\SS}{\mathcal{S}}
\newcommand{\EE}{\mathbb{E}}
\renewcommand{\Re}{\mathbb{R}}
\newcommand{\PP}{\mathcal{P}}
\newtheorem{proposition}{Proposition}
\newtheorem{theorem}{Theorem}
\newtheorem{lemma}{Lemma}
\newtheorem{assumption}{Assumption}
\newcommand{\RPE}{\text{RE}}
\newcommand{\anon}{1}
\newcommand{\SWB}{\text{SWB}}
\newcommand{\SW}{\text{SW}}
\newcommand{\Gt}{\widetilde{G}}
\begin{document}

\def\spacingset#1{\renewcommand{\baselinestretch}%
{#1}\small\normalsize} \spacingset{1}


\if1\anon
{
  \title{\bf Bayesian Multiple Multivariate Density-Density Regression}
  \author{Khai Nguyen$^1$, Yang Ni$^1$, and Peter Mueller$^{1,2}$ \\
    $^1$Department of Statistics and Data Sciences, University of Texas at Austin \\
    $^2$Department of Mathematics, University of Texas at Austin
}
  \maketitle
} \fi

\if0\anon
{
  \bigskip
  \bigskip
  \bigskip
  \begin{center}
    {\LARGE\bf Title}
\end{center}
  \medskip
} \fi

\bigskip
\begin{abstract}
We propose the first approach for multiple multivariate
density–density regression (MDDR), making it possible to consider the regression of a
multivariate density–valued response on multiple multivariate
density–valued predictors. The core idea is to define a fitted
distribution using a sliced Wasserstein barycenter (SWB) of
push-forwards of the predictors and to quantify deviations from the
observed response using the sliced Wasserstein (SW)
distance. Regression functions, which map predictors’ supports to the
response support, and barycenter weights are inferred within a
generalized Bayes framework, enabling principled uncertainty
quantification without requiring a fully specified likelihood. The
inference process can be seen as an instance of an inverse SWB
problem. We establish theoretical guarantees, including the stability
of the SWB under perturbations of marginals and barycenter weights,
sample complexity of the generalized likelihood, and posterior
consistency.
For practical inference, we introduce a differentiable
approximation of the SWB and a smooth reparameterization to handle the
simplex constraint on barycenter weights, allowing efficient
gradient-based MCMC sampling. We demonstrate MDDR in
 an application to inference for 
population-scale single-cell data.
 Posterior analysis under the MDDR model in this example
includes inference on  
communication between multiple source/sender cell types and a
target/receiver cell type.
The proposed approach provides accurate fits,
reliable predictions, and interpretable posterior estimates of
barycenter weights, which can be used to construct sparse cell-cell
communication networks.
\end{abstract}

\noindent%
{\it Keywords:} Sliced Wasserstein Barycenter, Optimal Transport, Single-Cell Data, Generalized Bayes.
\vfill

\newpage
\spacingset{1.8} 

\section{Introduction}
\label{sec:introduction}

We propose \emph{multiple multivariate
density–density regression},  of a
\emph{multivariate}  density-valued response on several
\emph{multivariate} density-valued predictors. The proposed approach
does not impose Riemannian geometry on the space of predictors and
responses, which is a key requirement for existing density-density
regression methods. The proposed approach is computationally
and statistically scalable, and can handle predictors and responses of
any form, including discrete and continuous. The main feature is the
use of a sliced Wasserstein barycenter~\citep{bonneel2015sliced} (SWB)
to define a fitted distribution based on push-forwards of multiple
predictors (marginals), and the use of sliced Wasserstein
distance~\citep{rabin2012wasserstein} (SW) to measure the deviation of
the fitted distribution from the observed response distribution.
We  implement a generalized Bayes~\citep{bissiri2016general}
framework for 
inference on the regression functions, which maps the support
of the predictors to that of the response to define the push-forwards,
and the barycenter weights.
 The generalized Bayes setup  enables principled
belief updating without requiring a fully specified
likelihood.  While our primary goal is to develop a regression model,
the approach can also be interpreted as the first instance of an
\emph{inverse SWB} problem, namely, inferring the marginals and their
corresponding barycenter weights given a noisy  observation of the
barycenter. 

Beyond the methodological contributions, we  establish several
theoretical properties of the proposed approach. First, we analyze the
stability of SWB by quantifying how it changes when the marginals, the
barycenter weights, or both are perturbed. Building on this stability
result, we derive two key results: the sample complexity of the
generalized likelihood and posterior consistency.
Specifically, we show that the generalized
likelihood can be reliably estimated even when the distributions are
observed only through samples, and we prove consistency of the Bayes estimators
of the regression functions and the barycenter weights.

For posterior  inference, we develop practicable posterior
simulation schemes. 
First, we introduce a differentiable approximation
of SWB using an iterative algorithm. Second, we handle the simplex
constraint on the barycenter weights through a smooth and invertible
change-of-variable transformation. Together
with the differentiability of SW distance, this makes it possible to use
 computation-efficient 
gradient-based Markov chain Monte Carlo (MCMC) samplers,
such as the Metropolis-adjusted Langevin algorithm
(MALA)~\citep{girolami2011riemann,ning2025metropolis}.

Density regression encompasses settings where distribution-valued predictors are used to predict real-valued responses~\citep{szabo2016learning,matabuena2023distributional}, as well as settings where real-valued predictors are used to model distribution-valued responses~\citep{tokdar2004bayesian,dunson2007bayesian,shen2016adaptive}. Density-density regression (DDR) generalizes this to scenarios where both the predictor and the response are distributions.  For example, simple univariate DDR, with
a one-dimensional predictor and a one-dimensional response, is considered
in~\citet{zhao2023density} using a Riemannian geometry of
distributions on the Wasserstein-2 space. The key challenge of
extending this approach to multivariate distributions is the
computational intractability of Riemannian geometrical tools in higher
dimensions. Moreover, the approach also requires distributions to be
continuous for accessing the Riemannian structure with the existence
of optimal transport maps. To address
 these limitations, 
\citet{nguyen2025bayesian} propose a simple multivariate DDR model
using the generalized Bayes framework~\citep{bissiri2016general}. The
key modeling strategy is to employ the sliced Wasserstein (SW)
distance~\citep{rabin2012wasserstein} to construct a generalized
likelihood, enabling scalable inference without requiring any
Riemannian structure on either the predictors or the responses 


Considering more than one predictor gives rise to the multiple
regression problem. In the context of DDR, multiple univariate DDR is
proposed in~\citet{chen2024distribution}. Again, the approach requires
the continuity of distributions to utilize the Wasserstein geometry of
distributions that is impractical for computation in multivariate
cases. We extend the approach in~\citet{nguyen2025bayesian} to the
multiple regression setting. First, predictor distributions are
transformed to the same dimensions as the response distribution using
regression functions.  After that, we form the fitted distribution as
the SWB of the transformed predictor distributions. An inference model is then
defined with a generalized likelihood, based on the SW
distance between the fitted distribution and the response. 

The use of the SWB and SW distance addresses both of the aforementioned
multivariate and finite-sample challenges.
 One of the results in the upcoming discussion characterizes 
how the barycenter (prediction) varies with changes in the marginals
and barycenter weights. With that, we show that the generalized
likelihood achieves a near parametric estimation rate of
$\mathcal{O}(n^{-1/2})$ ($n$ is the minimal number of
atoms of the involved distributions) under the simple plug-in estimation when the
distributions have compact supports. Furthermore, we establish that
our model provides consistent Bayesian estimates of the barycenter
weights and regression functions under parametric settings. The
theoretical studies are novel for both SW literature and the
generalized Bayes literature. 

Our approach is motivated by
 the analysis of  population-scale single-cell data, where
gene expression levels are measured across large numbers of cells
sampled from  multiple  subjects. Because cell types can be reliably
identified using established cell markers, these data provide a unique
opportunity to uncover communication patterns between cell types
through the expression of ligand–receptor pairs. Unlike DDR, which
infers communication between pairs of cell types independently,
multiple DDR (MDDR) enables joint  regression  of a target
cell type on multiple source cell types simultaneously.
Importantly, we only
observe a finite (though potentially large) number of cells for each
cell type, making the estimation of optimal transport maps, and thus
the use of Riemannian geometric tools as
in~\cite{chen2024distribution,zhu2023geodesic}, impractical. In
addition, our approach avoids continuity assumptions on the underlying
distributions and is capable of operating beyond Euclidean geometry on
atoms of distributions.

With the proposed computational techniques, we are able to perform
inference for the MDDR model. We show that the model is particularly
meaningful when predictors lie in lower-dimensional spaces than the
response, as is typical in single-cell data. The model provides
accurate fits and predictions for both simulated and real single-cell
datasets. Moreover, the posterior distribution of the barycenter
weights enables the construction of a weighted cell-cell communication
network.
 A minor extension of the inference framework 
by explicitly considering decisions about edge inclusion obtains 
a sparse network.

The remainder of the article is organized as follows. 
Section~\ref{sec:background} reviews the definitions of 
Wasserstein distance, SW distance, and the SWB. 
In Section~\ref{sec:MDDR}, we introduce the Bayesian MDDR framework, which uses SW and SWB to define a generalized likelihood, establish key theoretical properties, and develop posterior simulation via a MALA sampler.
Section~\ref{sec:simulation} presents results on a simulated dataset,
 highlighting the fundamental difference in the multiple regression under
MDR compared to DDR. 
In Section~\ref{sec:cell_cell_communication}, we apply Bayesian MDDR
to single-cell data,  exploring  fitting, predictive
performance, and  the discovery of  a cell-cell communication
network. 
Section~\ref{sec:conclusion} concludes the paper and discusses
directions for future work. Technical proofs and additional
experimental results are provided in the supplementary material. 

For notation, let $\delta_x$ denote the Dirac delta measure at $x$. For any $d \ge 2$, define the unit hypersphere $\mathbb{S}^{d-1} = \{\theta \in \mathbb{R}^d : \|\theta\|_2 = 1\}$. For two sequences $a_n$ and $b_n$, we write $a_n = \mathcal{O}(b_n)$ if there exists a universal constant $C$ such that $a_n \le C\, b_n$ for all $n \ge 1$.  Given two measurable spaces $(\mathcal{X}_1, \Sigma_1)$ and $(\mathcal{X}_2, \Sigma_2)$, a measurable function $f : \mathcal{X}_1 \to \mathcal{X}_2$, and a measure $\mu$ on $(\mathcal{X}_1, \Sigma_1)$, the push-forward of $\mu$ through $f$ is defined by
$
f_{\sharp} \mu (B) = \mu(f^{-1}(B)),\, \text{for all } B \in \Sigma_2. 
$  We denote $\Delta^K$ as $K$-simplex i.e., $(\pi_1,\ldots,\pi_K)\in \Delta^K$ implies $0\leq \pi_k\leq 1,\, \forall k=1,\ldots,K$, and $\sum_{k=1}^K \pi_k=1$. Additional notation will be introduced as needed.

\section{Background}
\label{sec:background}
By way of a brief review of Wasserstein distances, sliced Wasserstein distance, and sliced Wasserstein barycenter, we introduce
some notation and definitions.  Given $p\geq 1$, let $G_1, G_2 \in \PP_p(\Re^d)$ where
$\PP_p(\Re^d)$ be the set of all distributions supported on
$\Re^d$ 
with finite $p$-th moment.
Wasserstein-$p$ 
distance~\citep{villani2009optimal,peyre2019computational} between  $G_1$ and $G_2$ is
defined as:
\begin{align}
\label{eq:Wasserstein}
    W_p^p(G_1, G_2) = \inf_{\pi \in \Pi(G_1, G_2)} \int_{\Re^d \times \Re^d} \|x- y\|_p^p \, \mathrm{d}\pi(x, y),
\end{align}
where $$\Pi(G_1, G_2) = \left\{\pi \in \PP(\Re^d \times \Re^d) \mid \pi(A, \Re^d) = G_1(A), \ \pi(\Re^d, B) = G_2(B) \ \forall A, B \subset \Re^d \right\}$$ is the set of all transportation plans/couplings. In one dimension,  Wasserstein distance admits the closed-form:
\begin{align}
    W_p^p(G_1, G_2) = \int_0^1|  Q_{G_1}(t) -Q_{G_2}(t) |^p \mathrm{d}t,
\end{align}
where $Q_{G_1}$ and $Q_{G_2}$  are the quantile function of $G_1$ and $G_2$. 

To utilize the closed-form  solutions for $d=1$, 
sliced Wasserstein (SW) distance is
introduced~\citep{rabin2012wasserstein,nguyen2025introsot}. In particular, SW distance
between two distributions $G_1,G_2 \in 
\mathcal{P}_p(\mathbb{R}^d)$ is defined as:
\begin{align}
        \label{eq:SW}
\SW_p^p(G_1, G_2) = \EE_{\theta \sim
  \mathcal{U}(\mathbb{S}^{d-1})}[W_p^p(\theta \sharp G_1, \theta
\sharp G_2)],
\end{align}
where  $\mathcal{U}(\mathbb{S}^{d-1})$ is the uniform distribution
over the unit hypersphere in $d$ dimension ($\mathbb{S}^{d-1}$), and $\theta \sharp G_1$
and $\theta \sharp G_2$ denote the pushforward distribution of $G_1$
and $G_2$ through  projections 
$f_\theta(x) =  \theta^\top x$.   The computation of 
SW relies on Monte Carlo estimation of the expectation
in \eqref{eq:SW}.
For example,  using simple Monte Carlo estimation: 
\begin{align}
    \label{eq:MC_empirical_SW}
    \widehat{SW}_p^p(G_{1}, G_{2};L) = \frac{1}{L}\sum_{l=1}^LW_p^p(\theta_l \sharp G_{1}, \theta_l \sharp G_{2}),
\end{align}
where $\theta_1,\ldots,\theta_L \overset{i.i.d.}{\sim} \mathcal{U}(\mathbb{S}^{d-1})$ with $L$ being the number of Monte Carlo samples or the number of projections. 

The key benefit of SW compared to Wasserstein distance arises when
 the distributions are represented by i.i.d. samples. 
In particular, when
observing $x_1,\ldots,x_{m_1} \overset{i.i.d.}{\sim} G_1$ and
$y_1,\ldots,y_{m_2} \overset{i.i.d.}{\sim} G_2$, we have
$\EE\left[\left|W_p(\hat{G}_{1}, \hat{G}_{2})-W_p(G_1,
  G_2)\right|\right] =
\mathcal{O}(m_1^{-1/d}+m_2^{-1/d})$~\citep{fournier2015rate} and
$\EE\left[\left|\SW_p(\hat{G}_{1}, 
    \hat{G}_{2})-\SW_p(G_1, G_2)\right|\right] =
\mathcal{O}(m_1^{-1/2} +m_2^{-1/2})$~\citep{nadjahi2020statistical,nietert2022statistical}
where $d>1$ is the dimension, and $\hat{G}_1=\frac{1}{m_1}
\sum_{i=1}^{m_1} \delta_{x_i}$ and $\hat{G_2}=\frac{1}{m_2}
\sum_{j=1}^{m_2} \delta_{y_j}$.
In addition to better sample complexity, SW has a lower
time complexity $\mathcal{O}\left\{
(m_1+m_2)\log(m_1+m_2)\right\}$~\citep{peyre2020computational}
compared to Wasserstein
$\mathcal{O}\{(m_1+m_2)^3\log(m_1+m_2)\}$~\citep{peyre2020computational}.

SW distance induces a metric on the space of distribution
$\PP_p(\Re^d)$. The notion of weighted average (Fréchet mean) is
generalized to the concept of SW barycenter~\citep{bonneel2015sliced}
(SWB). The SWB ~\citep{bonneel2015sliced} of $K\geq 2$
marginals $G_1,\ldots,G_K \in \PP_p(\Re^d)$ with marginal weights
$(\pi_1,\ldots,\pi_K) \in \Delta^K$ is defined as: 
\begin{align}
  \label{eq:SWB}
  \SWB_p(G_1,\ldots,G_K,\pi_1,\ldots,\pi_K)=
  \text{argmin}_{G \in \PP_p(\mathbb{R}^d)}\sum_{k=1}^K \pi_k \SW_p^p (G,G_k).
\end{align}
We will discuss the computation of SWB in later sections.

\section{Multiple Multivariate Bayesian Density-Density Regression}
\label{sec:MDDR}

\subsection{Proposed Model}
\label{subsec:MDDR}
We consider the inference problem of regressing a distribution-valued
response  $G_i \in \PP_p(\Re^{d})$ $(p\geq 1, d\geq 2)$ on multiple
distribution-valued predictors $F_{i1},\ldots,F_{iK}$ ($K\geq 2$) with
$F_{ik} \in \PP_p (\Re^{h_k})$ for $k=1,\ldots,K$ and $i=1,\ldots,N$
($N>0$). We denote the data as
$\SS=\{(F_{i1},\ldots,F_{iK},G_i)\}_{i=1}^N$.
We  introduce MDDR  by way of a {\em generalized likelihood} 
\citep{bissiri2016general}   $\ell(f_1,\ldots,f_K;G_i,F_{i1},\ldots,F_{iK})$ based on a loss function for a fitted
approximation  $\Gt_i$  of $G_i$: 
\begin{align}
  \label{eq:MDDRmodel}
  &\ell(f_1,\ldots,f_K,\pi_1,\ldots,\pi_K;G_i,F_{i1},\ldots,F_{iK}) =
  \exp\left\{-w\, \SW_p^p[\Gt_i(f,\pi),G_i)]\right\}, \\
  &\Gt_i(f,\pi) = \SWB_p(f_1\sharp F_{i1},\ldots,f_K\sharp F_{iK}, \pi_1,\ldots,\pi_K),  \label{eq:MDDRmodel2}
\end{align}
where $f_k$ is a measurable function that maps from $\mathbb{R}^{h_k}$ to
$\mathbb{R}^{d}$ for $k=1,\ldots,K$, $w > 0$, $(\pi_1,\ldots,\pi_K)\in \Delta^K$ ($0\leq \pi_k\leq 1,\, \forall k=1,\ldots,K$, and $\sum_{k=1}^K \pi_k=1$), and $f_k\sharp
  F_{ik}$ denotes the push-forward measure of $F_{ik}$ through $f_k$.

In~\eqref{eq:MDDRmodel}, $\Gt_i$ is a fitted distribution that represents the (in-sample) prediction of the response distribution $G_i$ given $F_{i1},\ldots,F_{iK}$.  We define $\Gt_i$ as the SWB of random distributions $f_1 \sharp F_{i1},\ldots,f_{K}\sharp F_{iK}$ which we refer to as \emph{pushforwards of predictors}. The fitted distribution $\Gt_i$ can be seen as a Fréchet mean (generalized weighted averaging) of pushforwards of predictors. We then use the SW distance to define a loss function that serves as
generalized (negative log) likelihood. 
 The generalized Bayes framework
\eqref{eq:MDDRmodel} and \eqref{eq:MDDRmodel2}  offers a principled
approach for multiple multivariate density-density regression while
naturally incorporating uncertainty within a Bayesian framework
(priors to be discussed in the next subsection).  The model can be viewed as an
instance of a novel inference paradigm, which we refer to as
\emph{inverse sliced Wasserstein barycenter}. In particular, we
consider the response $G_i$ as a noisy barycenter and aim to infer the
underlying marginals and weights that generate it. In our formulation,
the barycenter weights  $\pi_k$ 
are shared across $i$, and the marginals are
constructed through shared regression functions  $f_k$ 
whose inputs are predictor distributions.

To assess goodness of fit, we compute the residual error between the
fitted and observed response distributions
 in \eqref{eq:MDDRmodel} and \eqref{eq:MDDRmodel2}: 
$
PE(\SS) = \frac{1}{N}\sum_{i=1}^N \SW_p^p(\Gt_i, G_i),
$
To calibrate \(PE\), we introduce a reference model that represents a
worst-case fit  (we introduce specific examples later): 
$
\Gt_i'(f', \pi') = \SWB_p\big(f_1'\!\sharp F_{i1}, \ldots, f_K'\!\sharp F_{iK};\, 1/K, \ldots, 1/K\big),
$
where $f'$ denotes regression functions under strong restrictions. We then normalize the prediction error relative to this reference to obtain the relative residual error (RE):
  \begin{align}
  \label{eq:RE}
     \RPE(\SS)=\frac{1}{N} \sum_{i=1}^N \frac{\SW_p^p(\Gt_i(f,\pi),G_i)}{\SW_p^p(\Gt_{i}'(f',\pi'),G_i)}.
   \end{align}
Here $\RPE(\SS)=0$ means a perfect fit as SW is a distance between distributions and $\RPE(\SS)=1$ means a poor fit. In practice, we will approximate SW distances using Monte Carlo samples as discussed. 

\subsection{Likelihood Sample Complexity and Posterior Consistency}
\label{subsec:posterior_consitency}
We investigate two theoretical aspects of the proposed model. The
first  result characterizes  sample complexity of the
generalized likelihood 
when distributions are represented byr i.i.d samples.
The second  result establishes posterior 
consistency.  In short, with increasing sample size 
the posterior concentrates around the true parameters under parametric settings
of regression functions.
 In preparation for these results, 
we first investigate the stability of the SWB, namely, how much the
barycenter changes in SW distance
 under perturbations of the predictors and the  
barycenter weights. 

\begin{lemma}[Stability of sliced Wasserstein barycenter]
\label{lemma:stability_SWbarycenter}
    For any $F_1,\ldots,F_K,F_1',\ldots,F_K'\in
    \mathcal{P}_p(\mathbb{R}^d)$ and
    $\pi=(\pi_1,\ldots,\pi_K) \in \Delta^K, \pi'=(\pi_1',\ldots,\pi_K') \in \Delta^K$,  

    (a) Let $\bar{F}=\SWB_p(F_1,\ldots, F_K,\pi)$ and
    $\bar{F}'=\SWB_p(F_1',\ldots, F_K',\pi)$. Then 
    \begin{align}
        \SW_p^p(\bar{F},\bar{F}')\leq \sum_{k=1}^K  \pi_k\SW_p^p(F_k,F_k')\leq \max_{k\in \{1,\ldots,K\}}\SW_p^p(F_k,F_k') .
    \end{align}

    (b) Let $\bar{F}=\SWB_p(F_1,\ldots, F_K,\pi)$ and
    $\bar{F}'=\SWB_p(F_1',\ldots, F_K',\pi')$. Then
    \begin{align}
        \SW_p^p(\bar{F},\bar{F}')\leq 2^{p-1}\left(\sum_{k=1}^K  \pi_k\SW_p^p(F_k,F_k') + M\|\pi -\pi'\|_p^p\right),  
    \end{align}
    where $M$ is a constant that depends on $p$ and the scale of
    moments of $F_1',\ldots, F_K'$.
\end{lemma}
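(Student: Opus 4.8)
The plan is to reduce both statements to one–dimensional estimates through the slicing definition of $\SW$, where $W_p$ has the quantile closed form, and then to exploit the variational structure of the barycenter. Writing $\SW_p^p(G,H)=\EE_{\theta\sim\mathcal{U}(\mathbb{S}^{d-1})}[W_p^p(\theta\sharp G,\theta\sharp H)]$, everything decouples into an average over directions of one–dimensional problems, and in each direction the barycenter objective $\sum_k\pi_k W_p^p(\cdot,\theta\sharp F_k)$ is separable across quantile levels. At a fixed level $u\in(0,1)$ the one–dimensional minimizer is the weighted $p$-Fréchet mean $m_p(a;\pi):=\argmin_{q}\sum_k\pi_k|q-a_k|^p$ of the projected quantiles $a_k=Q_{\theta\sharp F_k}(u)$, so the whole argument rests on pointwise stability estimates for the maps $a\mapsto m_p(a;\pi)$ and $\pi\mapsto m_p(a;\pi)$, integrated over $u$ and averaged over $\theta$.

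For part (a), I would couple the two barycenters through the shared latent level $u$: the comonotone (quantile) coupling of the marginals induces, in one dimension, the admissible coupling giving $W_p^p(\theta\sharp\bar{F},\theta\sharp\bar{F}')\le\int_0^1|m_p(a(u);\pi)-m_p(b(u);\pi)|^p\,\mathrm{d}u$, with $a_k(u)=Q_{\theta\sharp F_k}(u)$ and $b_k(u)=Q_{\theta\sharp F_k'}(u)$. The first inequality then follows from the pointwise non-expansiveness $|m_p(a;\pi)-m_p(b;\pi)|^p\le\sum_k\pi_k|a_k-b_k|^p$, after integrating and using that the comonotone coupling is $W_p$-optimal in one dimension, so that $\int_0^1|a_k(u)-b_k(u)|^p\,\mathrm{d}u=W_p^p(\theta\sharp F_k,\theta\sharp F_k')$, and finally averaging over $\theta$. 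The second inequality is immediate because a $\pi$-weighted average never exceeds the maximum. For $p=2$ the pointwise step is transparent: $m_2(\cdot;\pi)$ is the linear weighted mean, so $|m_2(a;\pi)-m_2(b;\pi)|^2=|\sum_k\pi_k(a_k-b_k)|^2\le\sum_k\pi_k|a_k-b_k|^2$ by Jensen.

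For part (b), I would interpolate through the intermediate barycenter $\bar{F}''=\SWB_p(F_1',\ldots,F_K',\pi)$, which shares the primed marginals with $\bar{F}'$ and the original weights $\pi$ with $\bar{F}$. The metric triangle inequality for $\SW_p$ together with $(x+y)^p\le 2^{p-1}(x^p+y^p)$ yields $\SW_p^p(\bar{F},\bar{F}')\le 2^{p-1}\big(\SW_p^p(\bar{F},\bar{F}'')+\SW_p^p(\bar{F}'',\bar{F}')\big)$. The first term is controlled by part (a) (same weights, perturbed marginals), producing the $\sum_k\pi_k\SW_p^p(F_k,F_k')$ contribution. For the second term (same marginals, perturbed weights) I would establish a weight-stability estimate: since $Q_{\bar{F}''}(u)=m_p(c(u);\pi)$ and $Q_{\bar{F}'}(u)=m_p(c(u);\pi')$ for the common values $c_k(u)=Q_{\theta\sharp F_k'}(u)$, the sensitivity of $m_p$ to the weight vector is governed by the spread of the $c_k(u)$; integrating over $u$ and averaging over $\theta$ converts this spread into moments of $F_1',\ldots,F_K'$ and gives $M\|\pi-\pi'\|_p^p$. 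For $p=2$ this is again explicit, $m_2(c;\pi)-m_2(c;\pi')=\langle\pi-\pi',c\rangle$, and Hölder's inequality extracts $M$ from $\sum_k\EE_{\theta}\int_0^1 c_k(u)^2\,\mathrm{d}u$.

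The main obstacle is the pointwise barycenter stability for general $p$ with the sharp constants claimed: unlike the mean, the weighted $p$-Fréchet mean is a nonlinear metric projection onto the constants in $\ell^p(\pi)$, and pinning down its modulus of continuity in $a$ with constant exactly $1$ (and in $\pi$ with a moment-controlled $M$) requires a careful convexity analysis of $m_p$. A second, structural, difficulty is that for $d\ge 2$ the SWB is genuinely non-separable across slices — the projections of $\bar{F}$ need not coincide with the one–dimensional barycenters of the projected marginals — so the reduction to the pointwise picture must be justified through the first-order optimality characterization of the SWB rather than assumed, and the comonotone coupling must be verified to yield an admissible coupling of the $d$-dimensional barycenters. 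I expect the $p=2$ case to go through cleanly by the Jensen and Hilbert-space arguments above, with the general-$p$ statement being the delicate part.
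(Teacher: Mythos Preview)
Your overall plan---reduce to one dimension via slicing and work at the quantile level---matches the paper. The key divergence is that you treat the slice-wise barycenter quantile as the nonlinear weighted $p$-Fr\'echet mean $m_p(a;\pi)$ and flag its non-expansiveness for general $p$ as the main obstacle. The paper bypasses this entirely: it first proves a short preliminary lemma that for almost every $\theta$ the projection $\theta\sharp\bar{F}$ coincides with the one-dimensional Wasserstein-$p$ barycenter of the projected marginals (which dispatches your second structural concern), and then invokes, from \citet{bonneel2015sliced}, the identity $Q_{\theta\sharp\bar{F}}(t)=\sum_{k}\pi_k\,Q_{\theta\sharp F_k}(t)$ for that one-dimensional barycenter. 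With this \emph{linear} quantile formula in hand, the pointwise step is nothing more than Jensen's inequality for $|\cdot|^p$---exactly what you sketch for $p=2$, but applied for all $p$. So your $p=2$ argument \emph{is} the paper's proof of (a), and the general-$p$ obstacle you anticipate never materializes under the paper's framing.

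For part (b), your route through the intermediate barycenter $\bar{F}''=\SWB_p(F_1',\ldots,F_K',\pi)$ and the metric triangle inequality is a valid alternative. The paper instead stays at the quantile level throughout: it adds and subtracts $\sum_k\pi_k Q_{\theta\sharp F_k'}(t)$ inside the absolute value, applies $(x+y)^p\le 2^{p-1}(x^p+y^p)$, handles the first piece exactly as in (a), and bounds the second piece $\big|\sum_k(\pi_k-\pi_k')Q_{\theta\sharp F_k'}(t)\big|^p$ via Minkowski and H\"older to extract $\|\pi-\pi'\|_p^p$ times a moment constant $M$. Both approaches produce the same two terms and the same $2^{p-1}$ factor; the paper's direct decomposition is marginally more economical because it never leaves the integral, whereas your version trades that for the conceptual cleanliness of isolating ``same weights, perturbed marginals'' from ``same marginals, perturbed weights.''
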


The proof of Lemma~\ref{lemma:stability_SWbarycenter} is given in
Supplementary
Material~\ref{subsec:proof:lemma:stability_SWbarycenter}. Lemma~\ref{lemma:stability_SWbarycenter}(a)
shows that when only the marginals are perturbed, the barycenter
changes by at most the magnitude of the largest marginal
change. Lemma~\ref{lemma:stability_SWbarycenter}(b)  states  that
when both the marginals and their weights are perturbed, the
barycenter changes by at most the combined effect of the largest
marginal change and the weight change, scaled by a constant depending
on the moments of the marginals. 

With the stability of SWB, we can discuss a practical aspect of the
model. In practice, we usually observe predictors and responses
through their samples
 and have to use an approximation 
for the (generalized) likelihood in \eqref{eq:MDDRmodel}. 
We show that a simple plug-in estimator is suffices.

\begin{theorem}
  \label{theorem:sample_complexity}
  Assume $F_1,\ldots,F_K,G \in \mathcal{P}_p(\mathbb{R}^d)$
  ($p\geq 1$) have compact supports with diameter $R>0$,
  and 
  $\hat{F}_1,\ldots,\hat{F}_K,\hat{G}$ are the corresponding empirical
  distributions with at least $n$ i.i.d support points, $(\pi_1,\ldots,\pi_K) \in
  \Delta^K$,  and $\bar{F} =
  \SWB_p(F_1,\ldots,F_K,\pi_1,\ldots,\pi_K)$ and $\hat{\bar{F}} =
  \SWB_p(\hat{F}_1,\ldots,\hat{F}_K,\pi_1,\ldots,\pi_K)$.
  The following inequality holds: 
    \begin{align}
        \EE\left[\left|\exp \{-w  \SW_p^p(\bar{F},G)\} - \exp (-w  \SW_p^p\{\hat{\bar{F}},\hat{G})\} \right|\right] \leq C_{p,w,R} \frac{1}{\sqrt{n}},
    \end{align}
    where $C_{p,w,R}>0$ is a constant  and 
    depends on $R$, $p$, and $w$.
\end{theorem}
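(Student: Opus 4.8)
The plan is to reduce the statement to the one-sample sample complexity of the sliced Wasserstein distance, combining the Lipschitz continuity of the exponential map with the stability bound of Lemma~\ref{lemma:stability_SWbarycenter}(a). Since $x\mapsto e^{-wx}$ is $w$-Lipschitz on $[0,\infty)$, the quantity to be bounded is at most $w\,\EE\big|\SW_p^p(\bar F,G)-\SW_p^p(\hat{\bar F},\hat G)\big|$, so it suffices to control the expected difference of the two $p$-th-power distances. I first record that the compact-support hypothesis (read as all base distributions sharing a common support of diameter $R$) makes every relevant $\SW_p$ distance bounded: any two of $F_1,\dots,F_K,G$ and their empirical versions are within $\SW_p$-distance $R$, and for the barycenters optimality gives $\sum_k\pi_k\SW_p^p(\bar F,F_k)\le\sum_k\pi_k\SW_p^p(G,F_k)\le R^p$, so $\min_k\SW_p(\bar F,F_k)\le R$ and the triangle inequality yields $\SW_p(\bar F,G)\le 2R$ (and likewise $\SW_p(\hat{\bar F},\hat G)\le 2R$).

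With these a priori bounds I pass from $p$-th powers to first powers via $|a^p-b^p|\le p\max(a,b)^{p-1}|a-b|$ and then use that $\SW_p$ is a genuine metric on $\PP_p(\Re^d)$:
\begin{align*}
\big|\SW_p^p(\bar F,G)-\SW_p^p(\hat{\bar F},\hat G)\big|\le p(2R)^{p-1}\big(\SW_p(\bar F,\hat{\bar F})+\SW_p(G,\hat G)\big).
\end{align*}
The response term obeys $\EE[\SW_p(G,\hat G)]=\mathcal{O}(n^{-1/2})$ by the known one-sample rate for compactly supported measures (with constant depending on $R$). For the barycenter term, Lemma~\ref{lemma:stability_SWbarycenter}(a) applies because $\bar F$ and $\hat{\bar F}$ share the same weights $\pi$, giving $\SW_p^p(\bar F,\hat{\bar F})\le\sum_k\pi_k\SW_p^p(F_k,\hat F_k)$.

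The decisive step is extracting the $p$-th root without degrading the rate. Applying Jensen's inequality to $\big(\sum_k\pi_k\SW_p^p(F_k,\hat F_k)\big)^{1/p}$ would only produce $\mathcal{O}(n^{-1/(2p)})$, which is too weak for $p>1$. Instead, since $\pi$ is a probability vector, $\sum_k\pi_k\SW_p^p(F_k,\hat F_k)\le\max_k\SW_p^p(F_k,\hat F_k)$, and monotonicity of the root gives
\begin{align*}
\SW_p(\bar F,\hat{\bar F})\le\max_{k}\SW_p(F_k,\hat F_k)\le\sum_{k=1}^K\SW_p(F_k,\hat F_k).
\end{align*}
Taking expectations and applying the one-sample SW rate to each marginal yields $\EE[\SW_p(\bar F,\hat{\bar F})]=\mathcal{O}(n^{-1/2})$ with a constant depending on $K$ and $R$. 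Chaining the displayed bounds and absorbing $w$, $p$, $(2R)^{p-1}$, and $K$ into one constant $C_{p,w,R}$ completes the argument.

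I expect the root step to be the main obstacle: Lemma~\ref{lemma:stability_SWbarycenter}(a) is stated at the level of $\SW_p^p$, so recovering the parametric $n^{-1/2}$ rate rather than $n^{-1/(2p)}$ hinges on replacing the convex combination by its maximum before taking the root, together with the a priori boundedness of all $\SW_p$ distances needed to convert the $p$-th-power difference into a first-power difference. The remaining ingredients—the Lipschitz reduction, the metric triangle inequality, and the cited one-sample SW rate—are routine.
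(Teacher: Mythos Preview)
Your argument is correct and reaches the claimed $n^{-1/2}$ rate, but it proceeds along a different line than the paper. After the same Lipschitz reduction, the paper works slice-by-slice and invokes Lemma~4 of \citet{goldfeld2024statistical} to bound $|W_p^p(\theta\sharp\bar F,\theta\sharp G)-W_p^p(\theta\sharp\hat{\bar F},\theta\sharp\hat G)|$ by $C_{p,R}\big(W_1(\theta\sharp\bar F,\theta\sharp\hat{\bar F})+W_1(\theta\sharp G,\theta\sharp\hat G)\big)$; averaging over $\theta$ lands on $\SW_1$ terms, and Lemma~\ref{lemma:stability_SWbarycenter}(a) is then applied with $p=1$, giving $\SW_1(\bar F,\hat{\bar F})\le\sum_k\pi_k\,\SW_1(F_k,\hat F_k)$ directly---no root extraction is needed, and the final rate comes from DKW. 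Your route is more self-contained: you avoid the external Goldfeld lemma by first establishing the a~priori bound $\SW_p(\bar F,G)\le 2R$ (a nice use of barycenter optimality that the paper does not record) and then applying the elementary inequality $|a^p-b^p|\le p\max(a,b)^{p-1}|a-b|$ together with the metric triangle inequality. The price is that Lemma~\ref{lemma:stability_SWbarycenter}(a) must be invoked at level $p$ and the root extracted via the max bound, which introduces a factor of $K$ in the constant; this is harmless for the conclusion but mildly at odds with the notation $C_{p,w,R}$, whereas the paper's reduction to $\SW_1$ preserves the convex combination $\sum_k\pi_k$ and yields a $K$-free constant.
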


The proof 
is given in
Supplementary Material~\ref{subsec:proof:theorem:sample_complexity}
and leverages Lemma~\ref{lemma:stability_SWbarycenter}.
Theorem~\ref{theorem:sample_complexity} suggests that the empirical
estimation of the generalized likelihood converges at the order of
$\mathcal{O}(n^{-1/2})$, which is a parametric rate.
Note that that  the result does not require  any continuity
assumption on the population distributions $F_1,\ldots,F_K,G$ i.e.,
they can be discrete or continuous. 
 
Next, we discuss posterior consistency.
 Assume that the push forward functions are indexed by parameters
$\phi=(\phi_1,\ldots,\phi_K) \in \Phi:=\Phi_1\times \ldots,\Phi_K$ , as  $f_\phi = (f_{1,\phi_1},\ldots,f_{k,\phi_k})$ and investigators are interested in
$p(\phi,\pi\mid \SS)$.

 Recognizing the negative log likelihood in \eqref{eq:MDDRmodel} as
a loss function we define 
empirical risk and population risk  
\begin{align}
    &R_N (\phi,\pi) = \frac{1}{N}\sum_{i=1}^n \SW_p^p\{\Gt_i(\phi,\pi),G_i\}, \quad 
    R(\phi,\pi) = \EE_{(F_1,\ldots,F_K,G)\sim P}[\SW_p^p\{\Gt(\phi,\pi),G\}].
\end{align}
We make the following assumptions for posterior consistency.
 
Let $P$ denote a true generating model for the predictors $(F_1,\ldots,F_K)$
and a true response $G$.
That is, $P$ is a hypothetical true process generating possible
experiments.

\begin{assumption}[Identifiability]\label{assumption:identifiability}
There exists $(\phi_0,\pi_0) \in \Phi\times \Delta^K$ such that $R(\phi,\pi)$ attains its unique minimum at $(\phi_0,\pi_0)$. Moreover, for every $\epsilon > 0$,  
$
    \Delta(\epsilon) \;=\; \inf_{\{(\phi,\pi) \in \Phi\times \Delta^K : \|\phi - \phi_0\|_p+\|\pi-\pi_0\|_p \geq \epsilon\}} \big( R(\phi,\pi) - R(\phi_0,\pi_0) \big) \;>\; 0.
$
\end{assumption}

\begin{assumption}[Compactness]\label{assumption:compact}
The parameter space $\Phi$ is compact.
\end{assumption}

\begin{assumption}[Bounded moments]\label{assumption:secondmoment}
For all $(F_1,\ldots,F_K,G) \in \mathrm{supp}(P)$,
$
    \EE_{X\sim F_k}[\|X\|_p^p] \leq C_k, \quad  \EE_{Y\sim G}[ \|Y\|_p^p]\leq C_G,
$
for some constants $C_k, C_G < \infty$ for $k=1,\ldots,K$.
\end{assumption}

\begin{assumption}[Prior positivity]\label{assumption:prior}
The prior on $p$ assigns positive mass to every neighborhood of 
$(\phi_0,\pi_0)$ (defined in
Assumption~\ref{assumption:identifiability}); that is, for every
$\epsilon > 0$,    
$
    p(B_\epsilon(\phi_0,\pi_0)) > 0,  
    \quad \text{where} \quad  
    B_\epsilon(\phi_0,\pi_0) = \{(\phi,\pi) \in \Phi\times \Delta^K : \|\phi - \phi_0\|_p +\|\pi-\pi_0\|_p< \epsilon\}.
$
\end{assumption}

\begin{assumption}[Regularity of Regression Functions]\label{assumption:continuity} For any $k=1,\ldots,K$,
the regression function $f_{k,\phi_k}$ admits $\omega_k: \Re_+\to\Re_+$ ($\lim_{t\to 0} \omega_k(t)=0$) as a modulus of continuity: $\EE_{X\sim F_k}\|f_{k,\phi_k}(X) - f_{k,\phi'_k}(X)\|_p^p \leq \omega_k(\|\phi_k -\phi_k'\|_p^p)$ for all $(F_1,\ldots,F_K,G) \in \mathrm{supp}(P)$ and $\phi_k,\phi'_k \in \Phi_k$. For all $k=1,\ldots,K$, $(F_1,\ldots,F_K,G) \in \mathrm{supp}(P)$ and $\phi_k \in \Phi_k$,
$
    \EE_{X\sim F_k} [\|f_{k,\phi_k}(X)\|_p^p] \leq C,
$
for a constant $ C < \infty$.
\end{assumption}

\begin{theorem}\label{theorem:posterior_consistency}
 Under 
Assumptions~\ref{assumption:identifiability}–\ref{assumption:continuity},
for every $\epsilon>0$, the posterior measure $p_N$
 satisfies 
\begin{align}
    p_N\big(\{ (\phi,\pi) \in \Phi \times \Delta^K: \|\phi - \phi_0\|_p +\|\pi -\pi_0\|_p \geq \epsilon \}\big) \xrightarrow{a.s} 0 ,
\end{align}
as $N \to \infty$ under i.i.d sampling.
\end{theorem}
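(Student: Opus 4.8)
The plan is to control the generalized posterior mass of the ``bad'' set $A_\epsilon := \{(\phi,\pi) : \|\phi-\phi_0\|_p + \|\pi-\pi_0\|_p \geq \epsilon\}$ by writing it as a ratio of integrals and pitting the decay of the numerator (forced by identifiability) against a lower bound on the denominator (guaranteed by prior positivity). Since the data are i.i.d.\ and the generalized likelihood factorizes to $\exp\{-wN R_N(\phi,\pi)\}$, I would first cancel the common factor $\exp\{-wN R_N(\phi_0,\pi_0)\}$ and write
\begin{align*}
  p_N(A_\epsilon) = \frac{\int_{A_\epsilon} \exp\{-wN[R_N(\phi,\pi)-R_N(\phi_0,\pi_0)]\}\, p(d(\phi,\pi))}{\int_{\Phi\times\Delta^K} \exp\{-wN[R_N(\phi,\pi)-R_N(\phi_0,\pi_0)]\}\, p(d(\phi,\pi))}.
\end{align*}
The whole argument then hinges on a uniform law of large numbers (ULLN), namely $\eta_N := \sup_{(\phi,\pi)\in\Phi\times\Delta^K} |R_N(\phi,\pi)-R(\phi,\pi)| \to 0$ almost surely.

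Establishing this ULLN is the main obstacle, and it is where Lemma~\ref{lemma:stability_SWbarycenter} and Assumption~\ref{assumption:continuity} do the work. For a fixed draw $(F_1,\ldots,F_K,G)$ I would show $(\phi,\pi)\mapsto \SW_p^p(\Gt(\phi,\pi),G)$ is continuous: by Lemma~\ref{lemma:stability_SWbarycenter}(b),
\begin{align*}
  \SW_p^p(\Gt(\phi,\pi),\Gt(\phi',\pi')) \leq 2^{p-1}\Big(\textstyle\sum_{k=1}^K \pi_k\, \SW_p^p(f_{k,\phi_k}\sharp F_k, f_{k,\phi_k'}\sharp F_k) + M\|\pi-\pi'\|_p^p\Big),
\end{align*}
and since $\SW_p \leq W_p$ and $(f_{k,\phi_k}(X), f_{k,\phi_k'}(X))$ with $X\sim F_k$ is a coupling of the two push-forwards, each slice term is at most $\EE_{X\sim F_k}\|f_{k,\phi_k}(X)-f_{k,\phi_k'}(X)\|_p^p \leq \omega_k(\|\phi_k-\phi_k'\|_p^p)$, which vanishes as $\phi'\to\phi$. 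As $\SW_p$ is a metric, continuity of $\SW_p^p(\cdot,G)$ transfers to the composite map. Crucially, the push-forward moments $\EE_{F_k}\|f_{k,\phi_k}(X)\|_p^p$ are uniformly bounded by Assumption~\ref{assumption:continuity}, so the constant $M$ is uniform over $\mathrm{supp}(P)$; likewise, applying Lemma~\ref{lemma:stability_SWbarycenter}(a) against the degenerate marginals $\delta_0$ (whose barycenter is $\delta_0$) together with the triangle inequality for $\SW_p$ and Assumptions~\ref{assumption:secondmoment}--\ref{assumption:continuity} yields a finite envelope $\SW_p^p(\Gt(\phi,\pi),G) \leq \bar C < \infty$ on $\mathrm{supp}(P)$. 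Continuity in $(\phi,\pi)$, the bounded envelope, and compactness of $\Phi\times\Delta^K$ (Assumption~\ref{assumption:compact} plus compactness of the simplex) then deliver both the a.s.\ uniform convergence $\eta_N\to0$ and the continuity of $R$ by a standard uniform law of large numbers for continuous, dominated families over a compact parameter space.

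With $\eta_N\to0$ in hand, I would bound numerator and denominator. On $A_\epsilon$, Assumption~\ref{assumption:identifiability} gives $R(\phi,\pi)-R(\phi_0,\pi_0)\geq \Delta(\epsilon)>0$, so $R_N(\phi,\pi)-R_N(\phi_0,\pi_0)\geq \Delta(\epsilon)-2\eta_N$ and the numerator is at most $\exp\{-wN(\Delta(\epsilon)-2\eta_N)\}$. For the denominator, continuity of $R$ lets me choose $\delta>0$ so small that $R(\phi,\pi)-R(\phi_0,\pi_0)<\Delta(\epsilon)/2$ on the ball $B_\delta(\phi_0,\pi_0)$; restricting the integral there and using $R_N-R_N(\phi_0,\pi_0)\leq \Delta(\epsilon)/2+2\eta_N$, the denominator is at least $p(B_\delta(\phi_0,\pi_0))\exp\{-wN(\Delta(\epsilon)/2+2\eta_N)\}$, with $p(B_\delta(\phi_0,\pi_0))>0$ by Assumption~\ref{assumption:prior}.

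Combining these bounds gives
\begin{align*}
  p_N(A_\epsilon) \leq \frac{1}{p(B_\delta(\phi_0,\pi_0))}\exp\Big\{-wN\big(\tfrac{1}{2}\Delta(\epsilon)-4\eta_N\big)\Big\}.
\end{align*}
Since $\eta_N\to0$ almost surely, for all large $N$ the exponent is at most $-\tfrac14 wN\Delta(\epsilon)\to-\infty$, so $p_N(A_\epsilon)\to0$ almost surely, as claimed. I expect the genuinely delicate point to be the uniformity of the continuity modulus and the envelope over $\mathrm{supp}(P)$ --- i.e.\ that $M$ and $\bar C$ do not degenerate as the underlying data distribution varies --- which is exactly what the moment controls in Assumptions~\ref{assumption:secondmoment} and~\ref{assumption:continuity} are designed to supply.
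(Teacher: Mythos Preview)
Your proposal is correct and follows essentially the same route as the paper: establish a uniform law of large numbers for $R_N$ over the compact set $\Phi\times\Delta^K$ using Lemma~\ref{lemma:stability_SWbarycenter}(b) together with Assumption~\ref{assumption:continuity} for the modulus of continuity and a bounded envelope, then bound the posterior ratio of the bad set against a small ball around $(\phi_0,\pi_0)$ using Assumption~\ref{assumption:identifiability} on the numerator and Assumption~\ref{assumption:prior} on the denominator to obtain exponential decay. The paper spells out the ULLN via an explicit $\delta$-net argument rather than citing a packaged result, and obtains the envelope bound directly from the quantile representation of the barycenter rather than through Lemma~\ref{lemma:stability_SWbarycenter}(a) applied against $\delta_0$, but these are cosmetic differences in the same argument.
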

The proof of Theorem~\ref{theorem:posterior_consistency} is provided in Supplementary Material~\ref{subsec:proof:theorem:posterior_consistency}. In particular, Lemma~\ref{lemma:stability_SWbarycenter} is instrumental in establishing a uniform law of large numbers for $R_N(\phi,\pi)$, which in turn enables us to prove consistency under stated assumptions.

\subsection{Computation of Sliced Wasserstein Barycenter}
\label{subsec:computation_SWB}
Before  introducing   details of posterior inference, we
discuss the computation of SWB,  which is required for the evaluation
of the likelihood in \eqref{eq:MDDRmodel2}.  
We recall the optimization problem which defines  SWB
with marginals $G_1,\ldots,G_K \in \PP_p(\mathbb{R}^d)$ and $\pi \in \Delta^K$: 
\begin{align}
    \min_{G\in \PP_p(\Re^d)} \sum_{k=1}^K \pi_k \SW_p^p(G,G_k)
\end{align}
Solving SWB is an optimization problem and is still a developing
research direction ~\citep{bonneel2015sliced,nguyen2025towards}.
SWB is a convex problem on the space of distributions i.e., the
mapping $G\to \sum_{k=1}^K \pi_k \SW_p^p(G,G_k)$ is convex
(Proposition~\ref{proposition:convexity_SWB} in Supplementary
Material~\ref{subsec:proof:proposition:convexity_SWB}). However, in
practice we often need to parameterize the barycenter for tractable
optimization, which may break convexity.

In our case, we use a plug-in estimator for the generalized likelihood,
solving SWB for
 the empirical distributions 
$G_{1}=\frac{1}{M_{G_1}} \sum_{i=1}^{M_{G_1}}
\delta_{x_{i1}},\ldots,G_k=\frac{1}{M_{G_K}} \sum_{i=1}^{M_{G_K}}
\delta_{x_{iK}}$.
It is then
natural to also restrict the barycenter to  an empirical
distribution,  i.e.,
$G=\frac{1}{M_{G}} \sum_{\ell=1}^{M_G}\delta_{z_\ell}$ (also known as the
free support barycenter~\citep{cuturi2014fast}) where
$z_1,\ldots,z_{M_G} \in \Re^d$. We update the  support points  of the
barycenter using an iterative procedure  with the gradient: 
\begin{align}
\label{eq:gradient_z_1}
   \nabla_{z_\ell} \sum_{k=1}^K \pi_k \SW_p^p(G,G_k)&= \sum_{k=1}^K \pi_k \nabla_{z_\ell} \SW_p^p(G,G_k)\nonumber \\&= \sum_{k=1}^K \pi_k \nabla_{z_\ell} \EE_{\theta \sim \mathcal{U}(\mathbb{S}^{d-1})}[W_p^p(\theta \sharp G,\theta \sharp G_k)]  \nonumber\\
   &=\sum_{k=1}^K \pi_k  \EE_{\theta \sim \mathcal{U}(\mathbb{S}^{d-1})}[\nabla_{z_\ell}W_p^p(\theta \sharp G,\theta \sharp G_k)],
\end{align}
for any $z_\ell \in \{z_1,\ldots,z_{M_G}\}$. Let $\gamma^\star_{k,\theta}$ be the optimal transport plan between
$\theta \sharp G$ and $\theta \sharp G_k$,  implying 
\begin{align}
\label{eq:gradient_z_2}
    &\nabla_{z_\ell}W_p^p(\theta \sharp G,\theta \sharp G_k) = \nabla_{z_\ell}
\sum_{j=1}^{M_{G_k}} \gamma_{k,\theta,{ \ell} j}^\star |\theta^\top(
z_{{ \ell}} - x_{jk})|^p  \nonumber \\ 
    &= p\theta
\sum_{j=1}^{M_{G_{k}}}
\gamma_{k,\theta,\ell j}^\star 
\left|\theta^\top \big(z_{\ell}- x_{jk}\big)\right|^{p-2}
\left( \theta^\top \big(z_{\ell}- x_{jk}\big)\right),
\end{align}
when $\theta^\top \big(z_{\ell}- x_{jk})\neq 0$. A general update is of the form:
\begin{align}
    z_{\ell}^{(t)} = g\left(z_{\ell}^{(t-1)},\nabla_{z_\ell^{(t-1)}} \sum_{k=1}^K \pi_k \SW_p^p(G^{(t-1)},G_k) \right),
\end{align}
where $G^{(t)}=\frac{1}{M_{G}} \sum_{i=1}^{M_G}\delta_{z_\ell^{(t)}}$,  $g:\Re^d\times \Re^d \to \Re^d$ is the update rule, and $z_1^{(0)},\ldots,z_{M_G}^{(0)}$ are randomly initialized. We update $T>0$ iterations to obtain an approximation $G^{(T)}$ of SWB. Overall, the time complexity is $\mathcal{O}(T M_{max} \log M_{max})$ where $M_{max} = \max\{M_{G_1},\ldots,M_{G_K},M_{G}\}$.

For the upcoming discussion of efficient posterior inference we
will use the Jacobian $\frac{\partial z_\ell^{(T)}(\phi)}{\partial\phi}$.
Specifically, let 
$f_\phi = (f_{1,\phi_1},\ldots,f_{k,\phi_k})$ be parametric regression functions
with $\phi=(\phi_1,\ldots,\phi_K)$, $G_{1}=\frac{1}{M_{G_1}}
\sum_{i=1}^{M_{G_1}}
\delta_{f_{1,\phi_1}(x_{i1})},\ldots,G_k=\frac{1}{M_{G_K}}
\sum_{i=1}^{M_{G_k}} \delta_{f_{K,\phi_K}(x_{iK})}$. We consider the
Jacobian matrix $ \frac{\partial z_{\ell}^{(T)}(\phi)}{\partial \phi} $,
which is a function of $\phi$. Let
$h_\ell^{(t-1)}(\phi)=\nabla_{z_\ell^{(t-1)}} \sum_{k=1}^K \pi_k
\SW_p^p(G^{(t-1)},G_k)$ (see
\eqref{eq:gradient_z_1}-\eqref{eq:gradient_z_2}). By the update rule,
we have $z_{\ell}^{(T)}(\phi) = g\left(z_{\ell}^{(T-1)} (\phi),h_\ell^{(T-1)}
(\phi)\right)$. The Jacobian can be written as: 
\begin{align}
   \frac{\partial z_{\ell}^{(T)}(\phi)}{\partial \phi}=  \frac{\partial z_{\ell}^{(T)}(\phi)}{\partial z_{\ell}^{(T-1)}(\phi)} \frac{\partial z_{\ell}^{(T-1)}(\phi)}{\partial \phi }  + \frac{\partial z_{\ell}^{(T)}(\phi)}{\partial h_{\ell}^{(T-1)}(\phi)} \frac{\partial h_{\ell}^{(T-1)}(\phi)}{\partial \phi},
\end{align}
where $\frac{\partial z_{\ell}^{(T)}(\phi)}{\partial z_{\ell}^{(T-1)}(\phi)}$  and $\frac{\partial z_{\ell}^{(T)}(\phi)}{\partial h_{\ell}^{(T-1)}(\phi)}$ depend on the update rule $g$ (the exact form for a specific $g$ will be provided later), $\frac{\partial z_{\ell}^{(T-1)}(\phi)}{\partial \phi } $ can be computed recursively, and 
\begin{align}
   & \frac{\partial h_{\ell}^{(T-1)}(\phi)}{\partial \phi} 
   = \frac{\partial}{\partial \phi}\left(\sum_{k=1}^K \pi_k  \EE\left[p\theta
\sum_{j=1}^{M_{G_{k}}}
\gamma_{k,\theta,\ell j}^{\star(T-1)}
\left|\theta^\top \big(z_{\ell}^{(T-1)}(\phi)- f_{k,\phi_k}(x_{jk})\big)\right|^{p-2} \right.\right.\nonumber \\&\left.\left. \quad \quad 
\left( \theta^\top \big(z_{\ell}^{(T-1)}(\phi)- f_{k,\phi_k}(x_{jk})\big)\right) \right]\right) \nonumber \\
&=\sum_{k=1}^K \pi_k  \EE\left[p(p-1)\,\theta
\sum_{j=1}^{M_{G_k}}
\gamma_{k,\theta,\ell j}^{\star(T-1)}
\big|\theta^\top \big(z_\ell^{(T-1)}(\phi) - f_{k,\phi_k}(x_{jk})\big)\big|^{p-2} \nonumber \right.\\& \quad \quad  \left.
\left(\theta^\top \left(\frac{\partial z_\ell^{(T-1)}(\phi)}{\partial \phi}-\frac{\partial f_{k,\phi_k}(x_{jk})}{\partial \phi}\right)\right) \right],
\end{align}
where $\gamma_{k,\theta}^{\star(T-1)}$ be the optimal transport plan between
$\theta \sharp G^{(T-1)}$ and $\theta \sharp G_k$.
In the above, the expectation is with respect to $\theta \sim \mathcal{U}(\mathbb{S}^{d-1})$, $\gamma_{\theta,\ell j}^{(T-1)}$ is the optimal transport plan between $\theta \sharp G^{(T-1)}$ and $\theta \sharp G_k$, and  $\frac{\partial f_{k,\phi_k}(x_{jk})}{\partial \phi}$ depends on the parameterization of the regression function $f_{k,\phi_k}$ which will be discussed later.

Similarly, we will also need
$\frac{\partial  z_{\ell}^{(T)}(\pi)}{ \partial \pi }$, which can be
derived as: 
\begin{align}
  \frac{\partial  z_{\ell}^{(T)}(\pi)}{ \partial \pi }
  =\frac{\partial z_{\ell}^{(T)}(\pi)}{\partial z_{\ell}^{(T-1)}(\pi)} \frac{\partial z_{\ell}^{(T-1)}(\pi)}{\partial \pi }  + \frac{\partial z_{\ell}^{(T)}(\pi)}{\partial h_{\ell}^{(T-1)}(\pi)} \frac{\partial h_{\ell}^{(T-1)}(\pi)}{\partial \pi},
\end{align}
where $\frac{\partial z_{\ell}^{(T)}(\pi)}{\partial z_{\ell}^{(T-1)}(\pi)}$  and $\frac{\partial z_{\ell}^{(T)}(\pi)}{\partial h_{\ell}^{(T-1)}(\pi)}$ depend on the update rule $g$, $\frac{\partial z_{\ell}^{(T-1)}(\pi)}{\partial \pi } $ can be computed recursively, and 
\begin{align}
   & \frac{\partial h_{\ell}^{(T-1)}(\pi)}{\partial \pi} 
   = \frac{\partial}{\partial \pi}\left(\sum_{k=1}^K \pi_k  \EE\left[p\theta
\sum_{j=1}^{M_{G_{k}}}
\gamma_{k,\theta,\ell j}^{\star,(T-1)}
\left|\theta^\top \big(z_{\ell}^{(T-1)}(\pi)- f_{k,\phi_k}(x_{jk})\big)\right|^{p-2} \right.\right. \nonumber\\&\quad \left.\left.
\left( \theta^\top \big(z_{\ell}^{(T-1)}(\pi)- f_{k,\phi_k}(x_{jk})\big)\right) \right]\right) = \left(\mathcal{G}_1,\ldots,\mathcal{G}_K\right),
\end{align}
with 
\begin{align}
    &\mathcal{G}_k\nonumber = \EE\left[p\theta
\sum_{j=1}^{M_{G_{k}}}
\gamma_{k,\theta,\ell j}^{\star,(T-1)}
\left|\theta^\top \big(z_{\ell}^{(T-1)}(\pi)- f_{k,\phi_k}(x_{jk})\big)\right|^{p-2}
\left( \theta^\top \big(z_{\ell}^{(T-1)}(\pi)- f_{k,\phi_k}(x_{jk})\big)\right) \right] \nonumber\\
&\quad + \sum_{k=1}^K \pi_k \, 
\EE \Bigg[ p(p-1) \sum_{j=1}^{M_{G_k}} 
\gamma_{k,\theta,\ell j}^{\star,(T-1)} \, 
\big|\theta^\top ( z_\ell^{(T-1)}(\pi) - f_{k,\phi_k}(x_{jk}) ) \big|^{p-2} \, 
\theta \theta^\top \Bigg] 
\frac{\partial z_\ell^{(T-1)}(\pi)}{\partial \pi_k}.
\end{align}

In practice, we use Monte Carlo samples
$\theta_1,\ldots,\theta_L\overset{i.i.d}{\sim
}\mathcal{U}(\mathbb{S}^{d-1})$ to approximate
 the expectation in  the gradients.
For the update rule $g$, we use
Adam~\citep{kingma2014adam}, which can be described as: 
\begin{align}
    &z_{\ell}^{(t)} = g\left(z_{\ell}^{(t-1)},h_\ell^{(t-1)}\right) = z_{\ell}^{(t-1)}- \eta \frac{\hat{m}_i^{(t)}}{\sqrt{\hat{v}^{(t)}_i} + \epsilon}, \nonumber \\
    &\hat{m}^{(t)}_i = \frac{m^{(t)}_i}{1 - \beta_1^t}, \quad \hat{v}^{(t)}_i = \frac{v^{(t)}_i}{1 - \beta_2^t}, \nonumber\\
    &m^{(t)}_i= \beta_1 m^{(t-1)}_i + (1 - \beta_1) h^{(t-1)}_i, \quad v^{(t)}_i = \beta_2 v^{(t-1)}_i + (1 - \beta_2) (h^{(t-1)}_i)^2.
\end{align}
where $\eta >0$ is the step size, $\beta_1\in[0,1]$ and $\beta_2\in[0,1]$ are momentum parameters, $\epsilon>0$ for avoiding $0$ in the denominator.  For this update rule, we have:
\begin{align}
    \frac{\partial z_\ell^{(t)}}{ \partial z_\ell^{(t-1)}}=I,\quad    \frac{\partial z_\ell^{(t)}}{ \partial h_\ell^{(t-1)}} = -\eta\, \diag\left(\frac{\partial\hat{m}_i^{(t)} }{\partial h_\ell^{(t-1)}} \frac{1}{\sqrt{\hat{v}^{(t)}_i} + \epsilon} + \hat{m}_i^{(t)} \frac{\partial \left(\sqrt{\hat{v}^{(t)}_i} + \epsilon \right)^{-1}}{\partial h_\ell^{(t-1)}}\right),
\end{align}
where 
\begin{align}
    &\frac{\partial\hat{m}_i^{(t)} }{\partial h_\ell^{(t-1)}} = \frac{1-\beta_1}{1-\beta_1^t}I, \quad  \frac{\partial \left(\sqrt{\hat{v}^{(t)}_i} + \epsilon)^{-1} \right)}{\partial h_\ell^{(t-1)}} = -\frac{1}{\left(\sqrt{\hat{v}^{(t)}_i} + \epsilon\right)^{2}} \frac{1}{2\sqrt{\hat{v}^{(t)}_i}} \frac{\partial \hat{v}^{(t)}_i}{\partial h_\ell^{(t-1)}}, \\
    &\frac{\partial \hat{v}^{(t)}_i}{\partial h_\ell^{(t-1)}}=\frac{2(1-\beta_2)}{1-\beta_2^t} \diag( h_\ell^{(t-1)}).
 \end{align}
We note that there is room to improve the update rule, for example by using a Riemannian Silver step size~\citep{park2025acceleration}. However, we keep the update rule as simple as possible while retaining good practical performance. we now can discuss posterior inference in the next section.

\subsection{Posterior Inference}
\label{subsec:posterior_inference}

The generalized posterior is given by:
\begin{align}
  p(f,\pi \mid  \SS) &\propto
  p(f,\pi)\,
  \prod_{i=1}^N \ell(f_1,\ldots,f_K,\pi_1,\ldots,\pi_K;G_i,F_{i1},\ldots,F_{iK}).
\end{align}
 We implement posterior Markov chain Monte Carlo (MCMC) simulation
using Metropolis-Hastings (MH) transition probabilities
with proposals to mimic
$p(f \mid\pi,  \SS)$  and $p(\pi\mid f,\SS)$. 
In particular, we construct proposals $f^*\sim q(f^*\mid
f)$ and $\pi^*\sim q(\pi^*\mid \pi)$, and accept proposed samples
with the probabilities: 
\begin{align}
    \min \left\{1,\frac{p(f^*\mid \pi,\SS)q(f\mid f^*)}{p(f\mid \pi,\SS)q(f^*\mid f)}\right\}, \quad \min \left\{1,\frac{p(\pi^*\mid f,\SS)q(\pi\mid \pi^*)}{p(\pi\mid f,\SS)q(\pi^*\mid \pi)}\right\}.
\end{align}
In practice, we only observe i.i.d samples from
distributions. Therefore, we need to approximate the generalized
likelihood for computing the acceptance-rejection probability.
 Using 
the empirical distributions we have:
\begin{align}
   & \ell(f_1,\ldots,f_K,\pi_1,\ldots,\pi_K;G_i,F_{i1},\ldots,F_{iK}) \nonumber\\ &\approx \ell(f_1,\ldots,f_K,\pi_1,\ldots,\pi_K;\hat{G}_i,\hat{F}_{i1},\ldots,\hat{F}_{iK})
    = \exp \left\{-w  \SW_p^p\left[\widehat{\Gt}_i(f,\pi),\hat{G}_i\right]\right\}, 
\end{align}
where  $\widehat{\Gt}_i(f,\pi)=\SWB_p(f_1\sharp
\hat{F}_{i1},\ldots,f_K\sharp \hat{F}_{iK}, \pi_1,\ldots,\pi_K)$, and
$\hat{G}_i,\hat{F}_{i1},\ldots,\hat{F}_{iK}$ are empirical
distributions of $G_i,F_{i1},\ldots,F_{iK}$. As  stated  in
Theorem~\ref{theorem:sample_complexity}, this approximation converges
well. Next, we we approximate the SWB
 as discussed before  (Section~\ref{subsec:computation_SWB}),
approximating $\widehat{\Gt}_i(f,\pi)$ by
$\widehat{\Gt}_i^{(T)}(f,\pi)$ for $T>0$. We denote the approximated
generalized likelihood as: 
\begin{align}
  \hat{\ell}(f_1,\ldots,f_K,\pi_1,\ldots,\pi_K;G_i,F_{i1},\ldots,F_{iK},T)=
  \exp
    \left\{-w  \SW_p^p\left[\widehat{\Gt}_i^{(T)}(f,\pi),\hat{G}_i\right]\right\}. 
\end{align}
 
\textbf{ Updating $\phi$:}
We now discuss the construction of the proposal distributions
 for $\phi$ and $\pi$, respectively.
We parameterize  the regression function $f_\phi$ as
$f_\phi=(f_{1,\phi_1},\ldots,f_{K,\phi_K})$ with $\phi=(\phi_1,\ldots,\phi_K)$.
We use a MALA proposal $q(\phi^* \mid
\phi) \propto \exp\left(-\frac{1}{4\eta_1} \|\phi^* - 
 \phi - \eta_1 \nabla_\phi \log \hat{p}(\phi \mid \pi,\SS)\|_2^2\right)$,
for a fixed step size $\eta_1 > 0$, and 
\begin{align}
    \nabla_\phi \log p(\phi \mid  \pi,\SS) &\approx \nabla_\phi \log
    p(\phi)  -w \sum_{i=1}^N   \nabla_\phi
    \EE\left\{W_p^p\left[
      \theta \sharp \widehat{\Gt}_i^{(T)}(\phi,\pi),\theta\sharp
      \hat{G}_i\right]
    \right\}, \nonumber \\
    &=\nabla_\phi \log p(\phi)  -w \sum_{i=1}^N
    \EE\left\{ \nabla_\phi
    W_p^p\left[\theta \sharp \widehat{\Gt}_i^{(T)}(\phi,\pi),\theta\sharp \hat{G}_i\right]\right\}.
\end{align}
Let $\psi_{i,\theta}^\star$ be the optimal transport plan between $\theta \sharp \widehat{\Gt}_i^{(T)}(\phi,\pi)$ and $\theta\sharp \hat{G}_i$, $M_i$ be the number of atoms of $\Gt_i^{(T)}(\phi,\pi)$,  we have:
\begin{align}
   & \nabla_\phi W_p^p(\theta \sharp \widehat{\Gt}_i^{(T)}(\phi,\pi),\theta\sharp \hat{G}_i) = \nabla_\phi\sum_{j=1}^{M_{i}} \sum_{j'=1}^{M_{G_i}} \psi_{i,\theta,jj'}^\star |\theta^\top(z_{i,j}^{(T)}(\phi,\pi) -y_{i,j'})|^p \\
    &=p
\sum_{j=1}^{M_{i}} \sum_{j'=1}^{M_{G_i}}
\psi_{i,\theta,jj'}^\star
\left|\theta^\top(z_{i,j}^{(T)}(\phi,\pi) -y_{i,j'})\right|^{p-2}
\left( \theta^\top(z_{i,j}^{(T)}(\phi,\pi) -y_{i,j'})\right) \nonumber\\&\left(\frac{\partial z_{i,j}^{(T)}(\phi,\pi)}{\partial \phi }\right)^\top \theta ,
\end{align}
where $\frac{\partial z_{i,j}^{(T)}(\phi,\pi)}{\partial \phi }$ is discussed in Section~\ref{subsec:computation_SWB}. We sample $\phi^* \sim q(\phi^* \mid \phi)$ as
$\phi^* = \phi + \eta_1 \nabla_\phi \log p(\phi \mid  \pi,\SS) +
 \sqrt{2\eta_1} \epsilon_0$,
with $\epsilon_0 \sim \mathcal{N}(0, I)$, a standard multivariate
Gaussian distribution  of  dimension matching the parameter
$\phi$.

\textbf{ Updating $\pi$:}
 To construct a gradient based transition probability for $\pi$ we
use a  
change of variables to remove the simplex constraint of $\pi$:
\begin{align}
    &\pi_k =  \frac{e^{\varpi_k}}{1+\sum_{k'=1}^{K-1}
    e^{ \varpi_{k'}}} \quad \text{for} \quad k=1,\ldots,K-1, \quad
  \pi_K =  \frac{1}{1+\sum_{k'=1}^{K-1} e^{\varpi_{k'}}}. 
\end{align}
 Let $D=\diag(\pi_1,\ldots,\pi_{K-1})$. 
 We have the Jacobian
$J =  D - \pi\pi^\top$ 
of the transformation and its determinant:
$\det (J) = \det(D (1-\pi^\top  D^{-1} \pi) =  \prod_{k=1}^K \pi_k$,
implying  
$p(\varpi_k) = p(\pi) \prod_{k=1}^K \pi_k$.
We use 
the proposal $q(\varpi^* \mid \varpi) \propto \exp\left(-\frac{1}{4\eta_2} \|\varpi^* -
 \varpi - \eta_2 \nabla_\varpi \log p(\varpi \mid \phi,\SS)\|_2^2\right)$,
for a fixed step size $\eta_2 > 0$, and 
\begin{align}
    \nabla_\varpi \log p(\varpi \mid  \phi,\SS) &= \nabla_\phi \log
    p(\varpi)  -w \sum_{i=1}^N   \nabla_\varpi
    \EE\left\{W_p^p\left[
      \theta \sharp \widehat{\Gt}_i^{(T)}(\phi,\varpi),\theta\sharp
      \hat{G}_i\right]\right\}, 
\end{align}
which can be derived with the gradient  $\frac{\partial  z_{i,j}^{(T)} (\phi,\pi)}{\partial \pi}$ as  discussed in Section~\ref{subsec:computation_SWB}. Since $\varpi$ and $\pi$ are connected through invertible deterministic mappings, sampling for $\varpi$ implies sampling for $\pi$.

\subsection{A Parsimonious Model}
\label{subsec:parsimonious_model}

We specify a parsimonious model, which we will further use for later simulation and real data analysis. First, we use a linear parameterization for regression functions:
\begin{align}
    f_{k,\phi_k}(x)=f_{k,A_k,b_k}(x) = A_kx +b_k, \,\forall k=1,\ldots,K,
\end{align}
where $A_k \in \mathbb{R}^{h \times d_k}$ is the coefficient matrix and $b_k \in \mathbb{R}^h$ is the intercept. As discussed in~\citet{nguyen2025bayesian}, this type of regression function satisfies Assumption~\ref{assumption:continuity} i.e., they satisfy the Lipchitz continuity property.  As discussed in Section~\ref{subsec:computation_SWB} and Section~\ref{subsec:posterior_inference}, we use  $ \frac{\partial f_{k,A_k,b_k}(x)}{\partial A_k}:= I_{h} \otimes x^\top$ (should be vectorized to match the vector form of $\phi$ in Section~\ref{subsec:computation_SWB}) and $ \frac{\partial f_{k,A_k,b_k}(x)}{\partial b_k}=I_h$  for $k=1,\ldots,K$ for posterior simulation.

We complete the model  with a prior on  the regression functions:
\begin{align}
    &A_k[ij] \sim Laplace(0,1),\, b_k[i] \sim \mathcal{N}(0,10^3), \,\forall i=1,\ldots,d_k, \, \forall j =1,\ldots,h,
\end{align}
and the barycenter weights:
\begin{align}
    (\pi_1,\ldots,\pi_K) \sim Dir(\alpha),\, \alpha =(0.01,\ldots,0.01),
\end{align}
which implies the following prior on $\varpi$: $p(\varpi) =
\frac{1}{B(\alpha_0)} \prod_{k=1}^K \pi_k^{\alpha_{k}}$.
Any alternative family of differentiable functions $f_{k,\phi_k}$
could be used, such as deep neural
networks~\citep{wilson2020bayesian}, if desired or required. We adopt
the linear form because it is interpretable, parsimonious and, as shown later,
sufficiently expressive for our analysis. For RE~\eqref{eq:RE}, we select $f'_k(x) =  b_k$ i.e., an intercept model.

\begin{table}[!t]
    \centering
    \begin{tabular}{c|c|c|c|c}
    \toprule
         Model&  Train RE & Train RE 95\% HCI & Test RE & Test 95\% HCI  \\
         \midrule
        DDR & 0.4097 & 0.4018-0.4168&0.4448&0.4370-0.4539 \\
        MDDR& \textbf{0.1331}&\textbf{0.078-0.1714}&\textbf{0.1330}&\textbf{0.094-0.1777} \\
        \bottomrule
    \end{tabular}
    \caption{Comparison of performance between single-predictor DDR
      models and the multi-predictor MDDR model with simulated
      data. Reported are training and test relative errors (RE) with
      their 95\% highest credible intervals (HCI).} 
    \label{table:simulation_result}
\end{table}

\section{Simulation}
\label{sec:simulation}
We assess the proposed MMDR through a simulation study. In particular,
we highlight the benefits of using SWB to generate
 multi-variate  responses
from  lower-dimensional  predictors
 (in this case, bivariate responses from predictors with 1-dimensional
structure). 

We first sample $A_1,A_2,A_3 \overset{i.i.d}{\sim} \mathcal{U}(SO(2))$
and we set $(\pi_1,\pi_2,\pi_3)=(2/3,1/6,1/6)$.  We then
 generate a simulation truth and hypothetical data by the
following steps: 
\begin{enumerate}
    \item We sample $v_1,v_2,v_3\overset{i.i.d}{\sim} \mathcal{U}(\mathbb{S})$,  $\mu_1,\mu_2,\mu_3 \overset{i.i.d}{\sim} \mathcal{N}(0,9)$, and $\sigma_1^2,\sigma_2^2,\sigma_3^2 \overset{i.i.d}{\sim} IG(3,1)$.
    \item We define predictors $F_1 = f_{v_1}\sharp \mathcal{N}(\mu_1,\sigma_1^2)$, $F_2 = f_{v_2}\sharp \mathcal{N}(\mu_2,\sigma_2^2)$, $F_3 = f_{v_3}\sharp \mathcal{N}(\mu_3,\sigma_3)$ where $f_v(x) = vx$, namely, predictors are univariate Gaussians lifted to 2 dimensions. We  then define $\hat{F}_1,\hat{F_2},\hat{F}_3$ to be empirical distributions over $100$ i.i.d samples  of $F_1,F_2,F_3$ respectively.

    \item We define $\Gt = \SWB_2(f_{A_1} \sharp F_1,f_{A_2} \sharp F_2,f_{A_3} \sharp F_3, \pi_1,\pi_2,\pi_3)$ where $f_{A}(x)=Ax$, and define the response $G = \Gt* \mathcal{N}(0,0.01)$. Since $G$ is intractable, we obtain an empirical version of $G$: $\hat{G} = \SWB_2(f_{A_1} \sharp \hat{F}_1,f_{A_2} \sharp \hat{F}_2,f_{A_3} \sharp \hat{F}_3, \pi_1,\pi_2,\pi_3)* \mathcal{N}(0,0.01 I_2)$ where the SWB is approximated with $100$ atoms as discussed in Section~\ref{subsec:computation_SWB}.

\end{enumerate}
 Steps 1-3 generate one observation
$(\Fh_{i,1},\ldots,\Fh_{i,K},\Gh_\ell)$. 
We repeat the process  70 times to obtain a dataset. We select $70\%$ of samples for the training set $\SS=\{(\hat{F}_{11},\hat{F}_{12},\hat{F}_{13}),\ldots,(\hat{F}_{N1},\hat{F}_{N2},\hat{F}_{N3}) \}$ (with $N=49$) and the other $30\%$ of samples for the testing set. We compare MMDR with DDR~\citep{nguyen2025bayesian}; for the latter, we use a linear regression function to regress the response on the first predictor since it does not allow multiple predictors. For both models,  we use $w=10$ for the generalized likelihood and obtain 100 Markov chain samples using MALA, discussed in Section~\ref{subsec:posterior_inference}, with step sizes $\eta_1 = 0.001$ and $\eta_2 = 0.005$. SW distances are approximated using 1000 Monte Carlo projections. For computing SWB, we use $T=100$, $\eta=0.1$, and $(\beta_1,\beta_2)=(0.9,0.999)$.

\begin{figure}[!t]
\begin{center}
    \begin{tabular}{cccc}
  \widgraph{0.23\textwidth}{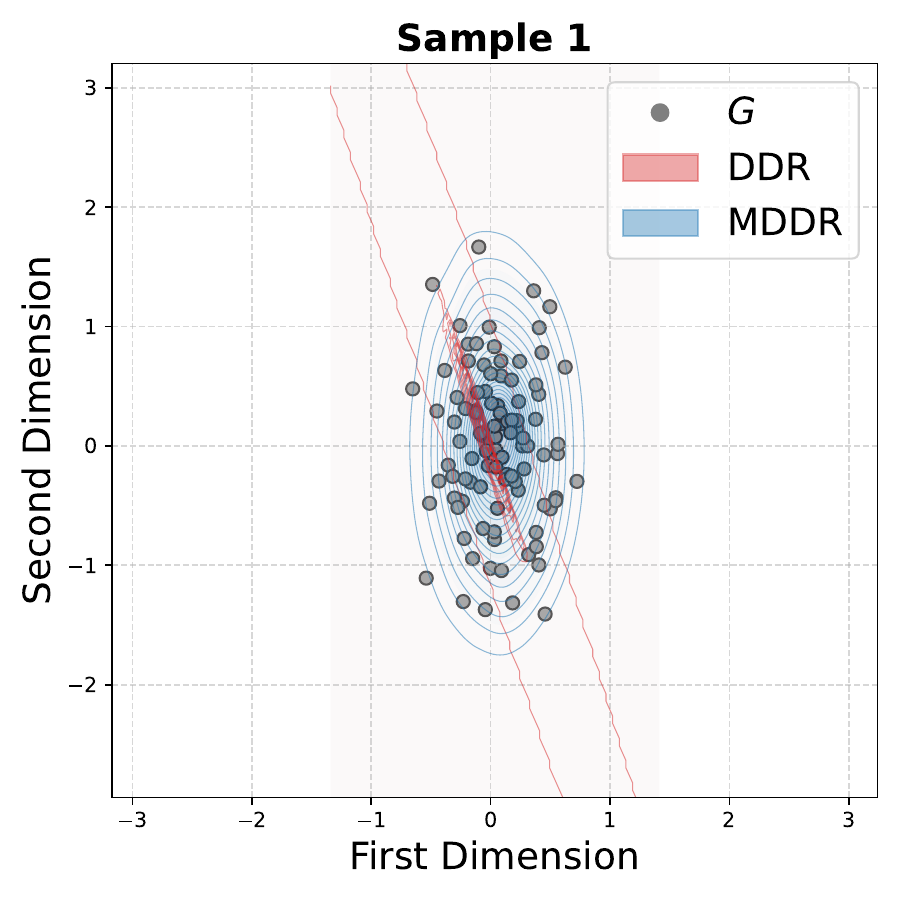} 
&
\widgraph{0.23\textwidth}{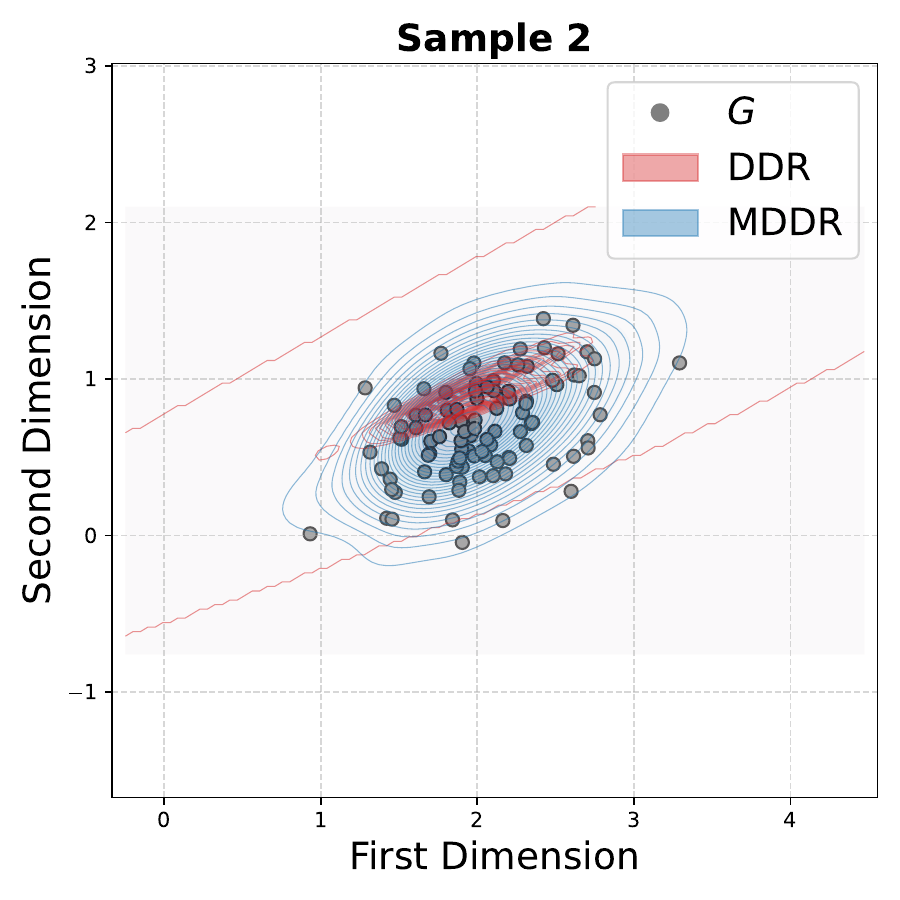} 
&
\widgraph{0.23\textwidth}{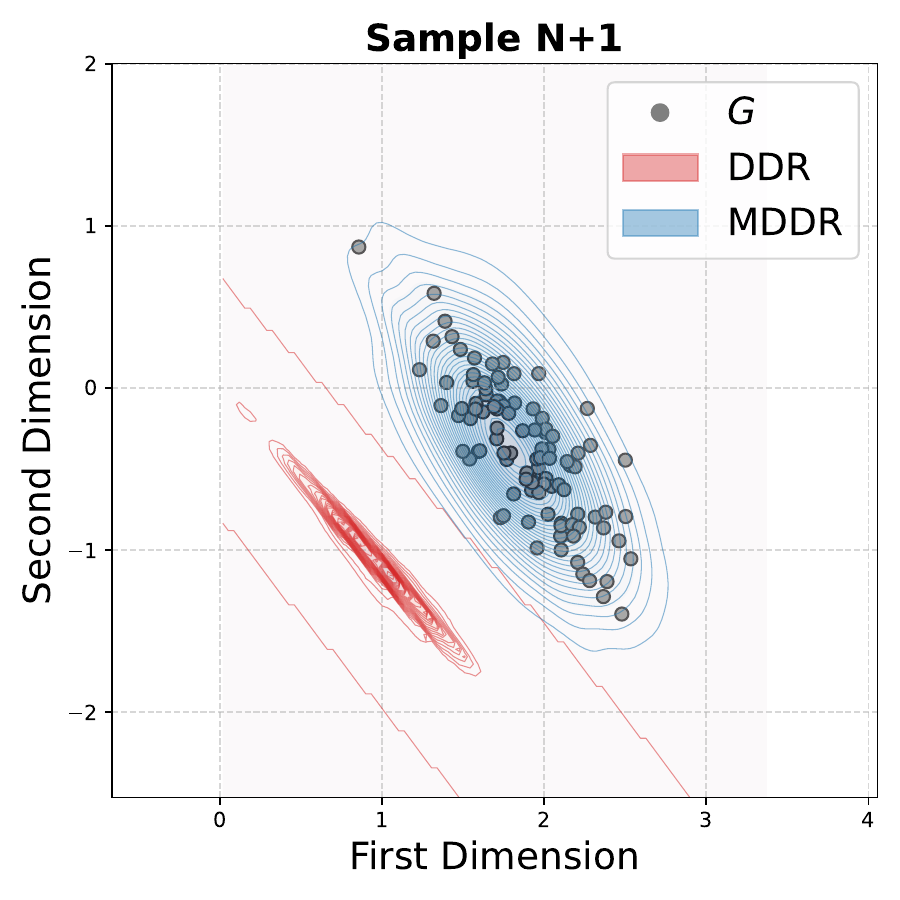} 
&
\widgraph{0.23\textwidth}{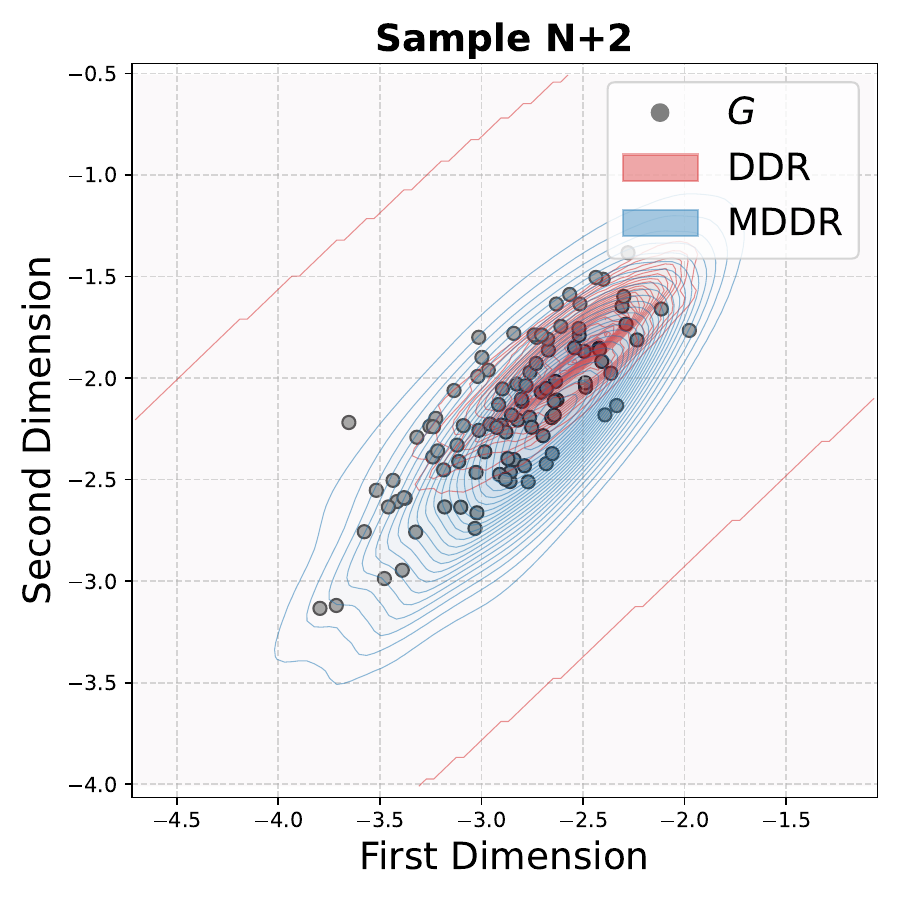}  \\
  \widgraph{0.23\textwidth}{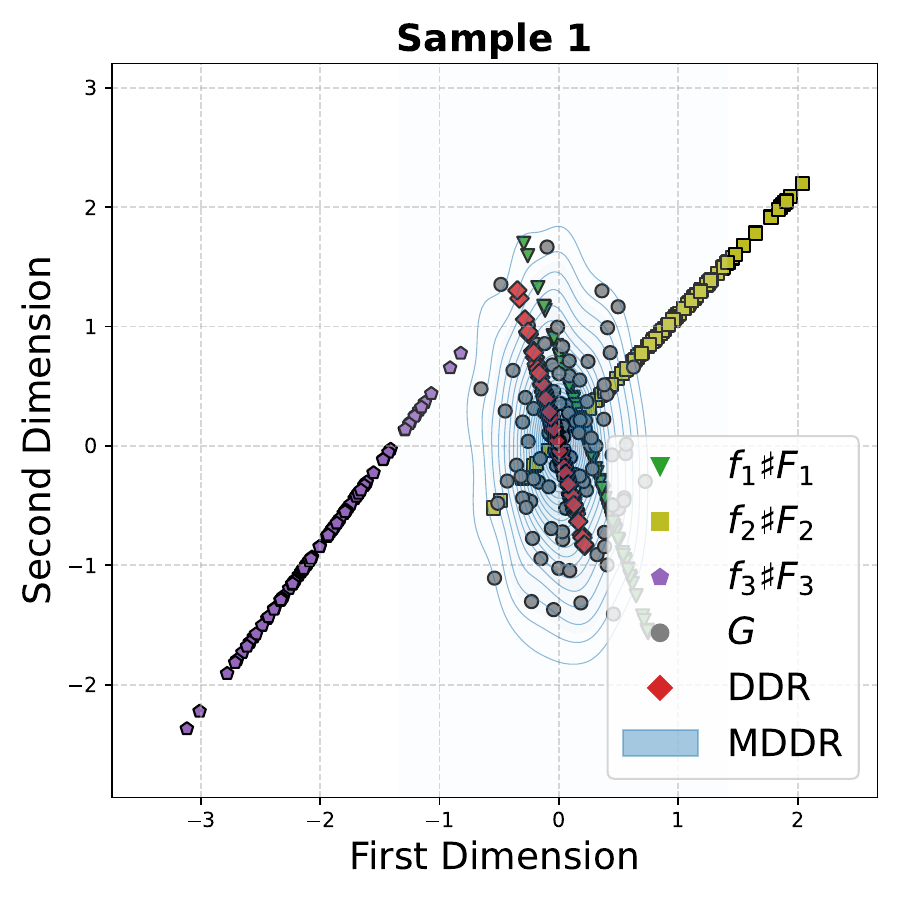} 
&
\widgraph{0.23\textwidth}{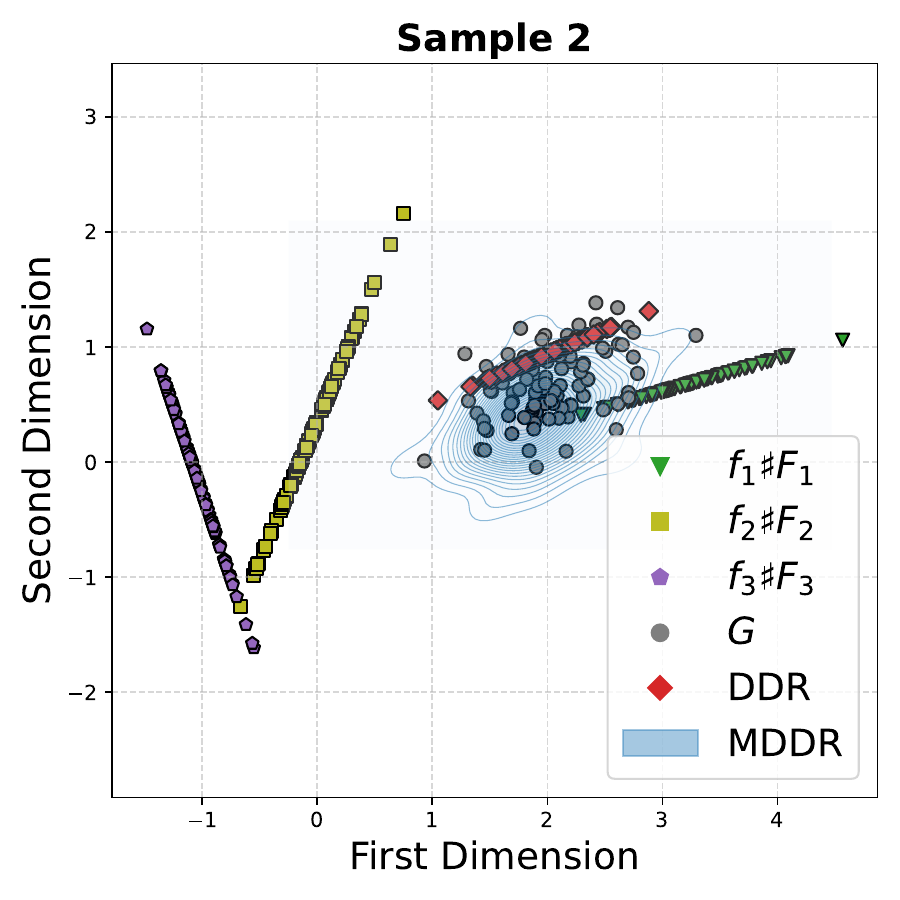} 
&
\widgraph{0.23\textwidth}{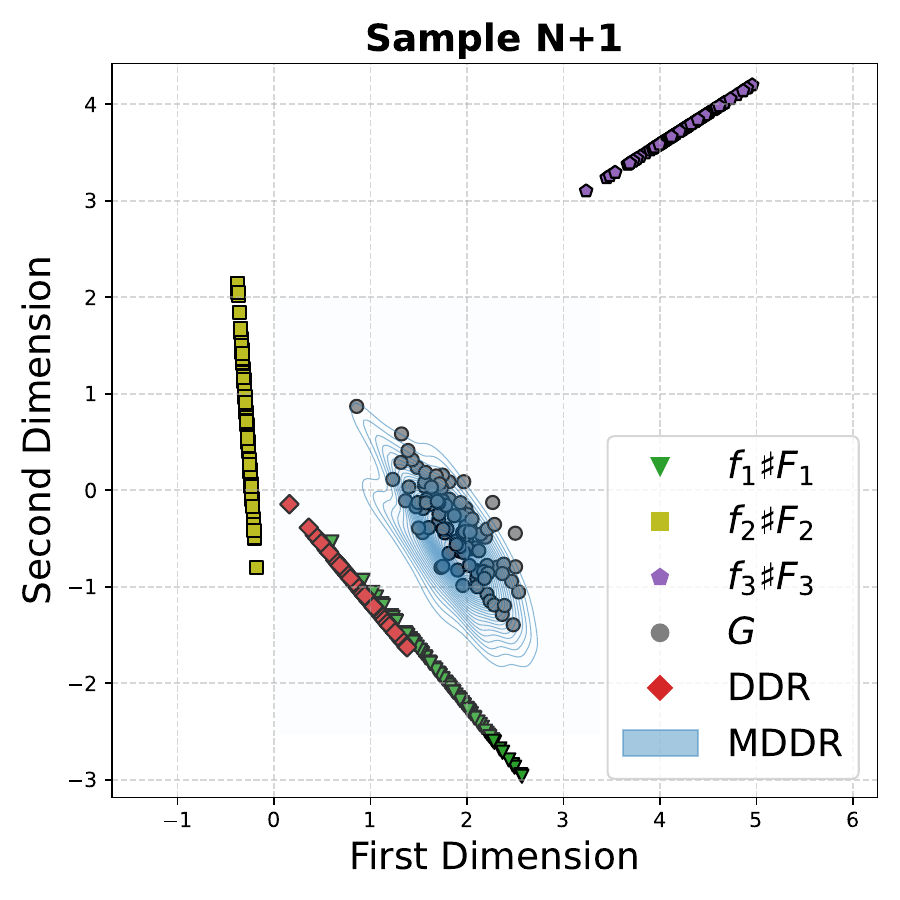} 
&
\widgraph{0.23\textwidth}{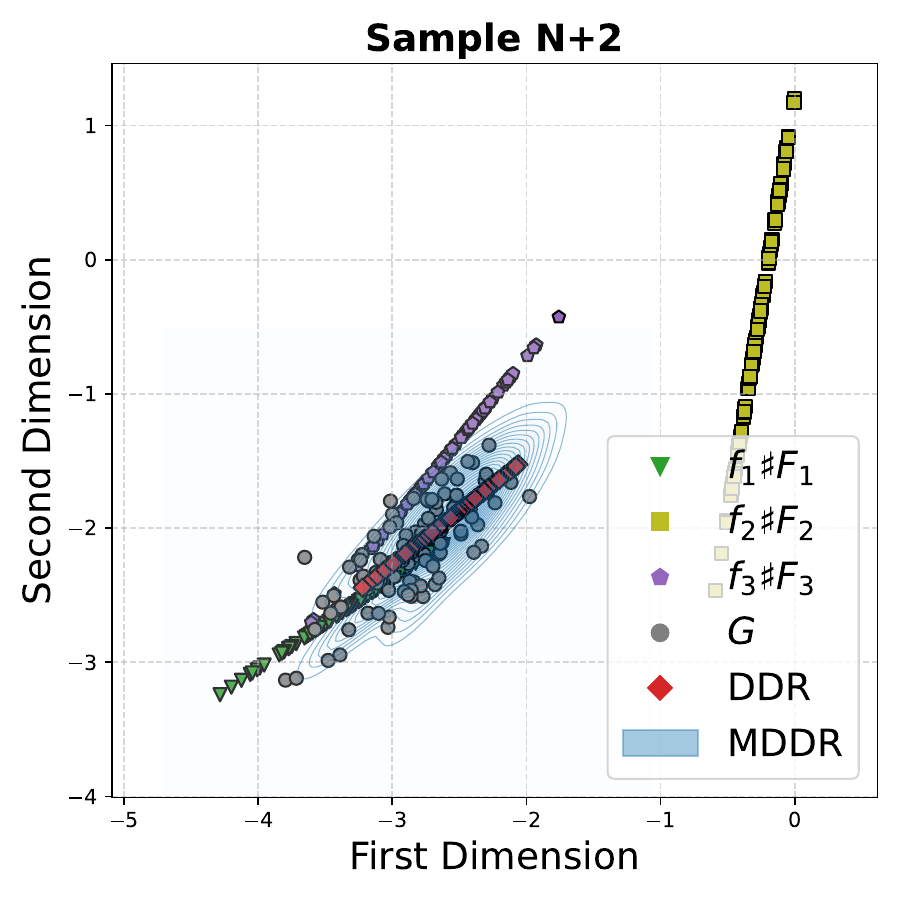}
  \end{tabular}
  \end{center}
  \vspace{-0.2 in}
  \caption{
    \footnotesize{ The first two columns display two randomly selected
      in-sample regression examples  ($i=1,2$), 
      and the last two columns show two
      randomly selected out-of-sample examples 
      ($i=N+1,N+2$). \\
      The first row presents
      the observed responses  $G_i$, 
      along with the posterior mean  fitted distributions 
      $\EE(\Gt_i \mid  data)$  for DDR and MDDR.\\
      The second row shows the observed
      responses, the fitted distributions  $\Gt_i$  for DDR
      and MDDR under the final posterior sample of the chain, and the
      push-forward  $f_{\phi_k} \sharp \Fh_\ell$  of the 
      predictors based on that final posterior sample. 
}
} 
  \label{fig:Gaussian_simulation}
\end{figure}

We report the RE~\eqref{eq:RE} for the training set, the RE for the
testing set, and their associated 95\% highest credible intervals
(HCIs) in Table~\ref{table:simulation_result}. From the table, we
observe that MMDR performs significantly better than DDR in both
in-sample fitting and out-of-sample prediction, as expected. An
illustration of the results is provided in
Figure~\ref{fig:Gaussian_simulation}. From the first row of the
figure, we see that MMDR yields more accurate posterior means in both
the in-sample and out-of-sample settings.
 We show the posterior mean $\EE(\Gt_i \mid F_i, data)$ of fitted
distributions (for display purposes we actually show kernel density
estimates (KDE)). 
MDDR provides consistently better
fits than DDR. The second row of the figure explains why MMDR
outperforms DDR: it shows
 $\Gt_i$ and $f_k \sharp \Fh_{i,k}$ 
from the last posterior
sample in the Markov chain. With a linear regression function, DDR can
only produce fitted distributions with a univariate structure. In
contrast, MMDR can still provide accurate fitted distributions by
leveraging barycenters, although the push-forwards of the
predictors also have univariate structures.

Another inference target of interest is the barycenter weights, which
indicate the contribution of each predictor to the
response.
 The weights  allow us to quantify the strength of
association between the predictors and the response. The posterior
mean of the barycenter weights is $\bar{\pi} = (0.6482, 0.1808,
0.1710)$,
 with corresponding 95\% posterior credible intervals 
of $(0.56, 0.1235,
0.1221)$–$(0.7677, 0.2629, 0.2171)$. 
The posterior places substantial mass near the true weights $\pi = (2/3,
1/6, 1/6)$. In the next section, we will discuss how to use the
posterior of $\pi$ to construct a cell-cell communication network
for single-cell data. 

\begin{table}[t!]
    \centering
    \scalebox{0.9}{
    \begin{tabularx}{\linewidth}{l|p{4cm}|l|p{7cm}} 
        \toprule
        Predictors & Ligands & Response & Receptors \\
        \midrule
            Monocytes & CD86, ICOSL, IL-23, TNF-$\alpha$, IL-1$\beta$, IL-6, CD70 & \multirow{3}{*}{T Cells} & \multirow{3}{=}{\RaggedRight CD3D, CD3E, CD3G, TRBC1, TRBC2, CD28, ICOS, IL-2R, IFNGR, IL-21R, PD-1, CTLA-4, CXCR5, CCR7, CXCR3, CXCR4, IL-12R, IL-15R, IFN-$\gamma$R, Tim-3}\\
       NK Cells & IL-15, IFN-$\gamma$, CD40L, FasL  & & \\
           B Cells & CD40, CD86, ICOSL, IL-6, BAFF, APRIL & & \\
        \midrule 
        T Cells& CD40L, IFN-$\gamma$ & \multirow{3}{*}{B Cells} & \multirow{3}{=}{\RaggedRight IGHM, IGHD, IGHG, IGHA, CD40, ICOSL, IL-21R, IL-6R, BAFFR, CXCR5, CCR7, IL-10R, CXCR4, PD-1, IL-2R, IL-15R, IFN-$\gamma$R, CCR1, CXCR2, ICOS, Tim-3}\\
        Monocytes &  CD40L, IL-6, TNF-$\alpha$, Tim-3, BAFF, APRIL & & \\
        NK Cells & IL-15, IFN-$\gamma$, CXCL8, CD40L & & \\
    \midrule 
        B Cells&IL-6, BAFF, TNF-$\alpha$, CD40L & \multirow{3}{*}{NK Cells} & \multirow{3}{=}{\RaggedRight IL-6R, BAFF-R, TNFR, IL-10R, CCR1, CXCR3, CD40, ICOS, PD-1, Tim-3, NKG2D, IL-2R, IL-4R, IL-21R, IFN-$\gamma$R, CD28, NKp30, NKp46, DNAM-1, NKG2A, IL-15R}\\
        T Cells &  IFN-$\gamma$, CD86 & & \\
        Monocytes &  CD40L, TNF-$\alpha$, IL-15, IL-18, Tim-3 & &
        \\
        \midrule 
        NK Cells&IL-15, IFN-$\gamma$, FasL & \multirow{3}{*}{Monocytes} & \multirow{3}{=}{\RaggedRight IL-12R, IL-15R, IFN-$\gamma$R, CXCR3, Fas, CD40, PD-1, Tim-3, CD86, TNFR1, TNFR2, LAG-3, IL-6R, CXCR4 }\\
        B Cells &  CD40L, CD70, IL-6, TNF-$\alpha$, Tim-3, BAFF, APRIL & & \\
        T Cells &  CD40L, IFN-$\gamma$, CD70 & &
        \\
        \bottomrule
    \end{tabularx}}
    \caption{List of ligands' genes and receptors' genes for 4 regression problems  (1) Predictors:  Monocytes, NK Cells, B Cells, Response: T Cells; (2) Predictors: T Cells, Monocytes, NK Cells, Response: B Cells; (3) Predictors: B Cells, T Cells, Monocytes, Response: NK Cells; (4) Predictors: NK Cells, B Cells, T Cells, Response: Monocytes.}
    \label{tab:data_summary}
\end{table}
\section{Cell-Cell Communication}
\label{sec:cell_cell_communication}

Cell–cell communication is central to understanding complex biological
systems. Interactions between cells underlie a wide range of
physiological and pathological processes.  A natural approach to
quantifying such interactions is to model how
 the distribution of ligand gene expression 
of one cell type affects  the distribution of receptor
gene expressions of another cell type. 
We showcase the utility of the proposed MDDR framework by applying it to
infer communication between a set of cell-types as the senders and a
cell-type as the receiver using the population-scale single-cell
dataset OneK1K~\citep{yazar2022single}. 

\begin{table}[t!]
    \centering
    \scalebox{0.8}{
    \begin{tabular}{l|l|c|c|c|c}
        \toprule
        Response & Model& Train RE & Train RE 95\% HCI & Test RE & Test RE 95\% HPD \\
        \midrule
        \multirow{2}{*}{T Cells} 
            & DDR (B Cells) & 0.2515 & 0.2482-0.2574 & 0.2594 & 0.2533-0.2639\\
            & DDR (Monocytes) & 0.3552 & 0.3522-0.3587 & 0.3644 & 0.3565-0.3710\\
            & DDR (NK Cells) & 0.5085 & 0.5060-0.5113 & 0.5195 & 0.5164-0.5242\\
            & MDDR & \textbf{0.1945} & \textbf{0.182-0.2174} & \textbf{0.2021 }& \textbf{0.1879-0.2293} \\
        \midrule
        \multirow{2}{*}{B Cells} 
            & DDR (T Cells) & 0.6131 & 0.6072-0.6175 & 0.6083 & 0.6011-0.6177\\
            & DDR (Monocytes) & 0.4515& 0.4447-0.4576 & 0.4619 & 0.4552-0.4689\\
            & DDR (NK Cells) & 0.6210 & 0.6165-0.6256 & 0.6630 & 0.6512-0.669\\
            & MDDR & \textbf{0.3685} &\textbf{0.3419-0.4076} & \textbf{0.379} & \textbf{0.3478-0.423} \\
        \midrule
        \multirow{2}{*}{Monocytes} 
            & DDR (T Cells) & 0.5554 & 0.5422-0.5662 & 0.5275 & 0.5150-0.5389\\
            & DDR (B Cells) & 0.3844& 0.3783-0.3905 & 0.4080 & 0.3974-0.4198\\
            & DDR (NK Cells) & 0.5629 & 0.5486-0.5716 & 0.5496 & 0.5380-0.5587\\
            & MDDR & \textbf{0.3142} &\textbf{0.2749-0.3845} & \textbf{0.3162} & \textbf{0.2761-0.4023} \\
        \midrule
        \multirow{2}{*}{NK Cells} 
            & DDR (T Cells) & 0.8000 & 0.7951-0.8066 & 0.7587  & 0.7518-0.7648\\
            & DDR (B Cells) & 0.4786& 0.4734-0.4848 & 0.4959 & 0.4893-0.5021\\
            & DDR (Monocytes) & 0.4661 &0.4625-0.4699 & 0.4537 & 0.4470-0.4615\\
            & MDDR & \textbf{0.3604} &\textbf{0.3333-0.3817} & \textbf{0.3625} & \textbf{0.3401-0.3906}\\
        \bottomrule
    \end{tabular}}
    \caption{
       Average residual errors (RE) for 
      single-predictor DDR models and the multi-predictor MDDR model
      across four cell types. Reported are training and test relative
      errors (RE) with their 95\% highest
       posterior density (HPD) intervals. }
    \label{tab:simulation_comparison}
\end{table}

We focus on four major cell types: B cells, T cells, monocytes, and NK
(natural killer) cells, and set up four regression tasks: 
(1) Predictors: monocytes, NK cells, and B cells; Response: T cells.
(2) Predictors: T cells, monocytes, and NK cells; Response: B cells.
(3) Predictors: B cells, T cells, and monocytes; Response: NK cells.
(4) Predictors: NK cells, B cells, and T cells; Response:
monocytes. We summarize the list of ligands' genes and receptors'
genes for all 4  relations  in Table~\ref{tab:data_summary}. From the
table, we can see that the number of ligands (the dimension of the
predictors) is smaller than the number of receptors (the dimension of
the responses), which is similar to our simulation setup in
Section~\ref{sec:simulation}. We select donors who have at least 90
cells per cell type, which results in 75 donors, and randomly assign
70\% of them to the training set and 30\% to the testing set.

We again compare the proposed MDDR with DDR, using linear regression
functions in both models. Since DDR can only handle one predictor, we
set up three separate DDR regression problems
for each response variable,
each with a different predictor. In both methods, we set $w
= 100$ for the generalized likelihood and draw 100 posterior samples
using the MALA, described in Section~\ref{subsec:posterior_inference},
with step sizes $\eta_1 = 0.001$ and $\eta_2 = 0.005$. SW distances
are approximated using 100 Monte Carlo projections during inference
and are approximated using 1000 Monte Carlo projections during
evaluation (RE computation). For computing SWB, we set $T = 100$,
$\eta = 0.1$, and $(\beta_1, \beta_2) = (0.9, 0.999)$. 

\begin{figure}[!t]
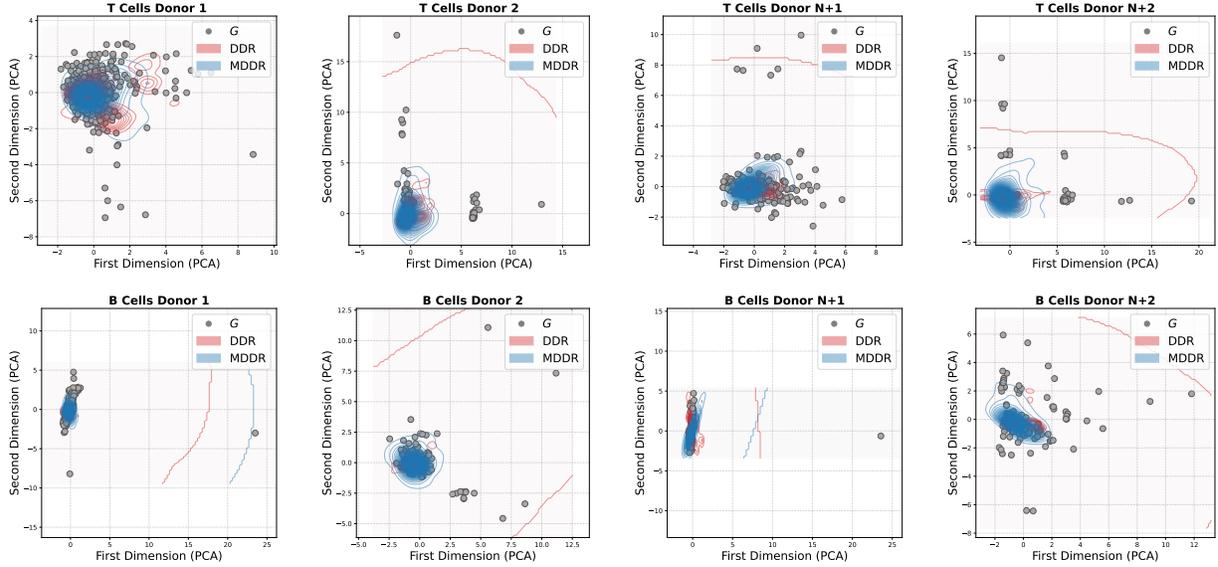

\begin{center}
    \begin{tabular}{cccc}
  \widgraph{0.23\textwidth}{fig/Cell/T_Cells_In_sample_1.pdf} 
&
\widgraph{0.23\textwidth}{fig/Cell/T_Cells_In_sample_2.pdf} 
&
\widgraph{0.23\textwidth}{fig/Cell/T_Cells_Out_sample_1.pdf} 
&
\widgraph{0.23\textwidth}{fig/Cell/T_Cells_Out_sample_2.pdf} 
\\
  \widgraph{0.23\textwidth}{fig/Cell/B_Cells_In_sample_1.pdf} 
&
\widgraph{0.23\textwidth}{fig/Cell/B_Cells_In_sample_2.pdf} 
&
\widgraph{0.23\textwidth}{fig/Cell/B_Cells_Out_sample_1.pdf} 
&
\widgraph{0.23\textwidth}{fig/Cell/B_Cells_Out_sample_2.pdf}
  \end{tabular}
  \end{center}
  \vspace{-0.2 in}
  \caption{
    \footnotesize{The first two columns show two random in-sample donors and the last two columns show two random out-sample donors. The first row presents results for T Cells including the observed responses, and the posterior means of the KDEs of fitted distributions of  DDR and MDDR. The second  row presents similar results for B Cells. We use PCA for visualization.
}
} 
  \label{fig:Cell}
\end{figure}

Table~\ref{tab:simulation_comparison} summarizes
 residual errors under 
MDDR compared with single-predictor DDR models across all four
response types. In every setting, MDDR achieves the lowest RE on both
the training and test sets, with substantially narrower or comparable
95\% HCI ranges, demonstrating improvements in estimation
stability and generalization. 
 The results echo similar experience with simple versus multiple
(normal linear) regression, and 
underscore the advantage of jointly modeling multiple predictors.
MDDR more effectively captures cross–cell-type
relationships than any DDR model based on a single predictor,
particularly when the response has higher dimensionality than the
predictors.

We show the observed responses and the posterior means of
fitted DDR and MDDR distributions
 (again, for display purposes showing KDE's) 
for T cells and B cells in
Figure~\ref{fig:Cell}. MDDR provides noticeably
better fits than DDR for both training and testing donors. The fitted
distributions from MDDR capture the uncertainty in the observed
responses more accurately. While the fit could be further improved by
using more expressive regression functions beyond linear models, we
believe that linear regression functions, being both parsimonious and
computationally efficient, are sufficient in this setting. In
Figure~\ref{fig:Cell_appendix} of Supplementary
Material~\ref{sec:additional_results}, we present analogous results
for Monocytes and NK cells
 with the same conclusion:  
MDDR consistently outperforms DDR. Overall, these qualitative
observations align well with the quantitative  summaries  in
Table~\ref{tab:simulation_comparison}.

\begin{figure}[!t]
    \centering
    \begin{tabular}{cc}
         \includegraphics[width=0.5\linewidth]{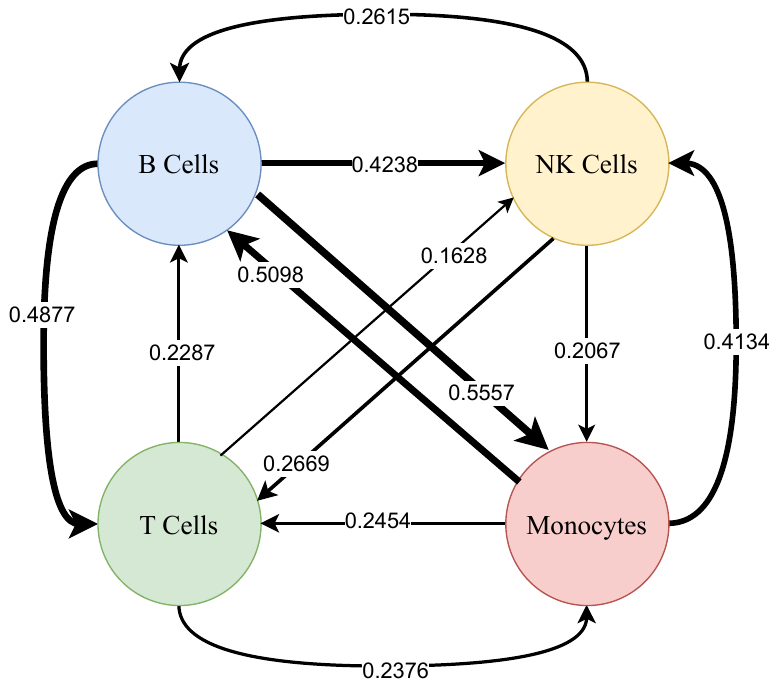}&
    \includegraphics[width=0.5\linewidth]  {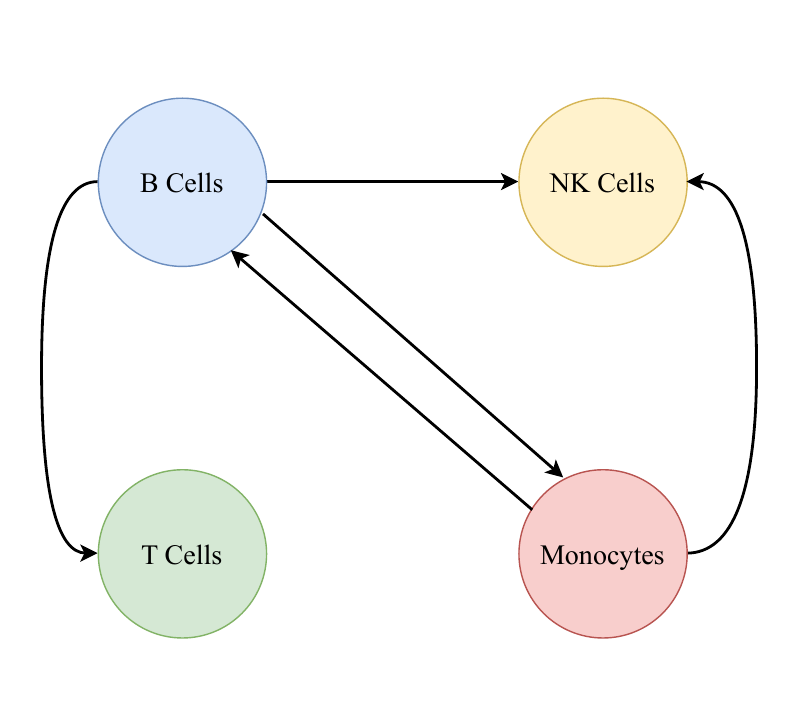} \\
    (a)&(b)
    \end{tabular}
    
    \caption{(a) Weighted graph using posterior means of the barycenter weights (weights of edges) under the MMDR model. (b) Sparse graph obtained from the weighted graph by retaining only the edges whose weights exceed 1/3. }
    \label{fig:graph}
\end{figure}

Lastly, we highlight
 an important feature of inference under 
the proposed MMDR model: it
naturally induces a graph structure between predictors and the
response (cell types in this case). From~\eqref{eq:MDDRmodel}
and~\eqref{eq:MDDRmodel2}, MDDR constructs the fitted distribution as
a barycenter of the push-forwards of the predictors. The barycenter
weights determine how much each predictor contributes to the resulting
fitted distribution. Because these weights lie on a simplex, they can
be interpreted as degrees of association (ranging from 0 to 1) between
predictors and the response. Across the four regression problems
corresponding to the four cell types, we can construct a
weighted graph whose vertices represent the cell types. Each vertex
has three incoming edges from the other vertices, corresponding to the
associated MDDR problems. We assign each edge the posterior mean of
the corresponding barycenter weight. The resulting
 weighted  graph is shown
in Figure~\ref{fig:graph}(a).
 Finally,  to enhance interpretability, we drop
edges with weights smaller than 1/3 (the mean of the Dirichlet prior)
yielding the sparse graph shown in Figure~\ref{fig:graph}(b). From
this graph, we observe  evidence 
that B cells interact with all other cell
types, while Monocytes interact only with B cells and NK cells. In
contrast, NK cells and T cells do not interact with any other cell
types.

\section{Conclusion}
\label{sec:conclusion}

We introduced Bayesian multiple multivariate density–density regression as a novel framework for modeling complex regression relationships, where a multivariate random distribution serves as the response and multiple random distributions act as predictors.

Several limitations remain. First, the framework may be sensitive to model misspecification because it relies on parametric regression functions  $f_\phi$. Nevertheless, in both simulations and real-data applications, we found these mappings to be sufficiently flexible in practice. Second, the approximation of the sliced Wasserstein barycenter (SWB) used in our implementation may not be optimal in terms of computational efficiency or approximation accuracy. For the datasets considered in the paper, however, the current approximation proved adequate.

Future work will aim to address these limitations. For example, the framework could be extended to a Bayesian nonparametric setting for multiple multivariate density–density regression. Additionally, alternative approximation strategies, such as more efficient iterative schemes with improved convergence or deterministic fast approximations of the SWB, could further enhance performance.

\clearpage
\bibliography{bibliography}

@article{yazar2022single,
  title={Single-cell {eQTL} mapping identifies cell type--specific genetic control of autoimmune disease},
  author={Yazar, Seyhan and Alquicira-Hernandez, Jose and Wing, Kristof and Senabouth, Anne and Gordon, M Grace and Andersen, Stacey and Lu, Qinyi and Rowson, Antonia and Taylor, Thomas RP and Clarke, Linda and others},
  journal={Science},
  volume={376},
  number={6589},
  pages={eabf3041},
  year={2022},
  publisher={American Association for the Advancement of Science}
}

@article{nguyen2025introsot,
url = {http://dx.doi.org/10.1561/0600000119},
year = {2025},
volume = {17},
journal = {Foundations and Trends® in Computer Graphics and Vision},
title = {An Introduction to Sliced Optimal Transport: Foundations, Advances, Extensions, and Applications},
doi = {10.1561/0600000119},
issn = {1572-2740},
number = {3-4},
pages = {171-406},
author = {Khai Nguyen}
}

@inproceedings{cuturi2014fast,
  title={Fast computation of {W}asserstein barycenters},
  author={Cuturi, Marco and Doucet, Arnaud},
  booktitle={International conference on machine learning},
  pages={685--693},
  year={2014},
  organization={PMLR}
}

@inproceedings{
nguyen2025towards,
title={Towards Marginal Fairness Sliced {W}asserstein Barycenter},
author={Khai Nguyen and Hai Nguyen and Nhat Ho},
booktitle={The Thirteenth International Conference on Learning Representations},
year={2025},
url={https://openreview.net/forum?id=NQqJPPCesd}
}

@article{shen2016adaptive,
  title={Adaptive {B}ayesian density regression for high-dimensional data},
  author={Shen, Weining and Ghosal, Subhashis},
  journal={Bernoulli},
  pages={396--420},
  year={2016},
  publisher={JSTOR}
}

@inproceedings{rabin2012wasserstein,
  title={Wasserstein barycenter and its application to texture mixing},
  author={Rabin, Julien and Peyr{\'e}, Gabriel and Delon, Julie and Bernot, Marc},
  booktitle={Scale Space and Variational Methods in Computer Vision: Third International Conference, SSVM 2011, Ein-Gedi, Israel, May 29--June 2, 2011, Revised Selected Papers 3},
  pages={435--446},
  year={2012},
  organization={Springer}
}

@book{villani2009optimal,
  title={Optimal transport: old and new},
  author={Villani, C{\'e}dric},
  volume={338},
  year={2009},
  publisher={Springer}
}

@article{wilson2020bayesian,
  title={Bayesian deep learning and a probabilistic perspective of generalization},
  author={Wilson, Andrew G and Izmailov, Pavel},
  journal={Advances in neural information processing systems},
  volume={33},
  pages={4697--4708},
  year={2020}
}

@article{bonneel2015sliced,
  title={Sliced and {R}adon {W}asserstein Barycenters of Measures},
  author={Bonneel, Nicolas and Rabin, Julien and Peyr{\'e}, Gabriel and Pfister, Hanspeter},
  journal={Journal of Mathematical Imaging and Vision},
  volume={1},
  number={51},
  pages={22--45},
  year={2015}
}

@article{peyre2020computational,
  title={Computational optimal transport: With applications to data science},
  author={Peyr{\'e}, Gabriel and Cuturi, Marco and others},
  journal={Foundations and Trends in Machine Learning},
  volume={11},
  number={5-6},
  pages={355--607},
  year={2019},
  publisher={Now Publishers, Inc.}
}

@article{peyre2019computational,
  title={Computational Optimal Transport: With Applications to Data Science},
  author={Peyr{\'e}, Gabriel and Cuturi, Marco},
  journal={Foundations and Trends in Machine Learning},
  volume={11},
  number={5-6},
  pages={355--607},
  year={2019},
  publisher={Now Publishers, Inc.}
}

@inproceedings{
nguyen2021distributional,
title={Distributional Sliced-{W}asserstein and Applications to Generative Modeling},
author={Khai Nguyen and Nhat Ho and Tung Pham and Hung Bui},
booktitle={International Conference on Learning Representations},
year={2021},
url={https://openreview.net/forum?id=QYjO70ACDK}
}

@article{zhao2023density,
  title={Density-on-Density Regression},
  author={Zhao, Yi and Datta, Abhirup and Tang, Bohao and Zipunnikov, Vadim and Caffo, Brian S},
  journal={arXiv preprint arXiv:2307.03642},
  year={2023}
}

@article{matabuena2023distributional,
  title={Distributional data analysis of accelerometer data from the {NHANES} database using nonparametric survey regression models},
  author={Matabuena, Marcos and Petersen, Alexander},
  journal={Journal of the Royal Statistical Society Series C: Applied Statistics},
  volume={72},
  number={2},
  pages={294--313},
  year={2023},
  publisher={Oxford University Press US}
}

@article{kingma2014adam,
  title={Adam: A method for stochastic optimization},
  author={Kingma, Diederik P and Ba, Jimmy},
  journal={arXiv preprint arXiv:1412.6980},
  year={2014}
}

@article{nietert2022statistical,
  title={Statistical, robustness, and computational guarantees for sliced {W}asserstein distances},
  author={Nietert, Sloan and Goldfeld, Ziv and Sadhu, Ritwik and Kato, Kengo},
  journal={Advances in Neural Information Processing Systems},
  volume={35},
  pages={28179--28193},
  year={2022}
}

@article{nadjahi2020statistical,
  title={Statistical and topological properties of sliced probability divergences},
  author={Nadjahi, Kimia and Durmus, Alain and Chizat, L{\'e}na{\"\i}c and Kolouri, Soheil and Shahrampour, Shahin and Simsekli, Umut},
  journal={Advances in Neural Information Processing Systems},
  volume={33},
  pages={20802--20812},
  year={2020}
}

@article{fournier2015rate,
  title={On the rate of convergence in {W}asserstein distance of the empirical measure},
  author={Fournier, Nicolas and Guillin, Arnaud},
  journal={Probability theory and related fields},
  volume={162},
  number={3},
  pages={707--738},
  year={2015},
  publisher={Springer}
}

@article{manole2022minimax,
  title={Minimax confidence intervals for the sliced {W}asserstein distance},
  author={Manole, Tudor and Balakrishnan, Sivaraman and Wasserman, Larry},
  journal={Electronic Journal of Statistics},
  volume={16},
  number={1},
  pages={2252--2345},
  year={2022},
  publisher={The Institute of Mathematical Statistics and the Bernoulli Society}
}

@article{boedihardjo2025sharp,
  title={Sharp bounds for max-sliced {W}asserstein distances},
  author={Boedihardjo, March T},
  journal={Foundations of Computational Mathematics},
  pages={1--32},
  year={2025},
  publisher={Springer}
}

@article{girolami2011riemann,
  title={Riemann manifold {Langevin} and {H}amiltonian {M}onte {C}arlo methods},
  author={Girolami, Mark and Calderhead, Ben},
  journal={Journal of the Royal Statistical Society Series B: Statistical Methodology},
  volume={73},
  number={2},
  pages={123--214},
  year={2011},
  publisher={Oxford University Press}
}

@article{bissiri2016general,
  title={A general framework for updating belief distributions},
  author={Bissiri, Pier Giovanni and Holmes, Chris C and Walker, Stephen G},
  journal={Journal of the Royal Statistical Society Series B: Statistical Methodology},
  volume={78},
  number={5},
  pages={1103--1130},
  year={2016},
  publisher={Oxford University Press}
}

@article{dunson2007bayesian,
  title={Bayesian density regression},
  author={Dunson, David B and Pillai, Natesh and Park, Ju-Hyun},
  journal={Journal of the Royal Statistical Society Series B: Statistical Methodology},
  volume={69},
  number={2},
  pages={163--183},
  year={2007},
  publisher={Oxford University Press}
}

@article{tokdar2004bayesian,
  title={Bayesian Density Regression with Logistic {G}aussian Process and Subspace Projection},
  author={Tokdar, Surya T and Zhu, Yu M and Ghosh, Jayanta K},
  journal={Bayesian Analysis},
  volume={1},
  number={1},
  year={2004}
}

@article{szabo2016learning,
  title={Learning theory for distribution regression},
  author={Szab{\'o}, Zolt{\'a}n and Sriperumbudur, Bharath K and P{\'o}czos, Barnab{\'a}s and Gretton, Arthur},
  journal={Journal of Machine Learning Research},
  volume={17},
  number={152},
  pages={1--40},
  year={2016}
}

@article{goldfeld2024statistical,
  title={Statistical inference with regularized optimal transport},
  author={Goldfeld, Ziv and Kato, Kengo and Rioux, Gabriel and Sadhu, Ritwik},
  journal={Information and Inference: A Journal of the IMA},
  volume={13},
  number={1},
  pages={iaad056},
  year={2024},
  publisher={Oxford University Press}
}

@article{park2025acceleration,
  title={Acceleration via silver step-size on {R}iemannian manifolds with applications to {W}asserstein space},
  author={Park, Jiyoung and Roy, Abhishek and Siegel, Jonathan W and Bhattacharya, Anirban},
  journal={Advances in Neural Information Processing System},
  year={2025}
}

@article{ning2025metropolis,
  title={Metropolis-adjusted Subdifferential {L}angevin Algorithm},
  author={Ning, Ning},
  journal={arXiv preprint arXiv:2507.06950},
  year={2025}
}

@article{zhu2023geodesic,
  title={Geodesic optimal transport regression},
  author={Zhu, Changbo and M{\"u}ller, Hans-Georg},
  journal={arXiv preprint arXiv:2312.15376},
  year={2023}
}

@article{chen2024distribution,
  title={Distribution-in-distribution-out Regression},
  author={Chen, Xiaoyu and Fu, Mengfan and Huang, Yujing and Deng, Xinwei},
  journal={arXiv preprint arXiv:2405.11626},
  year={2024}
}

@article{nguyen2025bayesian,
  title={Bayesian Multivariate Density-Density Regression},
  author={Nguyen, Khai and Ni, Yang and Mueller, Peter},
  journal={arXiv preprint arXiv:2504.12617},
  year={2025}
}

\newpage
\appendix
\begin{center} \Large Supplementary Materials for ``Bayesian Multiple
  Multivariate Density-Density Regression"
\end{center}
\section{Proofs}

\subsection{Proof of Lemma~\ref{lemma:stability_SWbarycenter}}
\label{subsec:proof:lemma:stability_SWbarycenter}

We first   state  a lemma on a projection of SWB. In
particular, we show that a projection of SWB is a Wasserstein
barycenter of corresponding projected marginals. 
\begin{lemma}[Projection of sliced Wasserstein barycenter]
\label{lemma:sw_projection}
Let $F_1, \dots, F_K \in \mathcal{P}_p(\mathbb{R}^d)$ with weights $(\pi_1, \dots, \pi_K) \in \Delta$, $\bar{F}=\SWB_p(F_1,\ldots, F_K,\pi_{1},\ldots,\pi_K)$.
Then for almost every $\theta \in \mathbb{S}^{d-1}$, the projection
$\theta\sharp \Fbar$ of $\bar{F}$ along $\theta$
\begin{align}
\theta \sharp\bar{F} = WB_p(\theta \sharp F_1,\ldots,\theta\sharp F_K,\pi_1,\ldots,\pi_K),
\end{align}
is the 1D Wasserstein barycenter of the projected measures
$\theta \sharp F_k$.
\end{lemma}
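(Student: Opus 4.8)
The plan is to reduce the $d$-dimensional barycenter problem to a family of one-dimensional problems indexed by the projection direction, and then exploit the explicit quantile description of one-dimensional Wasserstein barycenters. Starting from the objective defining the SWB and substituting the definition \eqref{eq:SW} of the sliced distance, I would first use Tonelli's theorem (all integrands are nonnegative) to interchange the finite sum over $k$ and the expectation over $\theta \sim \mathcal{U}(\mathbb{S}^{d-1})$, obtaining
\[
\sum_{k=1}^K \pi_k \SW_p^p(G,F_k) \;=\; \EE_{\theta}\!\left[\, \sum_{k=1}^K \pi_k\, W_p^p(\theta\sharp G, \theta\sharp F_k)\right].
\]
Define the per-slice functional $J_\theta(\mu) = \sum_{k=1}^K \pi_k W_p^p(\mu,\theta\sharp F_k)$ on $\mathcal{P}_p(\mathbb{R})$, whose minimizer is by definition the one-dimensional barycenter $\mu^\star_\theta = WB_p(\theta\sharp F_1,\dots,\theta\sharp F_K,\pi)$. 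Since $\theta\sharp G \in \mathcal{P}_p(\mathbb{R})$ for every $G$, the pointwise inequality $J_\theta(\theta\sharp G)\ge J_\theta(\mu^\star_\theta)$ holds for each $\theta$, giving the lower bound $\sum_k \pi_k \SW_p^p(G,F_k) \ge \EE_\theta[J_\theta(\mu^\star_\theta)]$ valid for all $G$.

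Next I would record the one-dimensional structure that makes $\mu^\star_\theta$ explicit: using the quantile form $W_p^p(\mu,\nu)=\int_0^1 |Q_\mu(t)-Q_\nu(t)|^p\,dt$, the minimizer of $J_\theta$ is found by minimizing the integrand pointwise in $t$, so $Q_{\mu^\star_\theta}(t)$ is the weighted $p$-Fréchet mean of the numbers $Q_{\theta\sharp F_k}(t)$ (for $p=2$, simply $\sum_k \pi_k Q_{\theta\sharp F_k}(t)$). This is nondecreasing in $t$ and hence a genuine quantile function, so $\mu^\star_\theta$ is well defined and, for $p>1$, unique. With uniqueness in hand, equality in the lower bound for a minimizer $\bar F$ would force $\theta\sharp\bar F = \mu^\star_\theta$ for a.e.\ $\theta$, which is exactly the claim.

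The step I expect to be the main obstacle is upgrading the lower bound to an equality, i.e.\ showing that the minimizer $\bar F$ actually attains $\EE_\theta[J_\theta(\mu^\star_\theta)]$. This is not automatic: attaining the bound requires a single measure on $\mathbb{R}^d$ whose one-dimensional projections coincide, for a.e.\ $\theta$, with the prescribed barycenters $\mu^\star_\theta$, and an arbitrary measurable family $\{\mu^\star_\theta\}$ need not arise as the slices of any $d$-dimensional measure (the projections of a measure are strongly constrained across directions). The crux is therefore a realizability claim, and I would attempt it in one of two ways: either (i) exhibit such a measure directly, or (ii) use the first-order optimality conditions for the convex functional $G\mapsto \sum_k \pi_k \SW_p^p(G,F_k)$ (convexity being the Proposition cited in the excerpt), computing the first variation of $\SW_p^p(\cdot,F_k)$ through the one-dimensional transport maps of the slices and checking that stationarity of $\bar F$ decouples across directions. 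I would try route (ii) first, but I expect the decoupling not to hold in general, since the admissible perturbation directions $\theta\sharp H - \theta\sharp\bar F$ are themselves correlated across $\theta$; in that case the honest conclusion is that the equality, and hence the lemma, holds only under additional structure on $F_1,\dots,F_K$ that makes $\{\mu^\star_\theta\}$ realizable, and the reduction relied upon in Lemma~\ref{lemma:stability_SWbarycenter} would need to be justified by a separate, direct argument.
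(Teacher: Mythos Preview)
Your setup mirrors the paper's proof exactly: both write the SWB objective as $\EE_\theta[L_\theta(\theta\sharp G)]$ with $L_\theta(\nu)=\sum_k\pi_k W_p^p(\nu,\theta\sharp F_k)$, observe that $L_\theta$ is minimized at the one-dimensional barycenter $\mu^\star_\theta$, and obtain the lower bound $\EE_\theta[L_\theta(\mu^\star_\theta)]$. The paper then concludes in a single sentence: ``otherwise, one could replace the projection along any positive-measure set of $\theta$ by its 1D minimizer, decreasing the integral and contradicting the minimality of $\bar F$.'' This is precisely the step you flag as the obstacle, and your suspicion is correct: the replacement is not a legal move, because one cannot prescribe the projections $\theta\sharp G$ independently across directions and still have them arise from a single measure on $\mathbb{R}^d$.

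Your realizability concern is not a technical nicety but a genuine obstruction. For a concrete failure, take $d=2$, $p=2$, $\pi=(\tfrac12,\tfrac12)$, $F_1$ uniform on $\{(\pm1,0)\}$ and $F_2$ uniform on $\{(0,\pm1)\}$. For $\theta=(\cos\alpha,\sin\alpha)$ the one-dimensional barycenter $\mu^\star_\theta$ is uniform on $\{\pm\tfrac12(|\cos\alpha|+|\sin\alpha|)\}$, so its variance equals $\tfrac14(1+|\sin2\alpha|)$. But for any $G\in\mathcal{P}_2(\mathbb{R}^2)$ the variance of $\theta\sharp G$ is the quadratic form $\theta^\top\Sigma_G\,\theta$, a smooth function of $\alpha$; hence no single $G$ can satisfy $\theta\sharp G=\mu^\star_\theta$ for almost every $\theta$. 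The paper's argument therefore has exactly the gap you anticipated, and your honest conclusion---that the lemma as stated needs additional hypotheses on $F_1,\dots,F_K$, and that the downstream use in Lemma~\ref{lemma:stability_SWbarycenter} would require a separate justification---is the right assessment.
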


\begin{proof}
 
The objective of the
 defining optimization problem for 
SW barycenter  can be written as:
\begin{align}
\mathcal{L}(\bar{F}) = \int_{\theta \in \mathbb{S}^{d-1}} L_\theta(\theta \sharp\bar{F}) \, \mathrm{d} \sigma(\theta), \text{ where}
\quad 
L_\theta (\theta \sharp \bar{F}) := \sum_{k=1}^K \pi_k W_p^p(\theta \sharp \bar{F}, \theta \sharp F_k),
\end{align}
and $\sigma(\theta)$ is the uniform measure on $\mathbb{S}^{d-1}$.  
Since the integrand $L(\theta \sharp \bar{F})$ is nonnegative, the integral is minimized if and only if for almost every $\theta$,
\begin{align}
\theta \sharp \bar{F} = \arg\min_{\nu \in \mathcal{P}_2(\mathbb{R})} L_\theta(\nu)=WB_p(\theta \sharp F_1,\ldots,\theta\sharp F_K,\pi_1,\ldots,\pi_K).
\end{align}

Otherwise, one could replace the projection along any positive-measure set of $\theta$ by its 1D minimizer, decreasing the integral and contradicting the minimality of $\bar{F}$. 

\end{proof}

(a) By the definition of SW and Lemma~\ref{lemma:sw_projection} and the fact that $Q_{\theta \sharp \bar{F}} = \sum_{k=1}^K \pi_k Q_{\theta \sharp F_{k}}$~\citep{bonneel2015sliced}, we have
    \begin{align}
         \SW_p^p(\bar{F},\bar{F}') &= \EE_{\theta \sim \mathcal{U}(\mathbb{S}^{d-1}) }[ W_p^p(\theta \sharp\bar{F},\theta \sharp \bar{F}' )] \nonumber \\
         &= \EE_{\theta \sim \mathcal{U}(\mathbb{S}^{d-1})}\left[ \int_0^1  |Q_{\theta \sharp\bar{F}}(t) - Q_{\theta \sharp\bar{F}'}(t) |^p \mathrm{d} t\right ] 
         \nonumber\\
         &= \EE_{\theta \sim \mathcal{U}(\mathbb{S}^{d-1})}\left[ \int_0^1 \left|\sum_{k=1}^K \pi_k Q_{\theta \sharp F_k}(t) - \sum_{k=1}^K \pi_k Q_{\theta \sharp F_k'}(t) \right|^p \mathrm{d} t\right ] \nonumber\\
          &\leq \EE_{\theta \sim \mathcal{U}(\mathbb{S}^{d-1})}\left[ \int_0^1 \sum_{k=1}^K \pi_k | Q_{\theta \sharp F_k}(t) -  Q_{\theta \sharp F_k'}(t) |^p \mathrm{d} t\right ] \nonumber \\
          &=  \sum_{k=1}^K \pi_k\EE_{\theta \sim \mathcal{U}(\mathbb{S}^{d-1})}\left[ \int_0^1  | Q_{\theta \sharp F_k}(t) -  Q_{\theta \sharp F_k'}(t) |^p \mathrm{d} t\right ]  \nonumber\\
          &= \sum_{k=1}^K\pi_k \SW_p^p(  F_k, F_k'),
    \end{align}
    where the inequality is due to Jensen's inequality, and the following equality is due to the Fubini's theorem.

(b)  Again, by the definition of SW and Lemma~\ref{lemma:sw_projection} and the fact that $Q_{\theta \sharp \bar{F}} = \sum_{k=1}^K \pi_k Q_{\theta \sharp F_{k}}$~\citep{bonneel2015sliced}, we have
    \begin{align}
         \SW_p^p(\bar{F},\bar{F}') &= \EE_{\theta \sim \mathcal{U}(\mathbb{S}^{d-1}) }[ W_p^p(\theta \sharp\bar{F},\theta \sharp \bar{F}' )]\nonumber \\
         &= \EE_{\theta \sim \mathcal{U}(\mathbb{S}^{d-1})}\left[ \int_0^1  |Q_{\theta \sharp\bar{F}}(t) - Q_{\theta \sharp\bar{F}'}(t) |^p \mathrm{d} t\right ] 
         \nonumber\\
         &= \EE_{\theta \sim \mathcal{U}(\mathbb{S}^{d-1})}\left[ \int_0^1 \left|\sum_{k=1}^K \pi_k Q_{\theta \sharp F_k}(t) - \sum_{k=1}^K \pi_k' Q_{\theta \sharp F_k'}(t) \right|^p \mathrm{d} t\right ]  \nonumber\\
         &\leq  \EE_{\theta \sim \mathcal{U}(\mathbb{S}^{d-1})}\left[ \int_0^1  \left(\left| \sum_{k=1}^K \pi_k (Q_{\theta \sharp F_k}(t)  -   Q_{\theta \sharp F_k'}(t)) \right| \right.\right. \nonumber\\& \quad \left.\left. \quad + \left|\sum_{k=1}^K (\pi_k'-\pi_k)Q_{\theta \sharp F_k'}(t)\right|\right)^p \mathrm{d} t\right ] \nonumber \\
         &\leq 2^{p-1}\EE_{\theta \sim \mathcal{U}(\mathbb{S}^{d-1})}\left[ \int_0^1  \left(\left| \sum_{k=1}^K \pi_k (Q_{\theta \sharp F_k}(t) -   Q_{\theta \sharp F_k'}(t)) \right|\right)^p \mathrm{d} t\right ]  \nonumber \\
         &\quad +\EE_{\theta \sim \mathcal{U}(\mathbb{S}^{d-1})}\left[ \int_0^1  \left(\left|\sum_{k=1}^K (\pi_k'-\pi_k)Q_{\theta \sharp F_k'}(t)\right|\right)^p \mathrm{d} t\right ] \label{eq:stabilty_weight_1} .
    \end{align}
From Lemma~\ref{lemma:stability_SWbarycenter}, we know that
\begin{align}
    \EE_{\theta \sim \mathcal{U}(\mathbb{S}^{d-1})}\left[ \int_0^1  \left(\left| \sum_{k=1}^K \pi_k (Q_{\theta \sharp F_k}(t) -   Q_{\theta \sharp F_k'}(t)) \right|\right)^p \mathrm{d} t\right ] &\leq \sum_{k=1}^K \pi_k \SW_p^p(F_k,F_{k}') \nonumber \\
    &\leq \max_{k\in \{1,\ldots,K\}} \SW_p^p(F_k,F_{k}') \label{eq:stabilty_weight_2},
\end{align}
since $\pi \in \Delta^K$.
For the second term, we have
\begin{align}
&\EE_{\theta \sim \mathcal{U}(\mathbb{S}^{d-1})}\left[ \int_0^1 \left|\sum_{k=1}^K (\pi_k'-\pi_k)\, Q_{\theta \sharp F_k'}(t)\right|^p \mathrm{d}t \right] \nonumber \\
& \le \EE_{\theta \sim \mathcal{U}(\mathbb{S}^{d-1})}\left[ \int_0^1 \left( \sum_{k=1}^K |\pi_k'-\pi_k|\, |Q_{\theta \sharp F_k'}(t)| \right)^p \mathrm{d}t \right] \nonumber \\
& \le \EE_{\theta \sim \mathcal{U}(\mathbb{S}^{d-1})} \left[ \left( \sum_{k=1}^K |\pi_k'-\pi_k| \left( \int_0^1 |Q_{\theta \sharp F_k'}(t)|^p \mathrm{d}t \right)^{1/p} \right)^p \right] \nonumber \\
& \le \left( \sum_{k=1}^K |\pi_k'-\pi_k| \, \left( \EE_{\theta \sim \mathcal{U}(\mathbb{S}^{d-1})} \int_0^1 |Q_{\theta \sharp F_k'}(t)|^p \mathrm{d}t \right)^{1/p} \right)^p \nonumber \\
& \le \|\pi'-\pi\|_p^p \, M, \label{eq:stabilty_weight_3}
\end{align}
where 
\[
 M = \left( \sum_{k=1}^K \Big( \EE_{\theta \sim \mathcal{U}(\mathbb{S}^{d-1})} \int_0^1 |Q_{\theta \sharp F_k'}(t)|^p \, \mathrm{d}t \Big)^{1/(p-1)} \right)^{(p-1)},
\]
the second inequality is due to Minkowski's inequality, and the third inequality is due to Jensen's inequality. From~\eqref{eq:stabilty_weight_1}, ~\eqref{eq:stabilty_weight_2}, and~\eqref{eq:stabilty_weight_3}, we obtain the desired bound and complete the proof.

\subsection{Proof of Theorem~\ref{theorem:sample_complexity}}
\label{subsec:proof:theorem:sample_complexity}
Using the mean value theorem for the exponential function, we have
\begin{align}
    e^x-e^y = e^c (x-y),
\end{align}
for any $x,y \in \mathbb{R}$ and some $c \in [x,y]$. Therefore, we have
\begin{align}
    |e^x-e^y| \leq \max\{e^x,e^y\} |x-y|.
\end{align}
Apply the above inequality to our case, we obtain:
 \begin{align}
        &\EE\left[\left|\exp \{-w  \SW_p^p(\bar{F},G)\} - \exp \{-w  \SW_p^p(\hat{\bar{F}},\hat{G})\} \right|\right]   \nonumber\\
        &\leq \EE\left[\max\{\exp\{-w  \SW_p^p(\bar{F},G)\},\exp\{-w  \SW_p^p(\hat{\bar{F}},\hat{G})\}\} \left| w(  \SW_p^p(\bar{F},G)-  \SW_p^p(\hat{\bar{F}},\hat{G}))\right| \right] \nonumber\\
        &\leq w\EE\left[\left|  \SW_p^p(\bar{F},G)-  \SW_p^p(\hat{\bar{F}},\hat{G})\right| \right], \label{eq:sample_1}
    \end{align}
where the last inequality is due to the fact that $\SW_p^p(\bar{F},G)\geq 0$ and $\SW_p^p(\hat{\bar{F}},\hat{G})\geq 0$ which implies $\exp(-w  \SW_p^p(\bar{F},G))\leq 1$ and $\exp(-w  \SW_p^p(\hat{\bar{F}},\hat{G}))\leq 1$.  Using Jensen's inequality, we have:
\begin{align}
    &w\EE\left[\left|\exp \{-w  \SW_p^p(\bar{F},G)\} - \exp \{-w  \SW_p^p(\hat{\bar{F}},\hat{G})\} \right|\right]  \nonumber \\
    &\leq w\EE\left[\left|  \SW_p^p(\bar{F},G)-  \SW_p^p(\hat{\bar{F}},\hat{G})\right| \right] \nonumber \\
    &=w \EE\left[\left| \mathbb{E}_\theta \left[  W_p^p(\theta \sharp \bar{F},\theta \sharp G)-  W_p^p(\theta \sharp \hat{\bar{F}},\theta \sharp \hat{G})\right]\right| \right] \nonumber \\
    &\leq w\EE\left[ \mathbb{E}_\theta \left[ \left| W_p^p(\theta \sharp \bar{F},\theta \sharp G)-  W_p^p(\theta \sharp \hat{\bar{F}},\theta \sharp \hat{G})\right| \right]\right].
\end{align}
Using Lemma 4 in~\citet{goldfeld2024statistical} with compact support with diameter $R>0$, we further have:
\begin{align}
   & \EE\left[ \mathbb{E}_\theta \left[ \left| W_p^p(\theta \sharp \bar{F},\theta \sharp G)-  W_p^p(\theta \sharp \hat{\bar{F}},\theta \sharp \hat{G})\right| \right]\right] \nonumber \\
   &\leq C_{p,R}\EE\left[ \mathbb{E}_\theta \left[ \left| W_1(\theta \sharp \bar{F},\theta \sharp \hat{\bar{F}})+  W_1( \theta \sharp G,\theta \sharp \hat{G})\right| \right]\right]  \nonumber \\
   &= C_{p,R}\EE\left[ \SW_1(\bar{F},\hat{\bar{F}})+  \SW_1( G, \hat{G})\right]  \label{eq:sample_2}
\end{align}
Using Lemma~\ref{lemma:stability_SWbarycenter}, we further have:
\begin{align}
    & C_{p,R}\EE\left[ \SW_1(\bar{F},\hat{\bar{F}})+  \SW_1( G, \hat{G})\right] \nonumber \\
    & \leq C_{p,R}\EE\left[ \sum_{k=1}^K \pi_k \SW_1(F_k,\hat{F}_k)+  \SW_1( G, \hat{G})\right] \nonumber \\
    &=  C_{p,R} \sum_{k=1}^K \pi_k\EE\left[ \SW_1(F_k,\hat{F}_k) \right]+  \EE\left[\SW_1( G, \hat{G})\right]  \label{eq:sample_3}
\end{align}
 which turns into the sample complexity of $\SW_1$. The sample complexity of $\SW_1$ has been investigated in~\citet{nadjahi2020statistical,nguyen2021distributional,manole2022minimax,nietert2022statistical,boedihardjo2025sharp}. We provide a proof based on compact support:

\begin{align}
    \EE\left[\SW_1( G, \hat{G})\right]  &= \EE\left[\EE_\theta \left[W_1( \theta \sharp G, \theta \sharp \hat{G})\right]\right] \nonumber  \\
    &= \EE\left[\EE_\theta \left[\int_\mathbb{R}| F_{\theta \sharp G}(x) -  F_{\theta \sharp \hat{G}}(x) |\mathrm{dx}\right] \right] \nonumber \\
    &\leq R \EE\left[\EE_\theta \left[\sup_x| F_{\theta \sharp G}(x) -  F_{\theta \sharp \hat{G}}(x) |\right] \right] \nonumber \\
    &=  R \EE_\theta \left[\EE\left[\sup_x| F_{\theta \sharp G}(x) -  F_{\theta \sharp \hat{G}}(x) |\right] \right],\label{eq:sample_4}
\end{align}
by Fubini's theorem. From Dvoretzky--Kiefer--Wolfowitz's inequality, we know that
\begin{align}
    \mathbb{P}(\sup_x| F_{\theta \sharp G}(x) -  F_{\theta \sharp \hat{G}}(x) |>\epsilon) \leq 2\exp(-2n\epsilon^2),
\end{align}
 for any $\epsilon>0$ and $\theta \in \mathbb{S}^{d-1}$. Therefore,
 \begin{align}
     \EE\left[\sup_x| F_{\theta \sharp G}(x) -  F_{\theta \sharp \hat{G}}(x) |\right]& = \int_0^\infty  \mathbb{P}(\sup_x| F_{\theta \sharp G}(x) -  F_{\theta \sharp \hat{G}}(x) |>\epsilon) \mathrm{ d}\epsilon \nonumber \\
     &\leq \int_0^\infty 2\exp(-2n\epsilon^2) \mathrm{d}\epsilon \nonumber \\
     &= \sqrt{\frac{\pi}{2n}},
 \end{align}
 due to the Gaussian integral. By  Fubini's theorem,
 \begin{align}
     \EE_\theta \left[\EE\left[\sup_x| F_{\theta \sharp G}(x) -  F_{\theta \sharp \hat{G}}(x) |\right] \right] \leq \sqrt{\frac{\pi}{2n}}\label{eq:sample_5}
 \end{align}

From~\eqref{eq:sample_1},~\eqref{eq:sample_2},~\eqref{eq:sample_3},~\eqref{eq:sample_4}, and~\eqref{eq:sample_5}, we obtain
\begin{align}
    \EE\left[\left|\exp (-w  \SW_p^p(\bar{F},G)) - \exp (-w  \SW_p^p(\hat{\bar{F}},\hat{G})) \right|\right] \leq C_{p,R.w} \frac{1 }{\sqrt{n}},
\end{align}
for a constant $C_{p,R,w}$ depends on $p,w,R$, which completes the proof.

\subsection{Proof of Theorem~\ref{theorem:posterior_consistency}}
\label{subsec:proof:theorem:posterior_consistency}

We first establish a modulus continuity  for  $ \big| \SW_p^p(\Gt(\phi,\pi), G) - \SW_p^p(\Gt(\phi',\pi'), G) \big|$, which is later used for proving uniform law of large numbers for the empirical risk.
\begin{lemma}
\label{lemma:lipchitz}
Under Assumptions~\ref{assumption:secondmoment} and~\ref{assumption:continuity},  we have
\begin{align}
   \big| \SW_p^p(\Gt(\phi,\pi), G) - \SW_p^p(\Gt(\phi',\pi'), G) \big|
    \leq \gamma(\|\phi - \phi'\|_p^p+\|\pi-\pi'\|_p^p),
\end{align}
for a  modulus of continuity $\gamma:\Re_+\to \Re_+$ such that $\lim_{t\to 0}\gamma(t)=0$.
\end{lemma}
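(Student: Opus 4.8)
The plan is to reduce the difference of $p$-th powers to a single SW distance between the two fitted barycenters and then invoke the stability Lemma~\ref{lemma:stability_SWbarycenter}(b). Write $a = \SW_p(\Gt(\phi,\pi),G)$ and $b = \SW_p(\Gt(\phi',\pi'),G)$, so the quantity of interest is $|a^p - b^p|$. Since $\SW_p$ is a metric on $\PP_p(\Re^d)$, the reverse triangle inequality gives $|a-b| \le \SW_p\big(\Gt(\phi,\pi),\Gt(\phi',\pi')\big)$. To pass from $|a-b|$ to $|a^p-b^p|$, I would use that $x\mapsto x^p$ is Lipschitz on $[0,D]$ with constant $pD^{p-1}$ (as $p\ge 1$), so that $|a^p-b^p|\le pD^{p-1}|a-b|$ whenever $a,b\le D$. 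This reduces everything to (i) a uniform bound $D$ on the SW distances to $G$, and (ii) a bound on the SWB perturbation.

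For (i), I would bound $D$ uniformly over $(\phi,\pi)$ and over $(F_1,\ldots,F_K,G)\in\mathrm{supp}(P)$. By Lemma~\ref{lemma:sw_projection} the projected barycenter has quantile $Q_{\theta\sharp\Gt(\phi,\pi)} = \sum_k \pi_k\, Q_{\theta\sharp(f_{k,\phi_k}\sharp F_k)}$, so Jensen's inequality gives $\SW_p^p(\delta_0,\Gt(\phi,\pi)) \le \sum_k \pi_k\, \SW_p^p\big(\delta_0, f_{k,\phi_k}\sharp F_k\big)$, and each term is controlled by $\EE_{X\sim F_k}\|f_{k,\phi_k}(X)\|_p^p \le C$ from Assumption~\ref{assumption:continuity}. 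Combining this with $\SW_p(\delta_0,G)$ bounded through $\EE_{Y\sim G}\|Y\|_p^p\le C_G$ (Assumption~\ref{assumption:secondmoment}) and the triangle inequality $\SW_p(\cdot,G)\le \SW_p(\cdot,\delta_0)+\SW_p(\delta_0,G)$ yields a finite $D$ not depending on the parameters.

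For (ii), set $\tilde F_k = f_{k,\phi_k}\sharp F_k$ and $\tilde F'_k = f_{k,\phi'_k}\sharp F_k$. Lemma~\ref{lemma:stability_SWbarycenter}(b) gives
\[
\SW_p^p\big(\Gt(\phi,\pi),\Gt(\phi',\pi')\big) \le 2^{p-1}\Big(\textstyle\sum_{k=1}^K \pi_k\,\SW_p^p(\tilde F_k,\tilde F'_k) + M\,\|\pi-\pi'\|_p^p\Big).
\]
For each $k$, I would use the standard fact $\SW_p^p\le W_p^p$ (projections are $1$-Lipschitz, hence do not increase $W_p$) together with the synchronous coupling $X\mapsto(f_{k,\phi_k}(X),f_{k,\phi'_k}(X))$, which gives $\SW_p^p(\tilde F_k,\tilde F'_k)\le W_p^p(\tilde F_k,\tilde F'_k)\le \EE_{X\sim F_k}\|f_{k,\phi_k}(X)-f_{k,\phi'_k}(X)\|_p^p \le \omega_k(\|\phi_k-\phi'_k\|_p^p)$ by Assumption~\ref{assumption:continuity}. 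The uniform moment bound $\EE_{X\sim F_k}\|f_{k,\phi'_k}(X)\|_p^p\le C$ guarantees that the constant $M$ (which by the lemma depends on the moments of the primed marginals) can be taken uniform. Using $\pi_k\le 1$, $\|\phi_k-\phi'_k\|_p^p\le\|\phi-\phi'\|_p^p$, and writing $t=\|\phi-\phi'\|_p^p+\|\pi-\pi'\|_p^p$, I would set $\gamma(t) = pD^{p-1}\big[2^{p-1}\big(\sum_{k=1}^K \omega_k(t) + M t\big)\big]^{1/p}$, which tends to $0$ as $t\to 0$ because each $\omega_k(t)\to0$.

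The main obstacle is establishing the \emph{uniformity} of the constants $D$ and $M$ across the whole parameter space and $\mathrm{supp}(P)$; this is exactly where the two moment assumptions are essential, since without them the MVT/Lipschitz step and the stability constant would be parameter-dependent. A secondary, purely technical point is that the assumption only states $\lim_{t\to0}\omega_k(t)=0$, so to use monotonicity in the final assembly I would replace each $\omega_k$ by its nondecreasing envelope $\tilde\omega_k(t)=\sup_{s\le t}\omega_k(s)$, which preserves the vanishing limit at $0$.
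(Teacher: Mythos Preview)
Your proposal is correct and follows essentially the same route as the paper: reduce $|a^p-b^p|$ to $|a-b|$ via a mean-value/Lipschitz step, control $|a-b|$ by $\SW_p(\Gt(\phi,\pi),\Gt(\phi',\pi'))$ through the triangle inequality, apply Lemma~\ref{lemma:stability_SWbarycenter}(b), and bound each $\SW_p^p(f_{k,\phi_k}\sharp F_k,f_{k,\phi'_k}\sharp F_k)$ with the identity coupling and Assumption~\ref{assumption:continuity}, while Assumptions~\ref{assumption:secondmoment} and~\ref{assumption:continuity} supply the uniform constants $D$ and $M$. The only cosmetic differences are that the paper bounds $\SW_p^p(\Gt,G)$ directly (you go through $\delta_0$ and triangle), and the paper keeps $\max_k\omega_k$ where you keep $\sum_k\omega_k$; your observation about replacing each $\omega_k$ by its nondecreasing envelope is a detail the paper uses implicitly without comment.
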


\begin{proof}
Using the inequality $|a^p-b^p| \leq p\max\{a^{p-1},b^{p-1}\}|a-b|$, we have
\begin{align}
    &| \SW_p^p(\Gt(\phi,\pi), G) - \SW_p^p(\Gt(\phi',\pi'), G)| \nonumber\\ &  \leq  p\max\{\SW_p(\Gt(\phi,\pi),G)^{p-1},\SW_p(\Gt(\phi',\pi'), G)^{p-1} \}  \nonumber\\& \quad| \SW_p(\Gt(\phi,\pi), G) - \SW_p(\Gt(\phi',\pi'), G)|.
\end{align}
    We first bound the difference $| \SW_p(\Gt(\phi,\pi), G) - \SW_p\Gt(\phi',\pi'), G) \big|$ for $\phi,\phi'\in \Phi$. Using the triangle inequality of $\SW_p$, we have:
    \begin{align}
      \label{eq:SW_diff}
      &| \SW_p(\Gt(\phi,\pi), G) - \SW_p(\Gt(\phi',\pi'), G)|  \nonumber \\
      & \leq | \SW_p(\Gt(\phi,\pi), \Gt(\phi',\pi'))+ \SW_p(\Gt(\phi',\pi'), G) - \SW_p(\Gt(\phi',\pi'), G)| \nonumber\\
      &= \SW_p(\Gt(\phi,\pi), \Gt(\phi',\pi'))  \nonumber
      \\& \leq \left(2^{p-1}\max_{k \in \{1,\ldots,K\}}\SW_p^p(f_{k,\phi_k} \sharp F_{k}, f_{k,\phi_k'} \sharp F_{k}) + M \|\pi -\pi'\|_p^p\right)^{\frac{1}{p}},
    \end{align}
    where the last inequality is due to
    Lemma~\ref{lemma:stability_SWbarycenter} (b). We further have: 
\begin{align}
    &\SW_p^p(f_{k,\phi_k} \sharp F_{k}, f_{k,\phi_k'} \sharp F_{k}) \nonumber \\ &= \EE_{\theta \sim \mathcal{U}(\mathbb{S}^{d-1})}[W_p^p(\theta \sharp f_{k,\phi_k} \sharp F_{k}, \theta \sharp f_{k,\phi_k'} \sharp F_{k})] \nonumber \\
    &=\EE_{\theta \sim \mathcal{U}(\mathbb{S}^{d-1})}\left[\inf_{\pi \in \Pi(F_k,F_k)} \EE_{(X,Y) \sim \pi}[|\theta^\top f_{k,\phi_k}(X)-\theta^\top f_{k,\phi_k'}(Y)|^p]\right] \nonumber \\
    &\leq \EE_{\theta \sim \mathcal{U}(\mathbb{S}^{d-1})}\left[ \EE_{(X,Y) \sim (Id,Id)\sharp F_k }[|\theta^\top f_{k,\phi_k}(X)-\theta^\top f_{k,\phi_k'}(Y)|^p]\right] \nonumber  \\
\nonumber     &= \EE_{\theta \sim \mathcal{U}(\mathbb{S}^{d-1})}\left[ \EE_{X\sim F_k }[|\theta^\top f_{k,\phi_k}(X)-\theta^\top f_{k,\phi_k'}(X)|^p]\right] \nonumber  \\
    &\leq \EE_{\theta \sim \mathcal{U}(\mathbb{S}^{d-1})}\left[ \EE_{X\sim F_k }[\| f_{k,\phi_k}(X)- f_{k,\phi_k'}(X)\|_p^p]\right]  \nonumber \\
    &= \EE_{X\sim F_k }[\| f_{k,\phi_k}(X)- f_{k,\phi_k'}(X)\|_p^p \leq \omega_k(\|\phi_k-\phi_k'\|_p^p),
\end{align}
where the second inequality is due to the Cauchy–Schwarz's inequality and $\|\theta\|_2^2 =1$, and the last inequality is due to Assumption~\ref{assumption:continuity}. Therefore, we have:
\begin{align}
    &| \SW_p(\Gt(\phi,\pi), G) - \SW_p(\Gt(\phi',\pi'), G)| \nonumber \\ &\leq \left(2^{p-1}\max_{k \in \{1,\ldots,K\}}\omega_k(\|\phi_k-\phi_k'\|_p^p) + M \|\pi -\pi'\|_p^p\right)^{\frac{1}{p}} \nonumber \\
    &\leq \left(2^{p-1}\max_{k \in \{1,\ldots,K\}}\omega_k(\|\phi-\phi'\|_p^p) + M \|\pi -\pi'\|_p^p\right)^{\frac{1}{p}}\nonumber\\
    &\leq \left(2^{p-1}\max_{k \in \{1,\ldots,K\}}\omega_k(\|\phi-\phi'\|_p^p +\|\pi -\pi'\|_p^p) + M(\|\phi-\phi'\|_p^p+ \|\pi -\pi'\|_p^p)\right)^{\frac{1}{p}}\nonumber 
\end{align} 
where $\|\phi - \phi'\|_p^p =\sum_{k=1}^K \|\phi_k - \phi_k'\|_p^p$.

We now bound $\max\{\SW_p(\Gt(\phi,\pi),G)^{p-1},\SW_p(\Gt(\phi',\pi'), G)^{p-1} \}$. We have
\begin{align}
 \SW_p^p(\Gt(\phi,\pi), G) &=\EE_{\theta \sim \mathcal{U}(\mathbb{S}^{d-1})}\left[ \int_0^1  |Q_{\theta \sharp\Gt(\phi,\pi)}(t) - Q_{\theta \sharp G}(t) |^p \mathrm{d} t\right ]  \nonumber \\
 &=\EE_{\theta \sim \mathcal{U}(\mathbb{S}^{d-1})}\left[ \int_0^1  \left|\sum_{k=1}^K \pi_k Q_{\theta \sharp f_{k,\phi_k} \sharp F_k}(t) - Q_{\theta \sharp G}(t) \right|^p \mathrm{d} t\right ] \nonumber \\
 &=\EE_{\theta \sim \mathcal{U}(\mathbb{S}^{d-1})}\left[ \int_0^1  \left|\sum_{k=1}^K \pi_k Q_{\theta \sharp f_{k,\phi_k} \sharp F_k}(t) - \ Q_{\theta \sharp G}(t)) \right|^p \mathrm{d} t\right ] \nonumber \\
 &\leq \EE_{\theta \sim \mathcal{U}(\mathbb{S}^{d-1})}\left[ \int_0^1  \sum_{k=1}^K \pi_k\left| Q_{\theta \sharp f_{k,\phi_k} \sharp F_k}(t) - \ Q_{\theta \sharp G}(t)) \right|^p \mathrm{d} t\right ]  \nonumber \\
 &= \sum_{k=1}^K \pi_k\EE_{\theta \sim \mathcal{U}(\mathbb{S}^{d-1})}\left[ \int_0^1  \left| Q_{\theta \sharp f_{k,\phi_k} \sharp F_k}(t) - \ Q_{\theta \sharp G}(t)) \right|^p \mathrm{d} t\right ]  \nonumber \\
 &=  \sum_{k=1}^K \pi_k  \SW_p^p (f_{k,\phi_k} \sharp F_k,G) \nonumber \\
 &\leq \max_{k \in \{1,\ldots,K\}}\SW_p^p (f_{k,\phi_k} \sharp F_k,G)  \nonumber \\
 &=\max_{k \in \{1,\ldots,K\}} \EE_{\theta \sim \mathcal{U}(\mathbb{S}^{d-1})}\left[\inf_{\pi \in \Pi(F_k,G)} \EE_{(X,Y) \sim \pi } [  \|\theta^\top f_{k,\phi_k}(X)- \theta^\top Y\|_p^p]\right] \nonumber \\
 &\leq \max_{k \in \{1,\ldots,K\}} \EE_{(X,Y) \sim \pi } [ \|f_{k,\phi_k}(X)\|_p^p+\|Y\|_p^p]  \nonumber \\
 &\leq C + C_G.
\end{align}
With similar reasoning $\SW_p(\Gt(\phi',\pi'), G)$, we have
\begin{align}
    \max\{\SW_p(\Gt(\phi,\pi),G)^{p-1},\SW_p(\Gt(\phi',\pi'), G)^{ p-1} \} \leq (C+C_G)^{(p-1)/p}.
\end{align}
As a result, we have
\begin{align}
    & |\SW_p^p(\Gt(\phi,\pi), G) - \SW_p^p(\Gt(\phi',\pi'), G)| \nonumber\\&\leq p (C+C_G)^{(p-1)/p}  \nonumber\\ &\left(\max_{k\in \{1,\ldots,K\}} \omega_k(\|\phi-\phi'\|_p^p+\|\pi-\pi'\|_p^p) + M( \|\phi-\phi'\|_p^p+\|\pi-\pi'\|_p^p )\right)^{\frac{1}{p}} \nonumber \\
    &=\gamma(\|\phi-\phi'\|_p^p+\|\pi-\pi'\|_p^p),
\end{align}
where $\gamma(t) =p (C+C_G)^{(p-1)/p} \left(\max_{k\in \{1,\ldots,K\}} \omega_k(t) + M t \right)^{\frac{1}{p}}$ which satisfies $\lim_{t\to 0}\gamma(t)=0$. We conclude the proof here.
\end{proof}


\begin{lemma}[Uniform Law of Large Numbers]
\label{lemma:uniform_law}
Under assumptions~\ref{assumption:compact}, ~\ref{assumption:secondmoment}, and~\ref{assumption:continuity}, 
\begin{align}
    \sup_{(\phi,\pi) \in \Phi \times \Delta^K} |R_N(\phi,\pi) - R(\phi,\pi)| \overset{a.s}{\to} 0 \quad \text{as } N \to \infty.
\end{align}
\end{lemma}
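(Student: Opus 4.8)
The plan is to establish uniform convergence via the classical recipe that combines a pointwise strong law of large numbers with stochastic equicontinuity over a compact parameter set. The decisive ingredient, a \emph{data-uniform} modulus of continuity for the summands, is already supplied by Lemma~\ref{lemma:lipchitz}, so the remaining work is a routine covering argument. First I would record two consequences of the hypotheses. From the chain of bounds in the proof of Lemma~\ref{lemma:lipchitz} (using Assumptions~\ref{assumption:secondmoment} and~\ref{assumption:continuity}) one obtains the uniform bound $\SW_p^p(\Gt(\phi,\pi),G)\le C+C_G$ for every $(\phi,\pi)\in\Phi\times\Delta^K$ and every $(F_1,\ldots,F_K,G)\in\mathrm{supp}(P)$. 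In particular each summand $\SW_p^p(\Gt_i(\phi,\pi),G_i)$ is bounded, hence integrable, so $R(\phi,\pi)$ is finite and, applying Kolmogorov's SLLN to the i.i.d.\ sequence, $R_N(\phi,\pi)\xrightarrow{a.s.}R(\phi,\pi)$ for each fixed $(\phi,\pi)$.

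Next I would transfer the modulus of continuity of Lemma~\ref{lemma:lipchitz} to the two risk functions. Because the bound there is deterministic and holds uniformly over $\mathrm{supp}(P)$, averaging it over $i=1,\ldots,N$ gives $|R_N(\phi,\pi)-R_N(\phi',\pi')|\le \gamma(\|\phi-\phi'\|_p^p+\|\pi-\pi'\|_p^p)$ for every realization and every $N$, and taking expectations yields the identical bound for $R$. Thus the family $\{R_N\}_N$ together with $R$ is equicontinuous with the common modulus $\gamma$, where $\gamma(t)\to 0$ as $t\to 0$.

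Finally I would run the covering argument. The set $\Phi\times\Delta^K$ is compact, since Assumption~\ref{assumption:compact} makes $\Phi$ compact and $\Delta^K$ is a closed bounded simplex. Fix $\epsilon>0$ and choose $\delta>0$ with $\gamma(t)<\epsilon/3$ for $t<\delta$; cover the parameter space by finitely many balls small enough that $\|\phi-\phi'\|_p^p+\|\pi-\pi'\|_p^p<\delta$ whenever two points share a ball, with centers $(\phi^{(j)},\pi^{(j)})$, $j=1,\ldots,m$. For arbitrary $(\phi,\pi)$ lying in the ball with center $(\phi^{(j)},\pi^{(j)})$, the triangle inequality gives
\begin{align}
|R_N(\phi,\pi)-R(\phi,\pi)|
&\le |R_N(\phi,\pi)-R_N(\phi^{(j)},\pi^{(j)})|
+ |R_N(\phi^{(j)},\pi^{(j)})-R(\phi^{(j)},\pi^{(j)})| \nonumber\\
&\quad + |R(\phi^{(j)},\pi^{(j)})-R(\phi,\pi)|,
\end{align}
where the first and third terms are below $\epsilon/3$ by equicontinuity. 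The middle term is handled only at the finitely many centers: on the intersection of the $m$ corresponding almost-sure events from the pointwise SLLN, $\max_j|R_N(\phi^{(j)},\pi^{(j)})-R(\phi^{(j)},\pi^{(j)})|<\epsilon/3$ for all large $N$. Taking the supremum over $(\phi,\pi)$ then forces $\sup|R_N-R|<\epsilon$ eventually, almost surely, and letting $\epsilon\to 0$ completes the argument.

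The main obstacle to watch is the interchange of ``supremum over an uncountable set'' with ``almost-sure convergence'': this is precisely why the finite net is essential, so that only finitely many SLLN null sets are combined into a single null event. Everything else reduces to the uniform equicontinuity furnished by Lemma~\ref{lemma:lipchitz}, which already absorbs the analytically delicate part, namely the control of the stability constant $M$ and of the moments uniformly over $\mathrm{supp}(P)$.
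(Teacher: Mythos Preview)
Your proposal is correct and follows essentially the same route as the paper's proof: both exploit compactness of $\Phi\times\Delta^K$ to pass to a finite $\delta$-net, use Lemma~\ref{lemma:lipchitz} to obtain deterministic equicontinuity of $R_N$ and $R$, and then combine the pointwise SLLN at the net points with the triangle inequality. Your write-up is in fact slightly more careful in making integrability explicit via the bound $\SW_p^p(\Gt(\phi,\pi),G)\le C+C_G$, which the paper invokes only implicitly.
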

\begin{proof}
Since $\Phi$ is compact by Assumption~\ref{assumption:compact} and the simplex $\Delta^K$ is already compact, there exists a $\delta$-net $\{ (\phi^{(1)},\pi^{(1)}), \ldots, (\phi^{(H)},\pi^{(H)})\} \subset \Phi \times \Delta^K$ such that for any $(\phi,\pi) \in \Phi \times \Delta^K$, there exists $(\phi^{(h)},\pi^{(h)}) \in \{ (\phi^{(1)},\pi^{(1)}), \ldots, (\phi^{(H)},\pi^{(H)})\} $ with 
\begin{align}
\|\phi - \phi^{(h)}\|_p^p + \|\pi -  \pi^{(h)}\|_p^p\leq \delta.
\end{align}

Using Lemma~\ref{lemma:lipchitz}, we have for each $i = 1, \ldots, N$:
\begin{align}
    \big| \SW_p^p(\Gt_i(\phi,\pi), G_i) - \SW_p^p(\Gt_i(\phi^{(h)},\pi^{(h)}), G_i) \big|
    \leq \gamma(\delta),
\end{align}
where $\lim_{\delta\to 0}\gamma(\delta)=0$.
Averaging over $i$ and taking expectation, this implies
\begin{align}
   &\frac{1}{N}\sum_{i=1}^N  \big| \SW_p^p(\Gt_i(\phi,\pi), G_i) - \SW_p^p(\Gt_i(\phi^{(h)},\pi^{(h)}), G_i) \big|
    \leq \gamma(\delta), \text{ and} \\
   & \EE\left[ \big| \SW_p^p(\Gt(\phi,\pi), G) - \SW_p^p(\Gt(\phi^{(h)},\pi^{(h)}), G) \big| \right]\leq \gamma(\delta)
\end{align}
By  Jensen's inequality, we have:
\begin{align}
   &\left|\frac{1}{N}\sum_{i=1}^N   \SW_p^p(\Gt_i(\phi,\pi), G_i) - \frac{1}{N}\sum_{i=1}^N \SW_p^p(\Gt_i(\phi^{(h)},\pi^{(h)}), G_i) 
    \right|\leq \gamma(\delta), \text{ and} \\
   & \left|\EE\left[ \SW_p^p(\Gt(\phi,\pi), G) - \SW_p^p(\Gt(\phi^{(h)},\pi^{(h)}), G) \right] \right|\leq \gamma(\delta)
\end{align}
which is equivalent to
\begin{align}
    |R_N(\phi,\pi) - R_N(\phi^{(h)},\pi^{(h)})| \leq \gamma(\delta), \text{ and} \quad |R(\phi,\pi) - R(\phi^{(h)},\pi^{(h)})| \leq \gamma(\delta).
 \end{align}
Using triangle inequality, we have:
\begin{align}
    &\sup_{(\phi,\pi) \in \Phi\times \Delta^K} |R_N(\phi,\pi) - R(\phi,\pi)|  \nonumber \\
    &\leq \sup_{(\phi,\pi) \in \Phi\times \Delta^K}  \left(|R_N(\phi,\pi) -R_N(\phi^{(h)},\pi^{(h)})| +|R_N(\phi^{(h)},\pi^{(h)}) - R(\phi,\pi)|  \nonumber \right)\\ 
    &\leq \nonumber  \sup_{(\phi,\pi) \in \Phi\times \Delta^K}  \left( |R_N(\phi,\pi) -R_N(\phi^{(h)},\pi^{(h)})|+ |R_N(\phi^{(h)},\pi^{(h)}) - R(\phi^{(h)},\pi^{(h)})|  \right. \\& \left.\quad+|R(\phi^{(h)},\pi^{(h)}) - R(\phi,\pi)| \right) \nonumber \\
    &\leq \max_{h = 1,\ldots,H} |R_N(\phi^{(h)},\pi^{(h)}) - R(\phi^{(h)},\pi^{(h)})| + 2 \gamma( \delta). \label{eq:key-net-bound}
\end{align}


Let
$
Z_N \;:=\; \sup_{(\phi,\pi)\in\Phi\times \Delta^K}\,|R_N(\phi,\pi)-R(\phi,\pi)| $ and $
A_N^\varepsilon \;:=\; \{\,Z_N>\varepsilon\,\},\ \ \varepsilon>0 .
$
Almost sure convergence $Z_N\overset{a.s}{\to} 0$ is equivalent to
\begin{align}
\forall\,\varepsilon>0:\quad \mathbb{P}\!\left(\limsup_{N\to\infty} A_N^\varepsilon\right)=0,
\end{align}
where $\displaystyle \limsup_{N\to\infty} A_N^\varepsilon:=\bigcap_{m=1}^\infty\bigcup_{N\ge m}A_N^\varepsilon$. For a fixed $\varepsilon>0$,  we choose $\delta>0$ so small that $2\gamma(\delta)<\varepsilon/2$. Then \eqref{eq:key-net-bound} implies the event inclusion
\begin{align}
A_N^\varepsilon
\;\subseteq\;
\bigcup_{h=1}^H
\left\{\,|R_N(\phi^{(h)},\pi^{(h)})-R(\phi^{(h)},\pi^{(h)})|>\frac{\varepsilon}{2}\right\}
=: \bigcup_{h=1}^H B_{N,h}^{\varepsilon/2}.
\end{align}
Therefore,
\begin{align}
\limsup_{N\to\infty} A_N^\varepsilon
\;\subseteq\;
\bigcup_{h=1}^H \left( \limsup_{N\to\infty} B_{N,h}^{\varepsilon/2} \right).
\end{align}

By the strong law of large numbers (using Assumption~\ref{assumption:secondmoment}) applied to each fixed $(\phi^{(h)},\pi^{(h)})$,
\begin{align}
|R_N(\phi^{(h)},\pi^{(h)})-R(\phi^{(h)},\pi^{(h)})|\overset{a.s}{\to}0,
\end{align}
hence, for every $h$ and every $\eta>0$,
\begin{align}
\mathbb{P}\!\left(\limsup_{N\to\infty} \{\,|R_N(\phi^{(h)},\pi^{(h)})-R(\phi^{(h)},\pi^{(h)})|>\eta\,\}\right)=0.
\end{align}
Taking $\eta=\varepsilon/2$ and using the finite union bound, we get
\begin{align}
\mathbb{P}\!\left(\limsup_{N\to\infty} A_N^\varepsilon\right)
\;\le\;
\sum_{k=1}^K
\mathbb{P}\!\left(\limsup_{N\to\infty} B_{N,k}^{\varepsilon/2}\right)
=0.
\end{align}

Since this holds for every $\varepsilon>0$, it follows that
\begin{align}
\forall\,\varepsilon>0:\quad \mathbb{P}\!\left(\limsup_{N\to\infty} A_N^\varepsilon\right)=0,
\end{align}
which is equivalent to $Z_N\to 0$ almost surely, i.e.,
\begin{align}
\sup_{\phi\in\Phi}|R_N(\phi,\pi)-R(\phi,\pi)| \overset{a.s}{\to} 0,
\end{align}
which completes the proof of the uniform law of large numbers for the risks.
\end{proof}

With the proved uniform law of large numbers, we now discuss the proof of posterior consistency. For all $\epsilon>0$, we define the “bad” set of parameters:
\begin{align}
S_\epsilon(\phi_0,\pi_0) := \{ (\phi,\pi) \in \Phi\times \Delta^K : \|\phi-\phi_0\|_p +\|\pi-\pi_0\|_p \geq \epsilon\},
\end{align}
where $(\phi_0,\pi_0)$ is defined in Assumption~\ref{assumption:identifiability}. Our target is to show that
\begin{align}
p_N(S_\epsilon(\phi_0,\pi_0)) \overset{a.s}{\to} 0,
\end{align}
for all $\epsilon>0$ and $p_N$ is our posterior measure. For $\varepsilon>0$, let
\begin{align}
A_N := \big\{p_N(S_\epsilon(\phi_0,\pi_0)) > \varepsilon\big\}.
\end{align}
By the definition of almost sure convergence in terms of the limit superior of events, it suffices to show
\begin{align}
\mathbb{P}\Big(\limsup_{N\to\infty} A_N \Big) = 0,
\end{align}
where
$
\limsup_{N\to\infty} A_N := \bigcap_{m=1}^\infty \bigcup_{N \ge m} A_N.
$
By the uniform law of large numbers (Lemma~\ref{lemma:uniform_law}), for any $\eta>0$ define the event
\begin{align}
E_N := \Big\{\sup_{(\phi,\pi)\in\Phi\times \Delta^K}|R_N(\phi,\pi)-R(\phi,\pi)| \le \eta \Big\}.
\end{align}
Then $E_N$ occurs eventually almost surely. On $E_N$:

\begin{itemize}
\item For $(\phi,\pi) \in S_\epsilon(\phi_0,\pi_0)$,
\begin{align}
R_N(\phi,\pi) \ge R(\phi,\pi) - \eta \ge R(\phi_0,\pi_0) + \Delta(\epsilon)  - \eta,
\end{align}
where $ \Delta(\epsilon) \;=\; \inf_{\{(\phi,\pi)  \in \Phi\times \Delta^K : \|\phi - \phi_0\|_p +\| \pi-\pi_0\|_p\geq \epsilon\}} \big( R(\phi,\pi) - R(\phi_0,\pi_0) \big) \;>\; 0$ (Assumption~\ref{assumption:identifiability}).
\item For $ (\phi,\pi) \in B_\delta(\phi_0,\pi_0) := \{(\phi,\pi) : \|\phi-\phi_0\|_p+\|\pi-\pi_0\|_p<\delta\}$ (``good" set), choose $\delta$ small so that
\begin{align}
\sup_{ (\phi,\pi) \in B_\delta(\phi_0,\pi_0)} (R(\phi,\pi)-R(\phi_0,\pi_0)) \le \eta,
\end{align}
then
\begin{align}
R_N(\phi,\pi) \le R(\phi,\pi) + \eta \le R(\phi_0,\pi_0) + 2\eta.
\end{align}
\end{itemize}

On $E_N$, the posterior mass ratio of ``bad" set and ``good" set
\begin{align}
    \frac{p_N(S_\epsilon(\phi_0,\pi_0))}{p_N(B_\delta(\phi_0,\pi_0))}
    &= \frac{\int_{S_\epsilon(\phi_0,\pi_0)} \exp(-w N R_N(\phi,\pi)) \, \mathrm{d} p(\phi,\pi)}
           {\int_{B_\delta(\phi_0,\pi_0)} \exp(-w N R_N(\phi,\pi)) \, \mathrm{d} p(\phi,\pi)} \nonumber \\
    &\leq \frac{p(\Phi\times \Delta^K) \, \exp(-w N (R(\phi_0,\pi_0) + \Delta(\epsilon)  - \eta))}{p(B_\delta(\phi_0,\pi_0)) \, \exp(-w N (R(\phi_0,\pi_0) + 2\eta))} \nonumber \\
    &= \frac{1}{p(B_\delta(\phi_0))} \exp(-w N (\Delta(\epsilon)  - 3\eta)),
\end{align}
where $p(\phi,\pi)$ is the prior measure. Therefore, we have
\begin{align}
    p_N(S_\epsilon(\phi_0,\pi_0)) \leq C \exp(-w N (\Delta(\epsilon)  - 3\eta)), 
\end{align}
where $C=\frac{1}{\pi(B_\delta(\phi_0,\pi_0))} >0$ due to Assumption~\ref{assumption:prior}.
Choose $\eta < \Delta(\epsilon) /4$ so that $\Delta(\epsilon) -3\eta > \Delta(\epsilon) /4>0$, then the right-hand side decays exponentially in $N$, implying that for almost every sample path $\omega$ there exists $N_0(\omega)$ such that for all $N \ge N_0(\omega)$,
\begin{align}
p_N(S_\epsilon(\phi_0,\pi_0)) < \varepsilon.
\end{align}

By the definition of $\limsup$ of events,
\begin{align}
\limsup_{N\to\infty} A_N = \bigcap_{m=1}^\infty \bigcup_{N\ge m} \{p_N(S_\epsilon(\phi_0,\pi_0)) > \varepsilon\} = \emptyset \quad \text{a.s.}
\end{align}
Hence,
\begin{align}
\mathbb{P}\Big(\limsup_{N\to\infty} A_N \Big) = 0.
\end{align}
Since the choice of $\varepsilon>0$ is arbitrary, we conclude
\begin{align}
p_N(S_\epsilon(\phi_0,\pi_0)) \xrightarrow{\text{a.s.}} 0.
\end{align}
Hence, the posterior concentrates around $(\phi_0,\pi_0)$ for any $\epsilon>0$, proving Bayesian posterior consistency.

\subsection{Convexity of Sliced Wasserstein Barycenter}
\label{subsec:proof:proposition:convexity_SWB}
We discuss the convexity of SWB in the distribution space in the following proposition.
\begin{proposition}[Convexity of Sliced Wasserstein Barycenter]
\label{proposition:convexity_SWB} Given $F_{1},\ldots,F_k,G \in \PP_p(\mathbb{R}^d)$ and $(\pi_1,\ldots,\pi_K) \in \Delta^K$, the mapping \begin{align}G \to \sum_{k=1}^K \pi_k \SW_p^p (G,F_k)\end{align} is convex in $G$ i.e.,
\begin{align}
    \sum_{k=1}^K \pi_k \SW_p^p (G_t,F_k) \leq t \sum_{k=1}^K \pi_k \SW_p^p (G_0,F_k) +(1-t) \sum_{k=1}^K \pi_k \SW_p^p (G_1,F_k),
\end{align}
for $G_t = t G_0 + (1-t)G_1$ with $t\in [0,1]$  and any pair of measures $G_0,G_1 \in \PP_p(\mathbb{R}^d)$.
\end{proposition}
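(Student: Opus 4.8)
The plan is to exploit the representation of the sliced Wasserstein distance as an average of one-dimensional Wasserstein distances over projections, and to reduce the claimed convexity to the convexity of $W_p^p(\cdot,\nu)$ in its first argument. Since $\SW_p^p(G,F_k)=\EE_{\theta\sim\mathcal{U}(\mathbb{S}^{d-1})}[W_p^p(\theta\sharp G,\theta\sharp F_k)]$, and both the expectation over $\theta$ and the weighted sum $\sum_{k=1}^K \pi_k(\cdot)$ with $\pi_k\ge 0$ are linear and monotone operations, they preserve convexity. Hence it suffices to show, for each fixed $\theta$ and each fixed $k$, that the map $G\mapsto W_p^p(\theta\sharp G,\theta\sharp F_k)$ is convex along the segment $G_t=tG_0+(1-t)G_1$, $t\in[0,1]$.

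The first key observation is that the push-forward through the projection $f_\theta(x)=\theta^\top x$ is linear in the measure: directly from $f_{\theta\sharp}\mu(B)=\mu(f_\theta^{-1}(B))$ one obtains $\theta\sharp G_t = t\,(\theta\sharp G_0)+(1-t)\,(\theta\sharp G_1)$. Thus the problem reduces to the following one-dimensional statement: for fixed measures $\mu_0=\theta\sharp G_0$, $\mu_1=\theta\sharp G_1$ and $\nu=\theta\sharp F_k$ in $\mathcal{P}_p(\mathbb{R})$, we must show $W_p^p(t\mu_0+(1-t)\mu_1,\nu)\le t\,W_p^p(\mu_0,\nu)+(1-t)\,W_p^p(\mu_1,\nu)$.

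I would establish this by a mixture-of-couplings construction. Let $\gamma_0\in\Pi(\mu_0,\nu)$ and $\gamma_1\in\Pi(\mu_1,\nu)$ be optimal transport plans, which exist because the cost $|x-y|^p$ is lower semicontinuous on the Polish space $\mathbb{R}\times\mathbb{R}$ and the marginals have finite $p$-th moment. Set $\gamma_t=t\gamma_0+(1-t)\gamma_1$. Since marginalization is linear, the first marginal of $\gamma_t$ equals $t\mu_0+(1-t)\mu_1$ and its second marginal equals $\nu$, so $\gamma_t\in\Pi(t\mu_0+(1-t)\mu_1,\nu)$ is an admissible coupling. Using it as a (generally sub-optimal) competitor gives $W_p^p(t\mu_0+(1-t)\mu_1,\nu)\le \int |x-y|^p\,\mathrm{d}\gamma_t = t\int|x-y|^p\,\mathrm{d}\gamma_0+(1-t)\int|x-y|^p\,\mathrm{d}\gamma_1 = t\,W_p^p(\mu_0,\nu)+(1-t)\,W_p^p(\mu_1,\nu)$. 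Integrating this pointwise inequality over $\theta\sim\mathcal{U}(\mathbb{S}^{d-1})$ and then taking the $\pi_k$-weighted sum over $k$ yields the stated inequality.

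There is no deep obstacle, as the argument is essentially the standard joint convexity of $W_p^p$ specialized to a fixed second argument. The one point I would flag as the main trap is that one cannot simply invoke the one-dimensional quantile formula $W_p^p(\mu,\nu)=\int_0^1|Q_\mu(s)-Q_\nu(s)|^p\,\mathrm{d}s$ and argue termwise, because the quantile map $Q$ is \emph{not} affine in the measure (the quantile of a mixture is not the mixture of quantiles), so convexity does not follow from linearity in that representation. The mixture-of-couplings construction circumvents exactly this issue, since couplings combine linearly while remaining admissible. The remaining ingredients — linearity of the push-forward, existence of optimal plans, and preservation of convexity under the nonnegative average over $\theta$ and the simplex-weighted sum over $k$ — are routine.
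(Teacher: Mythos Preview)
Your proposal is correct and follows essentially the same approach as the paper: both reduce to the convexity of $W_p^p(\cdot,\nu)$ in its first argument via the mixture-of-couplings construction $\gamma_t=t\gamma_0+(1-t)\gamma_1$, and then pass through the expectation over $\theta$ and the simplex-weighted sum over $k$. Your additional remark that the one-dimensional quantile representation does not directly yield convexity (since $Q$ is not affine in the measure) is a nice clarification that the paper does not make explicit.
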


\begin{proof}
We first show that $W_p^p(G,F_k)$ is convex with respect to $G$. Let $G_t = t G_0 + (1-t)G_1$ for $t\in [0,1]$ and any pair of measures $(G_0,G_1)$, we define $\gamma_0$ as the \textit{optimal} coupling between $G_0$ and $G_k$ and $\gamma_1$ as the \textit{optimal}  coupling between $G_1$ and $G_k$. Let $\gamma_t = t\gamma_0+(1-t)\gamma_1$, we have $G_t$ and $F_k$ be the corresponding two marginals of $\gamma_t$. We have:
    \begin{align}
        W_p^p(G_t,F_k) &= \inf_{\gamma \in \Pi(G_t,G_k)} \EE_{(x,y) \sim \gamma}[\|x-y\|_p^p]  \nonumber \\
        &\leq \EE_{(x,y) \sim \gamma_t}[\|x-y\|_p^p]  \nonumber \\
        &= t \EE_{(x,y)\sim \gamma_0}[\|x-y\|_p^p]  +(1-t)\EE_{(x,y)\sim \gamma_1}[\|x-y\|_p^p] \nonumber \\
        &= tW_p^p(G_0,F_k) +(1-t)W_p^p(G_1,F_k).
    \end{align}

Next, we show that $\SW_p^p(G,F_k)$ is also convex with respect to $G$. Using the proved convexity of the Wasserstein distance, we have:
\begin{align}
    \SW_p^p(G_t,F_k)  &=  \EE_{\theta \sim \mathcal{U}(\mathbb{S}^{d-1}) }[W_p^p(\theta \sharp G_t,\theta \sharp F_k)] \nonumber\\
    &\leq \EE_{\theta \sim \mathcal{U}(\mathbb{S}^{d-1}) }[W_p^p(\theta \sharp tG_0,\theta \sharp F_k)+[W_p^p(\theta \sharp (1-t)G_1,\theta \sharp F_k)]  \nonumber \\
    &=t\SW_p^p(F_0,G_k) +(1-t)\SW_p^p(F_1,G_k).
\end{align}
Since $\sum_{k=1}^K \pi_k \SW_p^p (G,F_k)$ is a positive weighted sum of $\SW_p^p (G,F_k)$, it is also convex with respect to $G$.
\end{proof}

\section{Additional Results}
\label{sec:additional_results}
As mentioned in the main text, we show the observed responses and the posterior means of the KDEs of the fitted DDR and MDDR distributions for NK cells and Monocytes in Figure~\ref{fig:Cell_appendix}. From these figures, MDDR again ledas to better fits than DDR for both training and testing donors. 

\begin{figure}[!t]
\begin{center}
    \begin{tabular}{cccc}
  \widgraph{0.23\textwidth}{fig/Cell/NK_Cells_In_sample_1.pdf} 
&
\widgraph{0.23\textwidth}{fig/Cell/NK_Cells_In_sample_2.pdf} 
&
\widgraph{0.23\textwidth}{fig/Cell/NK_Cells_Out_sample_1.pdf} 
&
\widgraph{0.23\textwidth}{fig/Cell/NK_Cells_Out_sample_2.pdf}
\\
  \widgraph{0.23\textwidth}{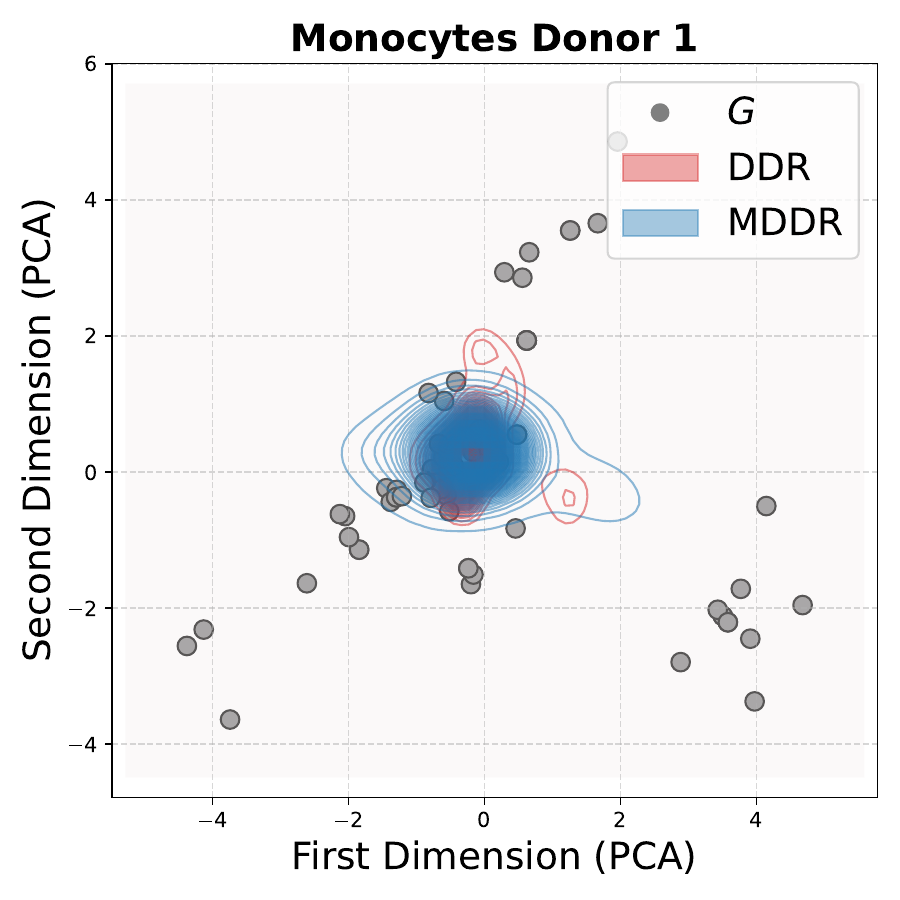} 
&
\widgraph{0.23\textwidth}{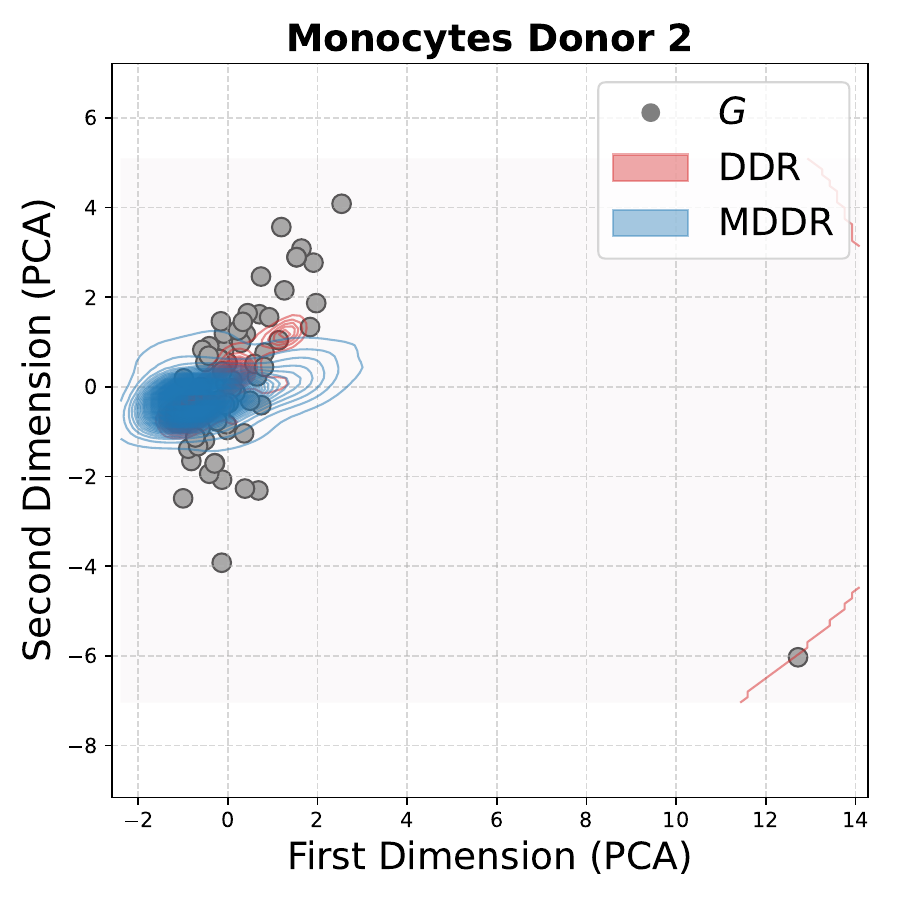} 
&
\widgraph{0.23\textwidth}{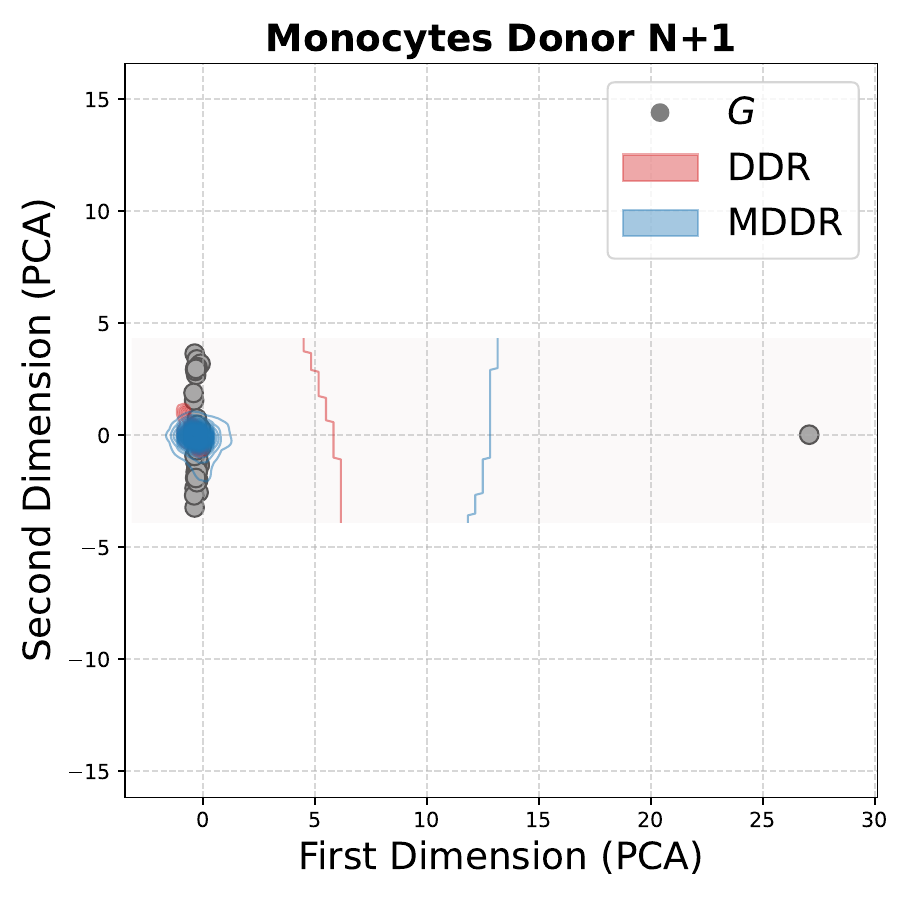} 
&
\widgraph{0.23\textwidth}{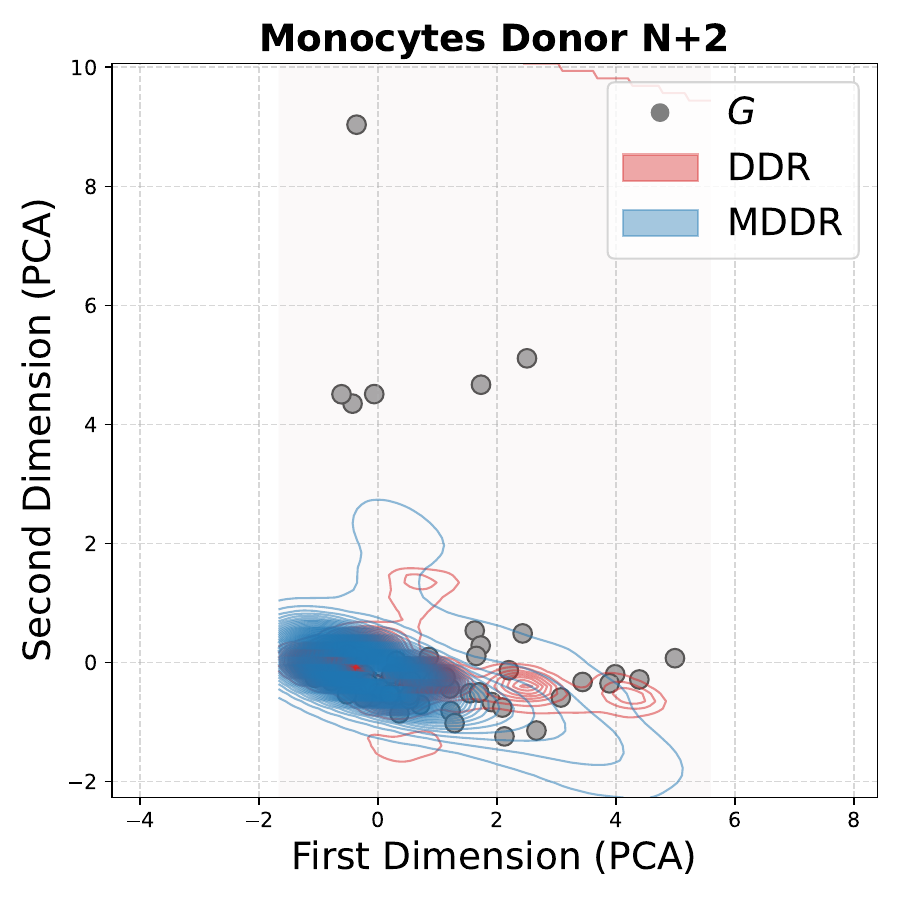}
  \end{tabular}
  \end{center}
  \vspace{-0.2 in}
  \caption{
    \footnotesize{The first two columns show two random in-sample donors and the last two columns show two random out-sample donors. The first row presents results for NK Cells including the observed responses, and the posterior means of the KDEs of fitted distributions of  DDR and MDDR. The second  row presents similar results for Monocytes.
}
} 
  \label{fig:Cell_appendix}
\end{figure}

\end{document}